\documentclass[a4paper]{article} 

\pdfoutput=1

 \usepackage{amsmath,amsfonts,amssymb}

\newcommand{\sign}     {\mbox{sign}}

\newcommand{\const}{\mbox{\rm const.}}

\newcommand{\inv}{^{-1}}

\newcommand{\cov}{\mbox{\rm cov}}

\newcommand{\diag}{\mbox{\rm Dg}}

\newcommand{\Rset}{{\mathbb R}}

\newcommand{\e}   {\mbox{\rm e}}

\newcommand{\Hmat}{{\bf H}}

\newcommand{\trace}{\mbox{\rm Tr}}

\newcommand{\trans}{^{\top}}
\newcommand{\half}{\frac{1}{2}}

\newtheorem{theorem}{Theorem}[section]

\newtheorem{proof}[theorem]{Proof}

\newtheorem{lemma}{Lemma}
\newtheorem{proposition}{Proposition}

\DeclareMathAlphabet{\mathpzc}{OT1}{pzc}{m}{it}
\DeclareMathAlphabet{\mathcalligra}{T1}{calligra}{m}{n}

\newcommand{\Nmat}{{\bf N}}

\newcommand{\kcat}     {k^{\rm cat}}
\newcommand{\kcatplus} {k^{+}_{\rm cat}}
\newcommand{\kcatminus}{k^{-}_{\rm cat}}
\newcommand{\km}       {K_{\rm M}}
\newcommand{\keq}      {K_{\rm eq}}

\newcommand{\kcatplusl}{k_{{\rm cat,} l}^{+}}

\newcommand{\kmS}  {K_{\rm S}}
\newcommand{\kmP}  {K_{\rm P}}
\newcommand{\kmi}  {K_{i}}
\newcommand{\kmj}  {K_{j}}
\newcommand{\kmli} {K_{li}}


\usepackage{setspace} 

\usepackage{enumitem}
\usepackage{xr}
\usepackage{color}
\usepackage{amsmath,amsfonts,amssymb}
\usepackage{epsfig}
\usepackage{url}
\usepackage{graphicx}
\usepackage{colortbl}
\definecolor{grey}{rgb}{0.9, 0.9, 0.9}

\newcommand{\coout}[1]{}
\newcommand{\co}   [1]{}
\newcommand{\todo} [1]{}
\newcommand{\wolf} [1]{}

\newcommand{\myparagraph}[1]{\vspace{-3mm}\paragraph{#1}}


\setlength{\parindent}     {0cm}
\setlength{\parskip}       {0.2cm}
\setlength{\headsep}       {-2cm}

\setlength{\textwidth}     {16.5cm}
\setlength{\textheight}    {25cm}
\setlength{\oddsidemargin} {-0.3cm}
\setlength{\evensidemargin}{-0.3cm}






\newcommand{\Ntot} {{{\bf N}_{\rm tot}}}

\newcommand{\ratelaw}{r}

\newcommand{\hminus} {h}

\newcommand{\enzymemetcost}{q^{\rm enz}}

\newcommand{\fluxcost}  {a}
\newcommand{\metbene}   {b^{\rm (c)}}











\newcommand{\cv}{{\bf c}}
\newcommand{\vv}{{\bf v}}
\newcommand{\yv}{{\bf y}}

\newcommand{\uv}{{\bf u}}

\newcommand{\kv}{{\bf k}}

\newcommand{\av}{{\bf a}}

\newcommand{\xiv}{{\boldsymbol \xi}}
\newcommand{\muv}{{\boldsymbol \mu}}

\newcommand{\pv}{{\bf p}}

\newcommand{\jv}{{\bf j}}

\newcommand{\sv}{{\bf s}}
\newcommand{\nv}{{\bf n}}


%

\newcommand{\Esc}     {\bf {\mathcal E}}

\newcommand{\Cmat}  {{\bf C}}


\onehalfspacing

\usepackage{hyperref}
\hypersetup{colorlinks = true, allcolors = {blue}}

\definecolor{lightyellow}{rgb}{1,1,1}
\definecolor{orange}{rgb}{1,0.75,0.3}
\definecolor{brown}{rgb}{0.87,0.6,0.23}
\definecolor{pink}{rgb}{1,0.7,0.7} 
\definecolor{purple}{rgb}{0.8,0.7,1}
\definecolor{lightblue}{rgb}{0.65,0.7,1}
\definecolor{verylightblue}{rgb}{0.92,0.93,1}
\definecolor{darkcyan}{rgb}{0.,0.6,0.9}
\definecolor{darkblue}{rgb}{0.1,0,0.7}
\definecolor{darkmagenta}{rgb}{0.6,0,0.4}
\definecolor{c1}{rgb}{0,0,0.8}
\definecolor{c2}{rgb}{0.25,0.,0.6}
\definecolor{c3}{rgb}{0.6,0.,0.3} 
\definecolor{c4}{rgb}{1,0,0}
\definecolor{c5}{rgb}{.5,0.2,0}
\definecolor{lightblack}{rgb}{.5,0.1,0.2}

\renewcommand{\co}[1]   {}
\renewcommand{\coout}[1]{}
\renewcommand{\wolf}[1] {}
\renewcommand{\todo}[1] {}
\newcommand{\elad}[1]   {}
\newcommand{\arrenalt}[1]  {}
\newcommand{\avialt}[1]    {}
\newcommand{\arren}[1]  {}
\newcommand{\avi}   [1]  {}
\newcommand{\ron}[1]    {}
\newcommand{\dan}[1]    {}

\renewcommand{\myparagraph}[1]{}

\renewcommand{\Esc}{{\mathcal E}}
\renewcommand{\kcat}{k_{\rm cat}}

\renewcommand{\kcatplus}{k_{\rm cat}^+}
\renewcommand{\kcatplusl}{k_{{\rm cat},l}^+}
\renewcommand{\kcatminus}{k_{\rm cat}^-}

\newcommand{\Stol}{S_{\rm tol}}
\newcommand{\hessian}{\Hmat_\enzymemetcost}
\newcommand{\rrs}{\fluxcost_{\rm v}}
\newcommand{\rrspw}{\fluxcost_{\rm pw}}
\newcommand{\rrl}{\fluxcost_{v_l}}
\newcommand{\rrlmin}{\rrl^{\rm cat}}
\newcommand{\rrpw}  {\rrs^{\rm pw}}

\newcommand{\enzyme}{E}
\newcommand{\el}{\enzyme_{l}}
\newcommand{\enzymev}{{\boldsymbol \enzyme}}

\newcommand{\metabolitepolytope}{{\mathcal P}}

\newcommand{\vPW}{v_{\rm pw}}
\newcommand{\activity}{A}

\newcommand{\hel}   {\hminus_{E_l}}
\newcommand{\he}    {\hminus_{\rm E}}
\renewcommand{\enzymemetcost} {q}
\newcommand{\enzymemetcostl}  {\enzymemetcost_l}
\newcommand{\enzymemetcostmax}{\enzymemetcost^{\rm max}}
\newcommand{\concS}{s}
\newcommand{\concP}{p}
\newcommand{\vscaled}{v'}

\newcommand{\Hm}{{\bf H}}
\newcommand{\mytextonehalf}{{\frac{1}{2}}}

\newtheorem{corollary}[theorem]{Corollary}
\newcommand{\qedhere}{$\square$}


\begin{document}

\title{The protein cost of metabolic fluxes: \\ prediction from 
enzymatic rate laws and cost minimization}
\date{} 

\author{Elad Noor$^1$, Avi Flamholz$^2$, Arren Bar-Even$^3$, Dan
  Davidi$^4$, Ron Milo$^4$, Wolfram Liebermeister$^5$}

\maketitle

{\vspace{-5mm} \footnotesize
 $^1$Institute of Molecular Systems Biology,
Eidgen\"ossische Technische Hochschule Z\"urich, Switzerland,
$^2$Department of Molecular and Cellular Biology, University of
California, Berkely, California, United States of America, $^3$Max
Planck Institute for Molecular Plant Physiology, Golm, Germany,
$^4$Department of Plant Sciences, The Weizmann Institute of Science,
Rehovot, Israel, $^5$Institute of Biochemistry, Charit\'e --
Universit\"atsmedizin Berlin, Germany}

\begin{abstract}

Bacterial growth depends crucially on metabolic fluxes, which are
limited by the cell's capacity to maintain metabolic enzymes.  The
necessary enzyme amount per unit flux is a major determinant of
metabolic strategies both in evolution and bioengineering. It depends
on enzyme parameters (such as $\kcat$ and $K_M$ constants), but also
on metabolite concentrations.  Moreover, similar amounts of different
enzymes might incur different costs for the cell, depending on
enzyme-specific properties such as protein size and half-life.  Here,
we developed enzyme cost minimization (ECM), a scalable method for
computing enzyme amounts that support a given metabolic flux at a
minimal protein cost.  The complex interplay of enzyme and metabolite
concentrations, e.g. through thermodynamic driving forces and enzyme
saturation, would make it hard to solve this optimization problem
directly. By treating enzyme cost as a function of
metabolite levels, we formulated ECM as a numerically tractable,
convex optimization problem.  Its tiered approach allows for building
models at different levels of detail, depending on the amount of
available data.  Validating our method with measured metabolite and
protein levels in \emph{E. coli} central metabolism, we found typical
prediction fold errors of 3.8 and 2.7, respectively, for the two kinds
of data. ECM can be used to  predict enzyme levels and protein cost in
natural and engineered pathways, establishes a direct connection between
protein cost and thermodynamics, and provides a physically plausible
and computationally tractable way to include enzyme kinetics into
constraint-based metabolic models, where kinetics have usually been
ignored or oversimplified.
\end{abstract}

\textbf{Keywords:} Metabolic flux; Enzyme cost; Enzyme kinetics; Kinetic model;
Convex optimization; Central carbon metabolism.

\section{Introduction}

\myparagraph{\ \\Fluxes and enzyme levels as a research question} The
biochemical world is remarkably diverse, and this is only the tip of
the iceberg as new pathways and chemicals are still discovered
routinely. Even for model organisms, such as \textit{E. coli}, the
exhaustive mapping of their metabolic network is (almost) complete
only on the stoichiometric level, but far from perfect when it comes
to our understanding of metabolic fluxes, how they are dynamically
realized, and how they support cell fitness \cite{szzh:12}.
Furthermore, the rational designing of novel and efficient metabolic
pathways is still a big challenge and metabolic engineering projects
require considerable efforts even for relatively simple metabolic
tasks.  Among the different possible criteria \cite{bnlm:10}, one key
to understanding the choices of metabolic routes, both in naturally
evolved and engineered organisms, may be enzyme cost.  Quite often,
cells use metabolic pathways in ways that seem irrational, such as in
the case of aerobic fermentation (known as the Crabtree effect in
yeast or the Warburg effect in cancer cells \cite{wapn:24}).  However,
apparently yield-inefficient fluxes can sometimes be explained by an
economic use of enzyme resources \cite{fnbl:13,bhoz:15}.  Pathway
structures that require too much enzyme per unit flux will be
outcompeted during evolution and will not be efficient for use in
biotechnological applications.  Thus, a quantitative analysis of
resource investment in enzyme production, predicting the amount of
enzyme needed to support a given flux, as well as the optimal enzyme
levels along pathways, would be valuable steps towards a rational
design of metabolic pathways.

To understand why specific enzymes or pathways occupy larger or
smaller areas of the proteome \cite{lnfd:14}, we could proceed in two
steps, determining first the metabolic fluxes and then enzyme levels
needed to realise these fluxes. Metabolic fluxes can be measured based
on isotope-labeled tracer experiments in combination with
computational modeling.  Methods for flux prediction \emph{ab initio}
rely on mechanistic aspects (chemical mass balances and kinetics) and
economic aspects (cost and benefit of pathway fluxes) and combine them
in different ways.  Constraint-based methods like Flux Balance
Analysis (FBA) determine fluxes by requiring steady states -- i.e.,
fluxes must be such that internal metabolite levels remain constant in
time -- and assuming that natural selection maximizes some benefit
function (e.g., maximal yield of biomass). Different optimality
criteria for fluxes can be combined in a multi-objective optimization
\cite{scks:07,szzh:12}.  In some cases, the second law of
thermodynamics is used to put further constraints on fluxes or
metabolite levels \cite{belq:02,bbcq:04,faqb:05,fmsy:11}. Some
extensions of FBA \cite{hjbh:06,hebh:07,hohh:07} use metabolite
log-concentrations as extra variables and constrain fluxes to flow
only in the direction of thermodynamic driving forces, i.e., towards
lower chemical potentials. This links flux directions to reactant
concentrations, and by including bounds on metabolite levels, flux
directions become restricted.  These links between fluxes and
metabolite concentrations hold independently of specific reaction
kinetics.  The relationship between fluxes and metabolite
concentrations can be used also in opposite direction -- i.e.  given
all flux directions, certain metabolite profiles can be excluded
\cite{hohh:07}. The set of feasible metabolite profiles can be
depicted as a polytope in the space of metabolites'
log-concentrations.  To further narrow down the metabolite
concentration profiles, the Max-min Driving Force (MDF) method
\cite{nbfr:14} chooses profiles that ensure sufficient driving forces,
thus keeping reactions distant from chemical equilibrium.

\begin{figure}[t!]
  \begin{center}
\parbox{15.5cm}{
(a) \hspace{4cm} (b)\\
\parbox{17cm}{
    \includegraphics[width=4.5cm]{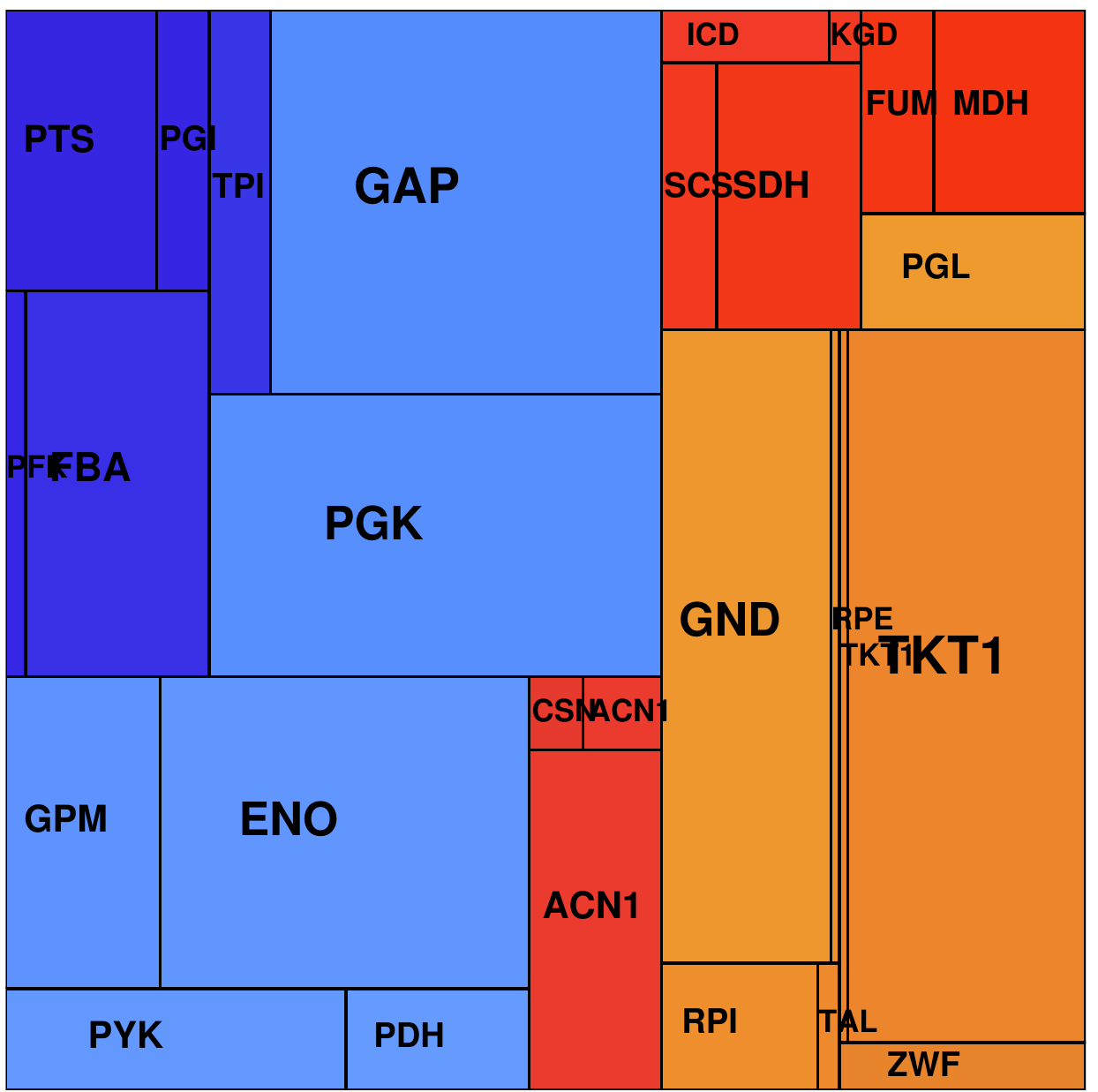}
  \hspace{3mm}
    \includegraphics[width=10.5cm]{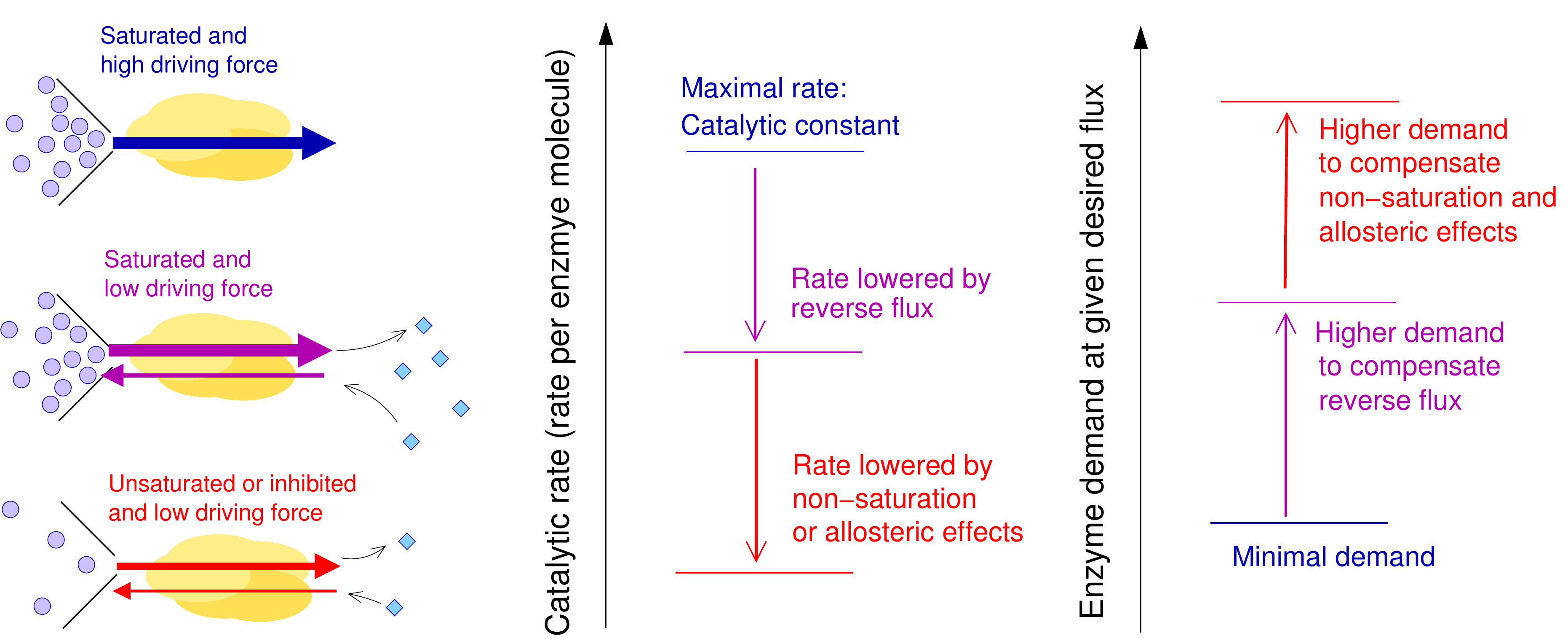}}
}  \end{center}
  \caption{Enzyme cost in metabolism.  (a) Measured enzyme levels in
    \emph{E.~coli} central metabolism (molecule counts displayed as
    rectangle areas).  Colors correspond to the network graphics in
    Figure \ref{fig:EcoliPredictions}. To predict such protein levels,
    and to explain the differences between enzymes, we start from
    known metabolic fluxes and assume that these fluxes are realised
    by a cost-optimal distribution of enzyme levels.  (b)
    Enzyme-specific flux depends on a number of physical factors.
    Under ideal conditions, an enzyme molecule catalyses its reaction
    at a maximal rate given by the enzyme's forward catalytic constant
    (top left). The rate is reduced by microscopic reverse fluxes
    (center left) and by incomplete saturation with substrate (causing
    waiting times between reaction events) or by allosteric inhibition
    or incomplete activation (bottom left).  With lower catalytic
    rates (center), realizing the same metabolic flux requires larger
    amounts of enzyme (right).}
  \label{fig:efficiencies}
\end{figure}

\myparagraph{Enzyme cost} Typically, constraint-based methods do well
in defining a space of feasible fluxes and assessing their benefits,
but much less in predicting the necessary enzyme levels and the cost
of making and maintaining the enzymes.  Thus, flux prediction and in
designing efficient pathways in bioengineering, how can we estimate
the protein demand of a reaction or pathway, needed to sustain a
desired flux? It is often assumed that the flux in a
reaction is proportional to the enzyme level. FBA methods use this
assumption to translate enzyme expression, as a proxy for protein
burden, into flux bounds or linear flux cost functions
\cite{schp:08}. For practical reasons (computational tractability and
lack of detailed knowledge), flux cost are often represented by the
sum of absolute fluxes \cite{holz:04,lhcl:10} To obtain better proxies
of protein demand or related cellular burdens, fluxes have been
weighted by ``flux burdens'' that account for different catalytic
constants $\kcat$ \cite{bvem:07,bnlm:10}, protein size and lifetime
\cite{horh:11}, or equilibrium constants \cite{holz:04}.
\myparagraph{Factors affecting enzyme demand} In reality, however,
enzyme demand does not only depend on fluxes, but also on metabolite
levels, which in turn are determined by the non-linear kinetics of all
enzymes. Therefore, it is not only the choice of numerical cost
weights, but the very relation between enzyme amounts and fluxes that
needs to be clarified. \arrenalt{nice logic}

For a simple estimate, we can assume that each enzyme molecule works
at its maximal rate, the catalytic constant $\kcat$. In this case,
enzyme demand is given by the flux divided by the catalytic constant
\cite{bvem:07,bnlm:10}. To translate enzyme demand into cost, the
different sizes or effective lifetimes of enzymes can be considered
\cite{horh:11}. The notion of Pathway Specific Activity \cite{bnlm:10}
applies this principle to measure the efficiency of entire pathways
(while assuming that enzyme levels are optimally distributed), and
provides a direct way to compare between alternative pathways.
However, by assuming that enzymes operate at their maximal capacity,
we underestimate the true enzyme demand (see Figure
\ref{fig:efficiencies}).  Enzymes typically do not operate at full
capacity. This is due to backward fluxes, incomplete substrate
saturation, allosteric regulation, and regulatory post-translational
modifications. Below, we will refer to allosteric regulation only, but
other types of posttranslational regulation, e.g., by phosphorylation,
could be treated similarly. The relative backward fluxes depend on the
ratio between product and substrate concentrations, called mass-action
ratio. Whenever the mass-action ratio deviates from its equilibrium
value, called equilibrium constant, this deviation can be
conceptualized as a thermodynamic driving force. A driving force
determines the relative backward flux and thus affects reaction
kinetics and enzymatic efficiency \cite{liuk:10,nflb:13}.  With
smaller forces, the relative backward flux becomes larger, enzyme
usage becomes less efficient, and enzyme demand increases
\cite{tnah:13,fnbl:13} -- a situation that, in models, can be avoided
by applying the MDF method.  In fact, a cost increase due to backward
fluxes can be included in the principle of minimal fluxes in FBA
\cite{holz:04}. However, metabolites do not just affect the
thermodynamic forces, as acknowledged in thermodynamic FBA, but affect
kinetics as reactants and allosteric effectors.  While the relative
backward fluxes depend on thermodynamic forces, the forward flux
depends on the availability of substrate molecules. At sub-saturating
substrate levels, enzyme molecules spend some time waiting for
substrate molecules, thus reducing their average catalyzed
flux. Likewise, enzyme saturation with product can reduce the fraction
of enzyme molecules available for catalysis.

\myparagraph{In enzyme cost, enzyme and metabolite profiles must be
  considered together} Thus, converting metabolic fluxes into enzyme
demand can be difficult because enzymes may not realize their maximal
capacity. Since the decrease in enzyme efficiency depends mostly on
metabolite concentrations, enzyme and metabolite profiles must be
considered together. However, this quickly becomes a cyclic inference
problem because steady-state metabolite levels depend again on enzyme
profiles.  Since many metabolites (e.g., co-factors such as ATP)
participate in several pathways, enzyme demands may be coupled across
the entire metabolic network. Moreover, there may be many possible
enzyme and metabolite profiles that realize the same flux
distribution.  
To determine a single solution, one can make the assumption that the
most reasonable enzyme profile for realizing a given flux is the one
with the minimal associated cost.  This assumption may be justified if
we focus on biological systems shaped by evolution, or on engineered
pathways that should be efficient. A direct optimisation of enzyme
levels can be difficult, but there is a tractable approach in which
metabolite levels are treated as free variables, which determine the
enzyme levels, and therefore enzyme cost.  This approach, together with a minimization of
metabolite concentrations \cite{sche:91}, has been previously applied
to predict enzyme and metabolite levels in metabolic systems
\cite{tnah:13} and to compare structural variants of glycolysis by
their enzyme cost of ATP production \cite{fnbl:13}.

However, to make such optimization schemes generally applicable, some
open problems need to be addressed.  First, our knowledge of the kinetic rate
laws and parameters contains large gaps for the vast majority of
enzymes \cite{bnsl:11}, and combining rate constants from different
sources may lead to inconsistent models
\cite{eune:10,smal:13}. 
Second, the optimization problem may be numerically hard for large
networks and realistic rate laws.  To turn enzyme cost minimization
 into a generally applicable method, we address a number
of questions: (i) When setting up models for enzyme cost prediction,
how can we deal with missing, uncertain, or conflicting data on rate
constants?  Are there approximations, for instance based on
thermodynamics, that yield good predictions with fewer input
parameters?  (ii) How do factors such as $\kcat$ value, driving force,
or rate law affect enzyme demand, and how do they shape the optimal
metabolic state?  (iii) How can enzyme optimisation be formulated as a
numerically tractable optimality problem?  Existing approaches for
flux and enzyme prediction have focused on different aspects
(stationary state, energetics, kinetics, enzyme or flux costs,
molecular crowding).  The new approach, which uses kinetics to
translate fluxes into enzyme demand, shows how these approaches are
logically related, and how heuristic assumptions by other methods,
e.g.~an avoidance of small driving forces, follow from enzyme economy
as a general principle.  We show that enzyme cost minimization is closely related to cost-benefit approaches, which
treat cell fitness as a function of enzyme levels
\cite{reic:83,klhh:02,lksh:04,mbrt:09,zabl:13,laht:13}.  Some general
results of these approaches, e.g., relationships between enzyme costs
and metabolic control coefficients, can be recovered.

\section{Results}

\subsection{Enzyme cost landscape of a metabolic pathway}

\begin{figure}[t!]
  \begin{center}
    \begin{tabular}{l}
      (a)\\[1mm]
      \includegraphics[width=2.5cm]{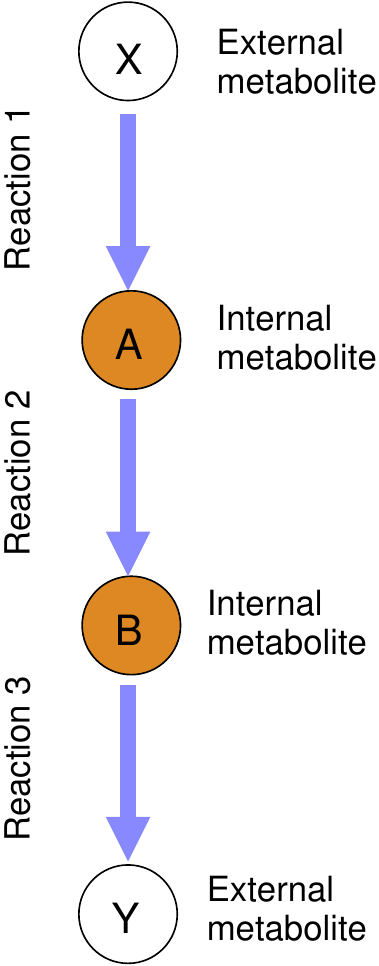}
    \end{tabular}
    \begin{tabular}{llll}
      (b)& (c)& (d) & \\[1mm]
      \includegraphics[height=3cm]{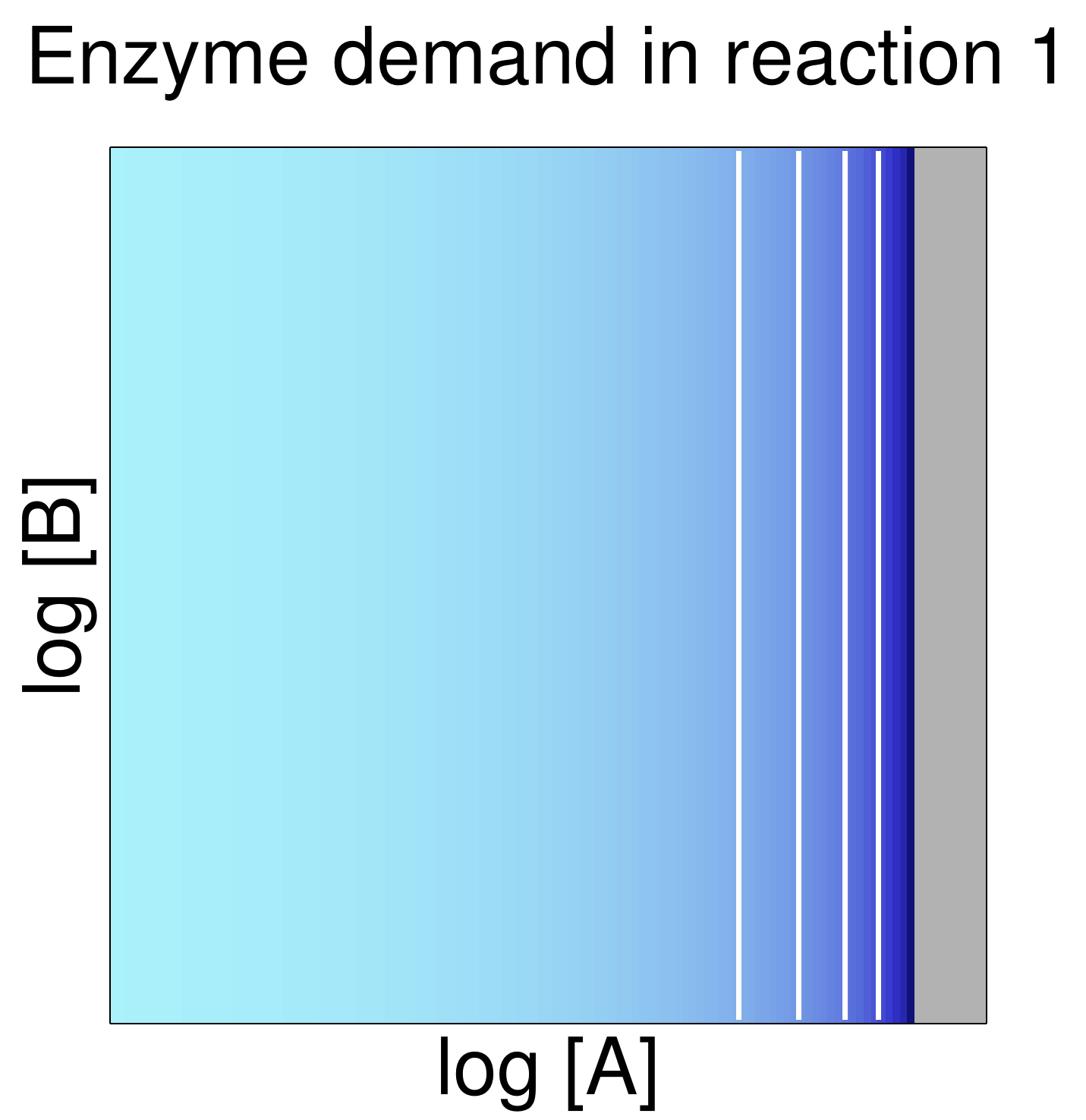} & 
      \includegraphics[height=3cm]{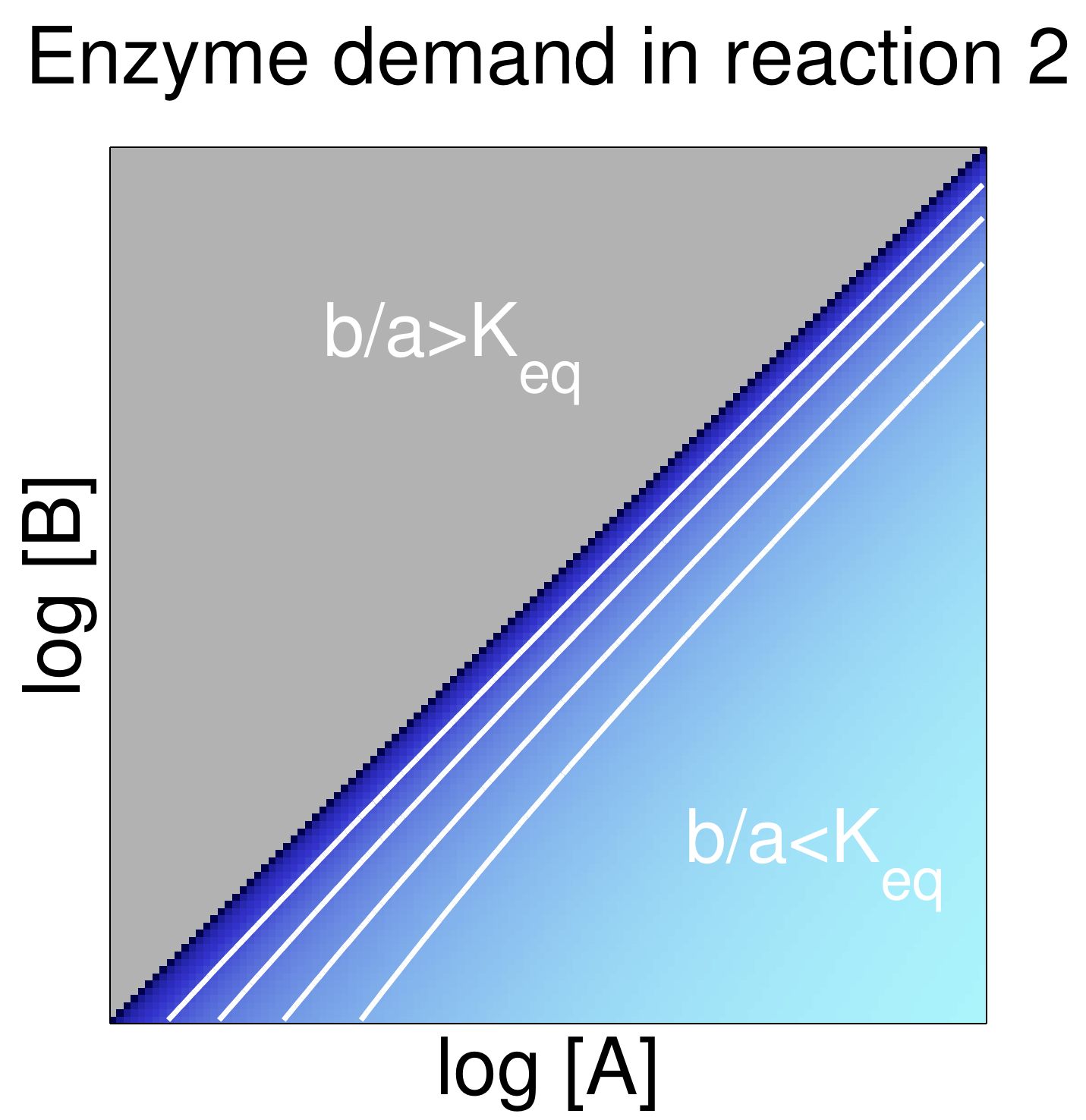}&
      \includegraphics[height=3cm]{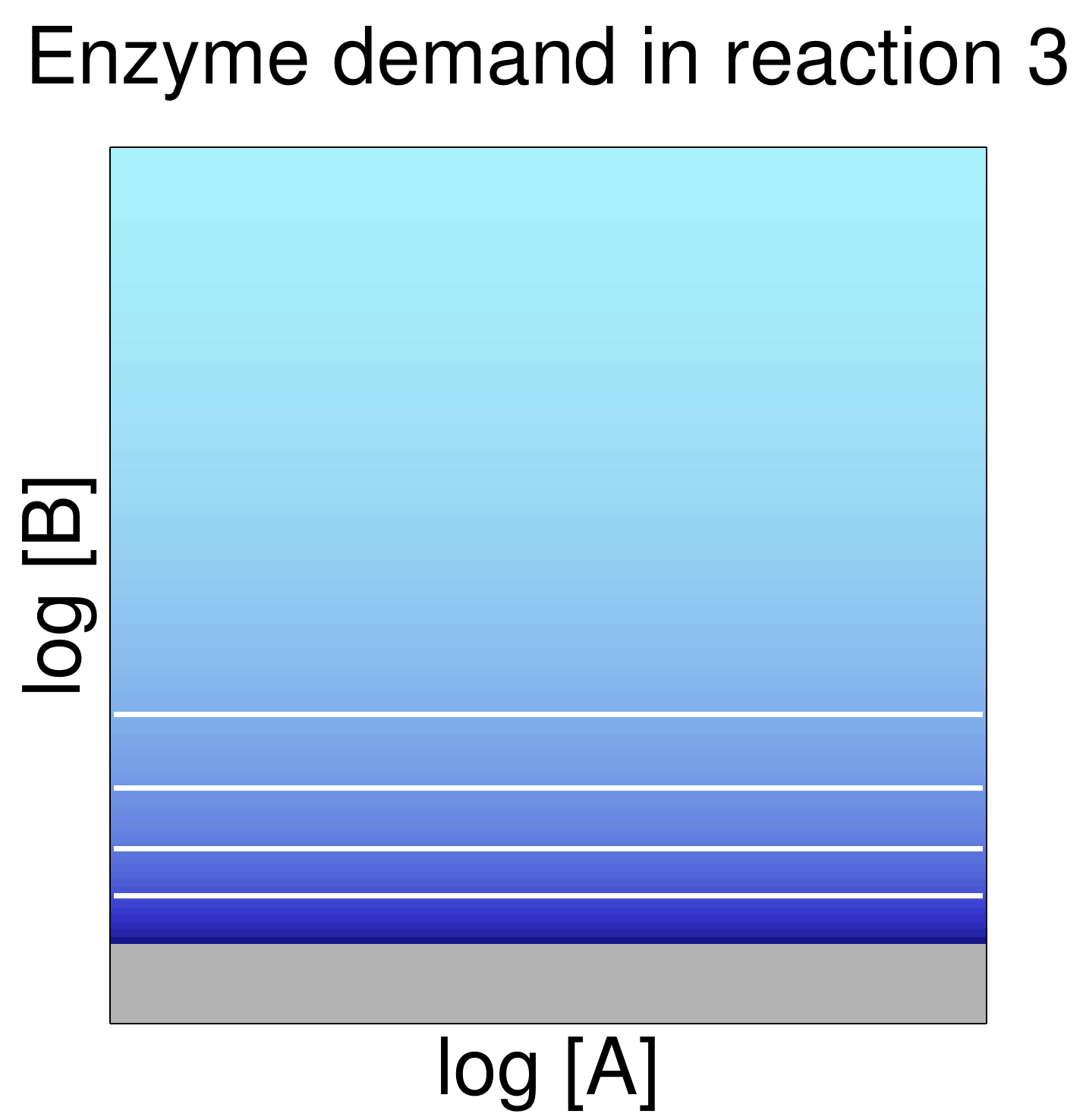}\\
(e) & (f) & (g) & \\[1mm]
      \includegraphics[height=3cm]{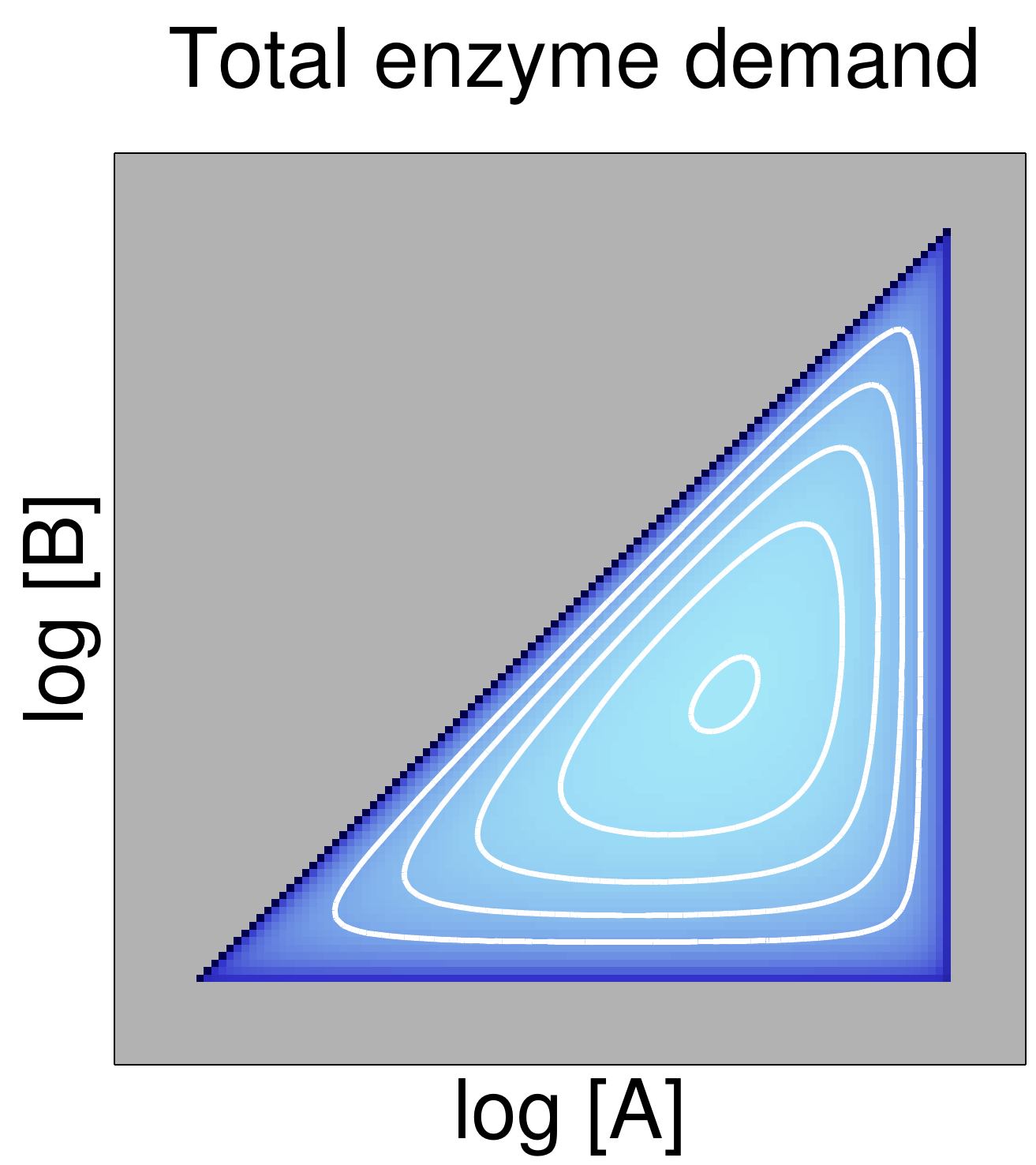}&
      \includegraphics[height=3cm]{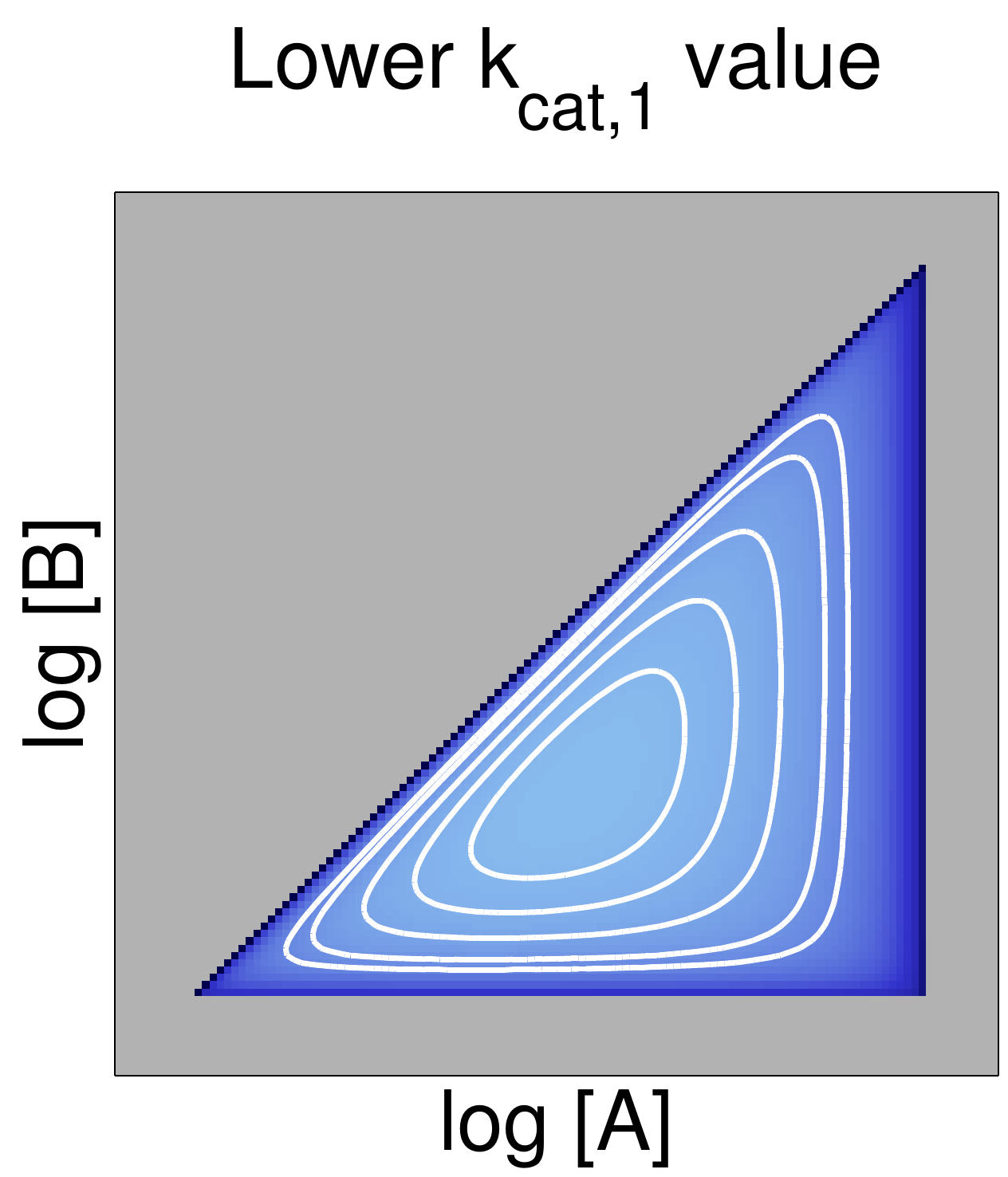} & 
      \includegraphics[height=3cm]{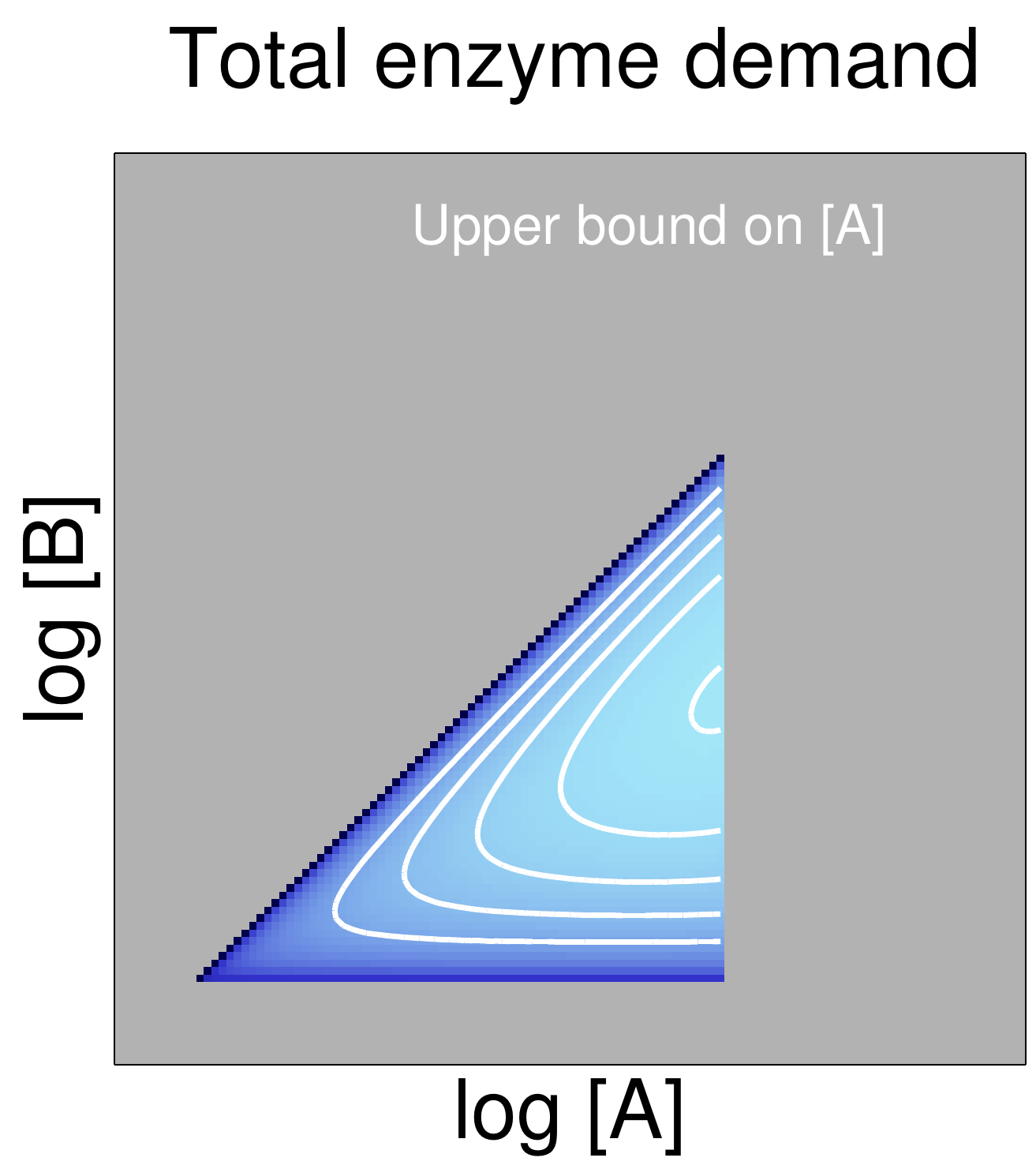} & 
      \includegraphics[height=2.8cm]{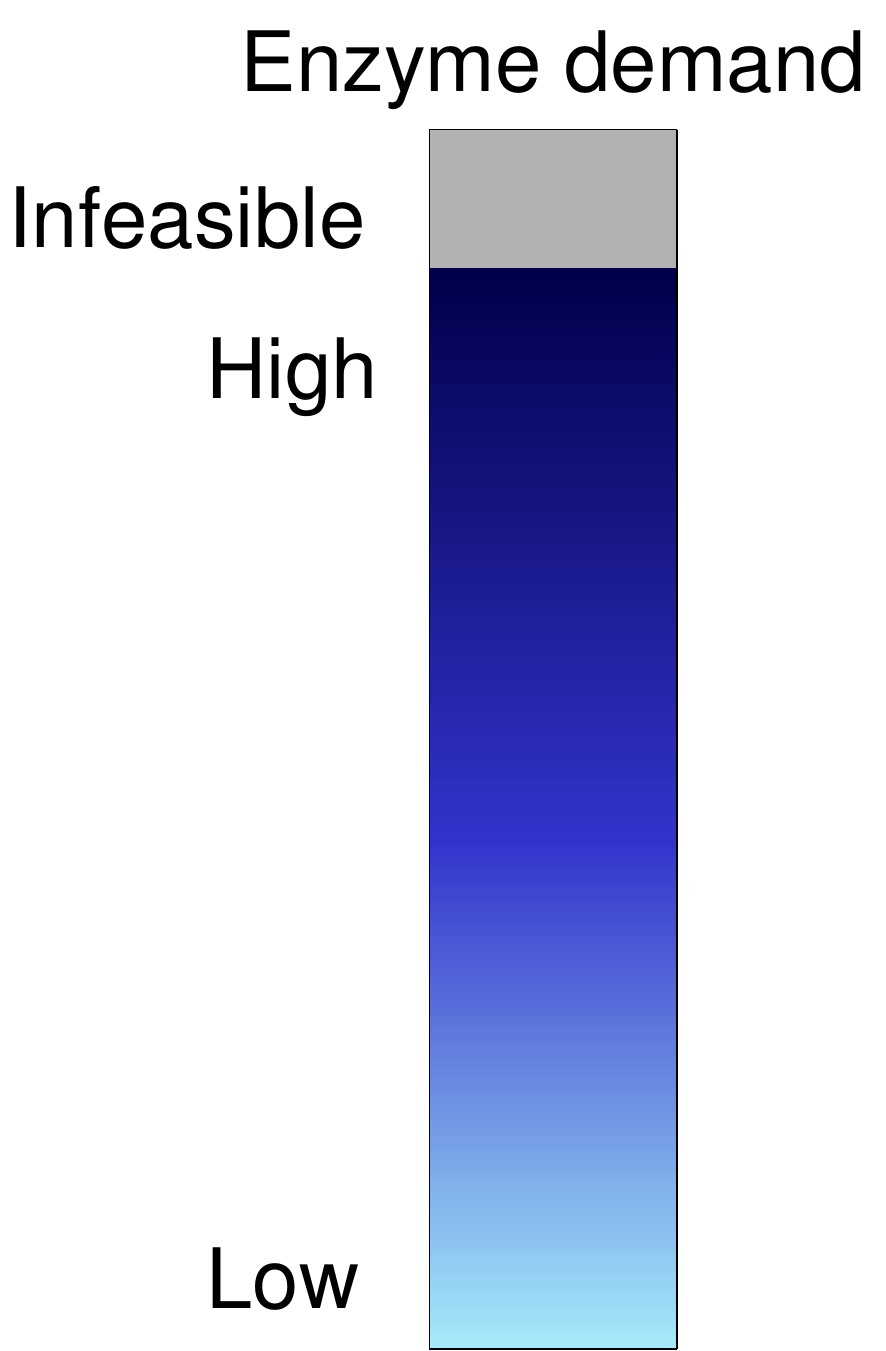}
 \end{tabular}
 \caption{Enzyme demand in a metabolic pathway.  (a) Pathway with
   reversible Michaelis-Menten kinetics (equilibrium constants,
   catalytic constants, and $\km$ values are set to values of 1, [A]
   and [B] denote the variable concentrations of intermediates A and B
   in mM).  The external metabolite levels [X] and [Y] are fixed.
   Plots (b)-(d) show the enzyme demand of reactions 1, 2, and 3 at
   given flux $v=1$ according to Eq.~(\ref{eq:mmratelawdemand}).  Grey
   regions represent infeasible metabolite profiles.  At the edges of
   the feasible region (where A and B are close to chemical
   equilibrium), the thermodynamic driving force goes to zero.  Since
   small forces must be compensated by high enzyme levels, edges of
   the feasible region are always dark blue.  For example, in reaction
   1 (panel (b)), enzyme demand increases with the level of A (x-axis)
   and goes to infinity as the mass-action ratio $[A]/[X]$ approaches
   the equilibrium constant (where the driving force vanishes). (e)
   Total enzyme demand, obtained by summing all enzyme levels.  The
   metabolite polytope -- the intersection of feasible regions for all
   reactions -- is a triangle, and enzyme demand is a cup-shaped
   function on this triangle.  The minimum point defines the optimal
   metabolite levels and optimal enzyme levels. (f) As the $\kcat$
   value of the first reaction is lowered by a factor of 5, states
   close to the triangle edge of reaction 1 become more expensive and
   the optimum point is shifted away from the edge. (g) The same model
   with a physiological upper bound on the concentration [A]. The
   bound defines a new triangle edge.  Since this edge is not caused
   by thermodynamics, it can contain an optimum point, in which driving forces
   are far from zero and enzyme costs are kept low.}
  \label{fig:fourchain}
  \end{center}
\end{figure}

\myparagraph{\ \\Enzyme cost minimization} Given a
pathway flux profile and a kinetic model of the pathway, one can
predict the enzyme demand by assuming that cells minimize the enzyme
cost in that pathway. A reaction rate $v = \enzyme \cdot
\ratelaw(\cv)$ depends on enzyme level $\enzyme$ and metabolite
concentrations $c_i$ through the enzymatic rate law, $\ratelaw(\cv)$.
If the metabolite levels were known, we could directly compute enzyme
demands $\enzyme = v / \ratelaw(\cv)$ from fluxes, and similarly
calculate the flux-specific enzyme demand $\enzyme/v =
1/\ratelaw(\cv)$.  However, metabolite levels are often unknown and
vary between experimental conditions. Therefore, there can be many
solutions for $\enzyme$ and $\cv$ realizing one flux distribution.  To
select one of them, we employ an optimality principle: we define an
enzyme cost function (for instance, total enzyme mass) and choose the
enzyme profile with the lowest cost while restricting the metabolite
levels to physiological ranges and imposing some thermodynamic
constraints. As we shall see below, the solution is in many cases unique. Let us
demonstrate this procedure with a simple example (Figure
\ref{fig:fourchain} (a)).  In the pathway $X \rightleftharpoons A
\rightleftharpoons B \rightleftharpoons Y$, the external metabolite
levels [X] and [Y] are fixed and given, while the intermediate levels
[A] and [B] need to be found.  As rate laws for all three reactions,
we use reversible Michaelis-Menten (MM) kinetics
\begin{eqnarray}
 \label{eq:mmratelaw}
  v = \enzyme\, \frac{\kcatplus\, \concS/\kmS - \kcatminus \, \concP /\kmP}
{1+ \concS/\kmS+\concP/\kmP}
\end{eqnarray}
with enzyme level $\enzyme$, substrate and product levels $\concS$ and
$\concP$, turnover rates $\kcatplus$ and $\kcatminus$, and Michaelis
constants $\kmS$ and $\kmP$.  In kinetic modeling, steady-state
concentrations would usually be obtained from given enzyme levels and
initial conditions through numerical integration. Here, instead, we
fix a desired pathway flux $v$ and compute the enzyme
demand as a function of metabolite levels:
\begin{eqnarray}
 \label{eq:mmratelawdemand}
  \enzyme(\concS,\concP,v) = v\, \frac
{1+ \concS/\kmS+\concP/\kmP}
{\kcatplus\, \concS/\kmS - \kcatminus \, \concP /\kmP}.
\end{eqnarray}
Figure \ref{fig:fourchain} shows how the enzyme demand in each
reaction depends on the logarithmic reactant concentrations.  To
obtain a positive flux, substrate levels $\concS$ and product levels
$\concP$ must be restricted: for instance, to allow for a positive
flux in reaction 2, the rate law numerator $\kcatplus\, [A]/\kmS -
\kcatminus \, [B] /\kmP$ must be positive. This implies that $[B]/[A]
< \keq$ where the reaction's equilibrium constant $\keq$ is determined
by the Haldane relationship, $\keq = (\kcatplus / \kcatminus) \cdot
(\kmP / \kmS)$.  With all model parameters set to 1, we obtain the
constraint $[B]/[A] < 1$, i.e., $\ln [B]-\ln [A]<0$, putting a
straight boundary on the feasible region (Figure \ref{fig:fourchain}
(c)). Close to chemical equilibrium ($[B]/[A] \approx \keq$), the
enzyme demand $\enzyme_2$ approaches infinity. Beyond that ratio
($[B]/[A] > \keq$) no positive flux can be achieved (grey region).
Such a threshold exists for each reaction (see Figure
\ref{fig:fourchain} (b)-(d)).  The remaining feasible metabolite
profiles form a triangle in log-concentration space, which we call
\emph{metabolite polytope} $\metabolitepolytope$ (Figure
\ref{fig:fourchain} (e)), and Eq.~(\ref{eq:mmratelawdemand}) yields
the total enzyme demand $\enzyme_{\rm tot} =
\enzyme_1+\enzyme_2+\enzyme_3$, as a function on the metabolite
polytope.  The demand increases steeply towards the edges and becomes
minimal in the center.  The minimum point marks the optimal metabolite
profile, and via Eq.~(\ref{eq:mmratelawdemand}) we obtain the
resulting optimal enzyme profile.

\myparagraph{Influence of driving forces} The metabolite polytope and
the large enzyme demand at its boundaries follow directly from
thermodynamics.  To see this, we consider the unitless
\emph{thermodynamic driving force} $\Theta = -\Delta_{\rm r} G'/RT$
\cite{beqi:07} derived from the reaction Gibbs free energy
$\Delta_{\rm r} G'$. The thermodynamic force can be written as $\Theta = \ln
\frac{\keq}{[B]/[A]}$, i.e., the driving force is positive whenever
$[B]/[A]$ is smaller than $\keq$, and it vanishes if $[B]/[A] =
\keq$. How is this force related to enzyme cost?  A reaction's net
flux is given by the difference $v=v^{+}-v^{-}$ of forward and
backward fluxes, and the ratio $v^{+}/v^{-}$ depends on the driving
force as $v^{+}/v^{-} = \e^{\Theta}$.  Thus, only a fraction $v/v^{+}
= 1-\e^{-\Theta}$ of the forward flux acts as a net flux, while the
remaining forward flux is cancelled by the backward flux (SI Figure
\ref{fig:fluxfractions}).  Close to chemical equilibrium, where the
mass-action ratio $[B]/[A]$ approaches the equilibrium constant
$\keq$, the driving force goes to zero, the reaction's backward flux
increases, and the flux per enzyme level drops.  This is what happens
at the triangle edges in Figure \ref{fig:fourchain}: a reaction
approaches chemical equilibrium, the driving force $\Theta$ goes to
zero, and large enzyme amounts are needed for compensation.  Exactly
on the edge, the driving force vanishes and no enzyme level, no matter
how large, can support a positive flux.  The quantitative cost depends
on model parameters: for example, by lowering a $\kcat$ value, the
cost for the enzyme increases at the boundary becomes steeper and the
optimum point is shifted away from the boundary (see Figure
\ref{fig:fourchain} (f) and SI Figure \ref{fig:single_metabolite}).

\subsection{Enzyme cost as a function of metabolite profiles}

 The prediction of optimal metabolite and enzyme levels can be
 extended to models with general rate laws and complex network
 structures.  In general, enzyme demand depends not only on driving
 forces and $\kcat$ values, but also on the kinetic rate law, which
 includes $\km$ values and allosteric regulation.  Thus, we need to
 model these factors and approximate them when kinetic information is
 missing.  \myparagraph{Separable rate laws} The rate of a reaction
 depends on enzyme level $\enzyme$, forward catalytic constant
 $\kcatplus$ (i.e. the maximal possible forward rate per unit of
 enzyme, in s$\inv$), driving force (i.e., the ratio of forward and
 backward fluxes), and on kinetic effects such as substrate saturation
 or allosteric regulation. 
 If all active fluxes are positive, reversible rate laws like the
 Michaelis-Menten kinetics in Eq.~(\ref{eq:mmratelaw}) can be
 factorized as \cite{nflb:13}
\begin{eqnarray}
\label{eq:factorised}
  v &=& \enzyme \cdot \kcatplus \cdot \eta^{\rm th} \cdot \eta^{\rm kin}.
\end{eqnarray}
With some rate laws, $\eta^{\rm kin}$ can be further subdivided into
$\eta^{\rm kin} = \eta^{\rm sat} \cdot \eta^{\rm reg}$, where
$\eta^{\rm reg}$ refers to certain types of allosteric regulation (see
example in Box 1). Negative
fluxes, which would complicate our formulae, can be avoided by
orienting the reactions in the direction of fluxes.  The reversible
Michaelis-Menten rate law Eq.~(\ref{eq:mmratelaw}), for instance, can
be written in this separable form \cite{nflb:13}: 

\begin{eqnarray}
\label{eq:factorisedMM1}
 v = \enzyme \,\kcatplus \,\frac{\concS/\kmS\,\left(1-\frac{\kcatminus}{\kcatplus}\frac{\concP /\kmP}{\concS/\kmS}\right)}{1+ \concS/\kmS+\concP/\kmP} = 
\enzyme \,\kcatplus \,
\underbrace{\left(1-\frac{\kcatminus}{\kcatplus}\frac{\concP /\kmP}{\concS/\kmS}\right)}_{\eta^{\rm th}}\,
\underbrace{\frac{\concS/\kmS}{1+ \concS/\kmS+\concP/\kmP}}_{\eta^{\rm kin} },
\end{eqnarray} 
and similar factorizations exist for reactions of any stoichiometry
(see SI \ref{sec:SIratelaws}). The term $\enzyme \cdot \kcatplus$
describes the maximal reaction velocity, which is reduced, depending
on metabolite levels, by condition-specific factors $\eta^{\rm th}$,
$\eta^{\rm sat}$ and $\eta^{\rm reg}$ (see Fig
\ref{fig:efficiencies}b), accounting for backward fluxes, incomplete
substrate saturation, saturation with product, or allosteric
regulation. The thermodynamic factor $\eta^{\rm th}$ can be expressed
in terms of the driving force $\Theta \equiv -\Delta_{\rm r} G'/RT$ by
the general formula $\eta^{\rm th} = 1 - \e^{-\Theta}$, which also
applies to reactions with multiple substrates and products
\cite{nflb:13}. The  factors $\eta^{\rm kin}$  depends
 on the rate law and thus on the enzyme mechanism
considered (see SI \ref{sec:SIratelaws}). 
\myparagraph{Separable enzyme cost functions}
Enzyme demand can be quantified as a concentration (e.g., enzyme
molecules per volume) or mass concentration (where enzyme molecules
are weighted by their molecular weights).  If rate laws, fluxes, and
metabolite levels are known, the enzyme demand of a single reaction
$l$ follows from Eq.~(\ref{eq:factorised}) as
\begin{eqnarray}
\label{eq:factorisedU}
\enzyme_l(\cv,v_l) &=& v_l \cdot \frac{1}{\kcatplusl} \cdot
\frac{1}{\eta^{\rm th}_l(\Theta(\cv))}
 \cdot \frac{1}{\eta_l^{\rm sat}(\cv)} \cdot \frac{1}{\eta_l^{\rm reg}(\cv)}.
\end{eqnarray}
To determine the enzyme demand of an entire pathway, we sum over all
reactions: $\enzyme_{\rm tot} = \sum_l \enzyme_l$.  Based on its
enzyme demands $\enzyme_l$, we can associate each metabolic flux with
an enzyme cost $\enzymemetcost = \sum_l \hel\,\enzyme_l$, describing
the effort of maintaining the enzymes. The burdens $\hel$ of different
enzymes represent, e.g., differences in molecular mass,
post-translation modifications, enzyme maintenance, overhead costs for
ribosomes, as well as effects of misfolding and non-specific
catalysis. The enzyme burdens $\hel$ can be chosen heuristically, for
instance, depending on enzyme sizes, amino acid composition, and
lifetimes (see SI \ref{sec:SIcostFactors}). Setting $\hel=m_l$
(protein mass in Daltons), $\enzymemetcost$ will be in gram protein
per gram cell dry weight. Considering the specific amino acid
composition of enzymes, we can also assign specific costs to the
different amino acids. Alternatively, an empirical cost per protein
molecule can be established by the level of growth impairment that an
artificial induction of protein would cause \cite{deal:05, szad:10}.
Thus, each reaction flux $v_l$ is associated with an enzyme cost
$\enzymemetcostl$, which can be written as a function
$\enzymemetcostl(v_l,\cv) \equiv \hel\,\enzyme_l(\cv, v_l)$ of flux
and metabolite concentrations.  From now on, we refer to log-scale
metabolite concentrations $s_i = \ln c_i$ to obtain simple optimality
problems below.  From the separable rate law
Eq.~(\ref{eq:factorisedU}), we obtain the enzyme cost function
\begin{eqnarray}
\label{eq:TotalEnzymeDemand}
\enzymemetcost(\sv, \vv) &\equiv&  \sum_{l} \hel\,\el(v_{l},\sv) 
= \sum_{l} 
  \hel  \cdot v_{l} \cdot \frac{1}{\kcatplusl} \cdot
  \frac{1}{\eta^{\rm th}_l(\sv)}
  \cdot \frac{1}{\eta_l^{\rm sat}(\sv)}
  \cdot \frac{1}{\eta^{\rm reg}(\sv)}
\end{eqnarray}
for a given pathway flux $\vv$. If the fluxes are fixed and given, our
enzyme cost becomes, at least formally, a function of the metabolite
levels. We call it \emph{enzyme-based metabolic cost} (EMC) to
emphasize this fact.  The cost function is defined on the metabolite
polytope $\metabolitepolytope$, a convex polytope in log-concentration
space containing the feasible metabolite profiles.  Like the triangle
in Figure \ref{fig:fourchain}, the polytope is defined by
physiological and thermodynamic constraints.  It can be bounded by two
types of faces: On ``E-faces'', one reaction is in equilibrium, and
enzyme cost goes to infinity; ``P-faces'' stem from physiological
metabolite bounds. The shape of the cost
function depends on rate laws, rate constants, and enzyme burdens, and
its minimum points can be inside the polytope or on a P-face (see
Figure \ref{fig:fourchain} (f)).

\begin{figure}[ht!]
\hspace{-5mm}
\fcolorbox{black}{lightyellow}{\begin{minipage}{17cm}

\ \\

\hspace{.5cm}\parbox{16cm}{\small
\textbf{Box 1: Separable rate laws and enzyme cost function}
\vspace{1.5mm} 

According to Eq.~(\ref{eq:factorised}), reversible rate laws can be 
factorized into
five terms that depend on metabolite levels in different ways
\cite{nflb:13}.  For a reaction S $\leftrightharpoons$ P with
reversible Michaelis-Menten kinetics Eq.~(\ref{eq:mmratelaw}), a
driving force $\theta = -\Delta_{\rm r} G'/RT$, and a prefactor for
non-competitive allosteric inhibition, the rate law can be written as
\[v = {\color{c1}\enzyme}
\cdot {\color{c2}\kcatplus}
\cdot \underbrace{{\color{c3}[1-\mbox{e}^{-\theta}]}}_{\eta^{\rm th}}
\cdot \underbrace{{\color{c4}\frac{\concS/\kmS}{1+\concS/\kmS+\concP/\kmP}}
\cdot {\color{c4}\frac{1}{1+x/K_{\rm I}}}}_{\eta^{\rm kin}}\]
\vspace{-5mm}{\small
\begin{center}
\begin{tabular}[ht!]{ccccccccc}
Rate & = & {\color{c1} enzyme} &$\cdot$& {\color{c2}forward catalytic} &$\cdot$& {\color{c3}thermodynamic} &$\cdot$& {\color{c4}kinetic } \\
&& {\color{c1} level} && {\color{c2}constant} && {\color{c3}factor} & & {\color{c4}factor}
\end{tabular}
\end{center}
}
with inhibitor concentration $x$. In the example, with non-competitive
allosteric inhibition, the kinetic factor $\eta^{\rm kin}$ could even
be split into a product $\eta^{\rm sat} \cdot \eta^{\rm reg}$.  The
first two terms in our example, $\enzyme \cdot \kcatplus$, represent
the maximal velocity (the rate at full substrate-saturation, no
backward flux, full allosteric activation), while the following
factors decrease this velocity for different reasons: the factor
$\eta^{\rm th}$ describes a decrease due to backward fluxes (see SI
Figure \ref{fig:fluxfractions}) and the factor $\eta^{\rm kin}$
describes a further decrease due to incomplete substrate saturation
and allosteric regulation (see Figure
\ref{fig:efficiencies} b).  While $\kcatplus$ is an enzyme-specific
constant (yet, dependent on conditions such as pH, ionic strength, or
molecular crowding in cells; unit 1/s), the
efficiency factors are concentration-dependent, unitless, and can vary
between 0 and 1. The thermodynamic factor $\eta^{\rm th}$ depends on
the driving force (and thus, indirectly, on metabolite levels), and
the equilibrium constant is required for its calculation. The
saturation factor $\eta^{\rm sat}$ depends directly on metabolite
levels and contains the $\km$ values as parameters. Allosteric
regulation yields additive or multiplicative terms in the rate law
denominator, which in our example and can be captured by a separate
factor $\eta^{\rm reg}$.  The enzyme cost for a flux $v$, with a
enzyme burden $\he$, can be written as

\[
\enzymemetcost =  
          {\color{c5}\he} \cdot {\color{c1}\enzyme} = 
          {\color{c5}\he} \cdot v
          \cdot {\color{c2}\frac{1}{\kcatplus}}
          \cdot \underbrace{{\color{c3}\frac{1}{[1-\mbox{e}^{-\theta}]}}}_{1/\eta^{\rm th}}
          \cdot \underbrace{{\color{c4}\frac{1+\concS/\kmS+\concP/\kmP}{\concS/\kmS}}     \cdot {\color{c4} [1+x/K_{\rm I}]}}_{1/\eta^{\rm kin}}
          \]

and  contains the terms from the rate law in inverse
form. The first factors, $\he \,v/\kcatplus$, define a minimum
 enzyme cost, which is then increased by the following efficiency factors.
Again, $1/\eta^{\rm kin}$ can be split into $1/\eta^{\rm sat} \cdot 1/\eta^{\rm reg}$.
  By omitting some of these factors, one can construct simplified enzyme
cost functions with higher specific rates, or lower enzyme demands
(compare Figure \ref{fig:efficiencies}b). For a closer approximation,
the factors may be substituted with constant numbers between $0$ and
$1$.
The conversion between fluxes and enzyme levels, in both directions,  is shown below.

\begin{center} 
\includegraphics[width=10cm]{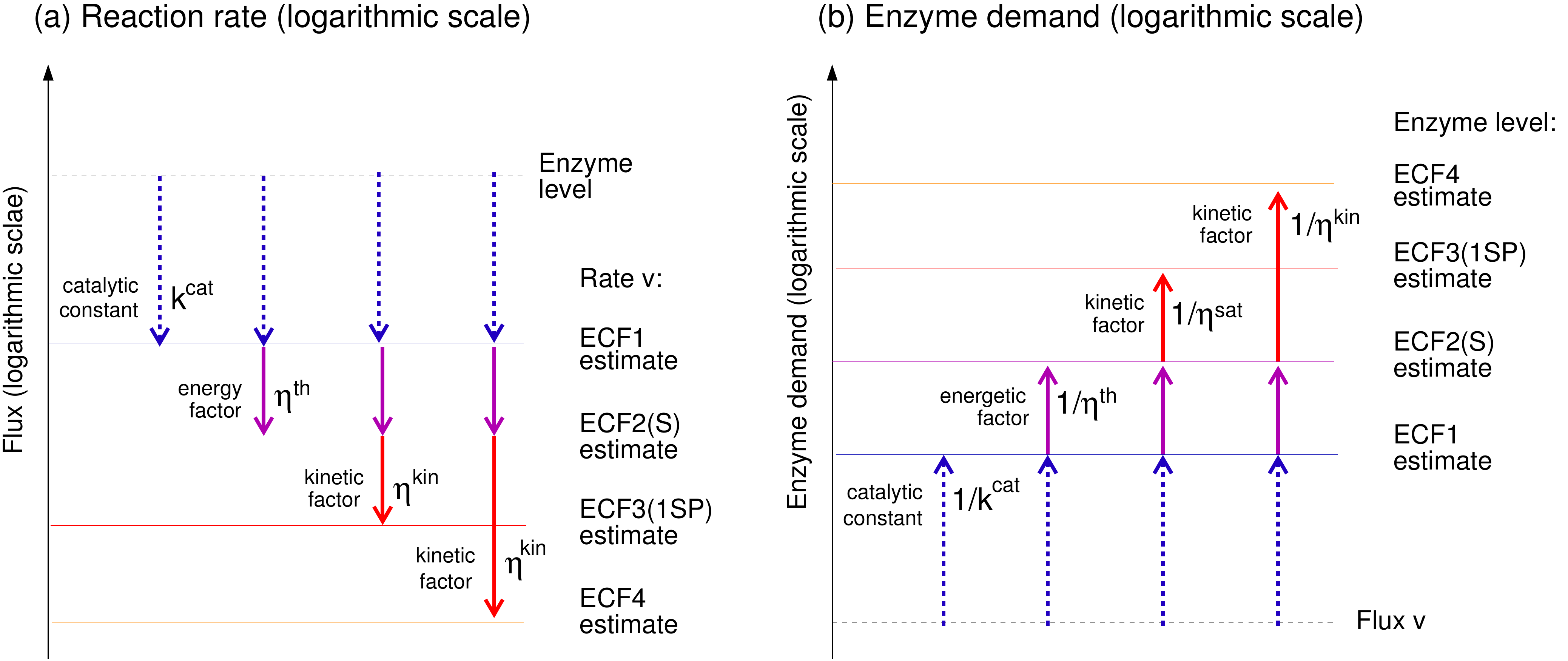} 
\end{center}
On a logarithmic scale, rates and enzyme
cost can be split into sums of efficiency terms.
(a) Starting from the logarithmic
enzyme level (dashed line on top), we add the terms $\log \kcatplus$,
$\log \eta^{\rm th}$, and $\log \eta^{\rm kin}$, and obtain better and
better approximation of the rate. In the example shown, $\kcatplus$
has a numerical value smaller than 1. The more precise approximations
(with more terms) yield smaller rates. The ECF4 arrows refer to other
possible rate laws with additional terms in the denominator.  (b)
Enzyme demand is shaped by the same factors (see
Eq.~(\ref{eq:factorisedU})).  Starting from a desired flux (bottom
line), the predicted demand increases as more terms are considered. \ \\
}
\end{minipage}
}
  \label{fig:pscscores}
\end{figure}

\subsection{Enzyme cost minimization}

\myparagraph{\ \\Enzyme cost minimization} The cost
function $\enzymemetcost(\sv,\vv)$ reflects a trade-off between fluxes
to be realized and enzyme expression to be minimized, where the
relation between fluxes and enzyme levels is not fixed, but depends on
metabolite log-concentrations $\sv$.  Wherever trade-offs exists in
biology, it is common to assume that evolution converges to
Pareto-optimal solutions \cite{szzh:12}, e.g., metabolic states for
which there are no other solutions that have the same flux but with a
lower cost $\enzymemetcost$, or the same cost $\enzymemetcost$ but
with a higher flux. Therefore, for a given measured flux $\vv$, we may
expect to find profiles of metabolite and enzyme concentrations that
minimize $\enzymemetcost$.  We
can now use this principle to predict metabolite and enzyme
concentrations in cells. As with our simple model in
Figure \ref{fig:efficiencies}, minimizing the enzyme cost on the
metabolite polytope yields an optimal metabolite profile, from which
the optimal enzyme profile can be computed using
Eq.~(\ref{eq:factorisedU}).

The resulting method, which we call enzyme cost minimization (ECM), is a convex optimization problem and can be
solved with local optimizers. Enzyme demand and enzyme cost functions,
for single reactions or pathways, are differentiable, convex functions
on the metabolite polytope. This convexity holds for a variety of rate
laws, including rate laws describing polymerization reactions
\cite{hogr:13}, and even for the more complicated problem of
preemptive enzyme expression, i.e., a cost-optimal choice of enzyme
levels that allows the cell to deal with a number of future conditions
(see SI \ref{eq:preemptiveExpression}). If a model contains
non-enzymatic reactions, this changes the shape of the metabolite
polytope, but not the enzyme cost function, and the polytope remains
convex, e.g., if the non-enzymatic reactions are irreversible with
mass-action rate laws (see Methods). \arrenalt{a very important
  point.}  \myparagraph{Sub-optimal solutions and tolerance ranges.}
Obviously, metabolite and enzyme levels may be under various other
constraints that are not reflected in our pathway model. To assess how
easily the metabolic state can be adapted to external requirements, we
can study the cost of deviations from the optimal metabolite levels.
If the cost function $\enzymemetcost(\sv)$ has a broad optimum as in
Figure \ref{fig:fourchain}, cells may flexibly realize metabolite
profiles around the optimal point, and the choice of metabolite levels
may vary from cell to cell.  We can quantify the tolerable variations
by relaxing the optimality assumptions and a computing tolerance range
for each metabolite level.  \myparagraph{Workflow for model building,
  and example model} To apply ECM in practice, we developed
a workflow in which a kinetic model is constructed, a consistent set
of kinetic constants is determined by parameter balancing
\cite{lskl:10,slsk:13}, and optimal metabolite and enzyme levels are
predicted along with their tolerance ranges.  Different types of EMC
function and constraints (e.g., allowed ranges for metabolite levels)
can be chosen. Missing data (e.g., $\km$ values), can thus be handled
in two ways: either, by using a simplified EMC function that does not
require this parameter, or by relying on parameter values that are
chosen by the workflow based on parameter balancing.

\subsection{Which factors shape the optimal enzyme profile and how?}
\label{sec:rules}

\myparagraph{\ \\Simplified cost functions} What determines the demand
for specific enzymes?  If the metabolite levels are known,
we can easily see by analyzing the efficiency factors
in Eq.~(\ref{eq:TotalEnzymeDemand}) (see Box 1).  By omitting some
factors or replacing them by constant numbers $0<\eta\le 1$,
simplified enzyme cost functions with fewer parameters can be
obtained.  For example, $\eta^{\rm th}=1$ would imply an infinite
driving force $\Theta \rightarrow \infty$ and a vanishing backward
flux, $\eta^{\rm kin}=1$ implies full substrate saturation, as well as
full allosteric activation and no allosteric inhibition (or no
allosteric regulation at all).  In these limiting cases, enzyme
activity will not be reduced, and enzyme demand will be given by the
capacity-based estimate $v/\kcatplus$, a lower estimate of the actual
demand.  Instead of omitting an efficiency factor, it can also be set
to a constant value between 0 and 1.  Such simplifications and the
resulting enzyme cost functions with fewer parameters can be practical
if rate constants are unknown.

Depending on the type of data available (e.g., $\kcat$ values,
equilibrium constants, or even $\km$ values), one may choose between
different types of cost functions with different data requirements:
EMC0 (``sum-of-fluxes-based'' same prefactors for all enzymes), EMC1
(``capacity-based'', setting all $\eta = 1$ and thus replacing
reaction rates by the maximal velocities), EMC2 (``energy-based'';
considering driving forces, and setting $\eta^{\rm kin}=1$), EMC3 (``saturation-based'', assuming simple rate laws
depending on products of substrate or product concentrations, and
including the driving forces), and EMC4 functions
(``kinetics-based''; with dependence on individual metabolite levels).
Details of the simplified EMC functions are given in Table
\ref{tab:EnzymeCostScores} and SI \ref{sec:SIListOfScores}.  Each EMC
function is a lower bound on the following functions; i.e., even if
only a simplified cost function can be used, it will always yield a
lower bound on the actual enzyme cost. 

\begin{table}[t!]
\begin{center}
{\small 
\begin{tabular}{|>{\columncolor{lightyellow}}l|>{\columncolor{lightyellow}}l|>{\columncolor{lightyellow}}l|>{\columncolor{lightyellow}}l|>{\columncolor{lightyellow}}l|>{\columncolor{lightyellow}}l|}
  \hline
  \rowcolor{brown} EMC function & 
$\eta^{\rm th}(\Theta(\cv))$ &     
$\eta^{\rm kin}(\cv)$ &     
Parameters  &     
Denominators &
Depends on  \\ \hline & & & & & \\[-2mm]
EMC0 (``Sum of fluxes'')  & - & - & - & &  \\
EMC1 (``Capacity-based'') & - & - & $\he$, $\kcatplus$  & &  \\
EMC2 (``Energy-based'')   & $\checkmark$ & - & $\he$, $\kcatplus$, $\keq$ & $D^{\rm S}$, $D^{\rm SP}$ & Driving force\\
EMC3 (``Saturation-based'') & $\checkmark$ & $\checkmark$ & $\he$, $\kcatplus$, $\keq$, $\km$ & $D^{\rm 1S}$, $D^{\rm 1SP}$ & Metabolite levels  \\
EMC4 (``Kinetics-based'')   & $\checkmark$ & $\checkmark$ & $\he$, $\kcatplus$, $\keq$, $\km$ & general & Metabolite levels  \\
  \hline
\end{tabular}
}
\end{center}
\caption{Simplified enzyme cost functions. By omitting some terms in
  Eq.~(\ref{eq:factorisedU}), we obtain a number of cost functions
  with simple dependencies on enzyme parameters and metabolite levels.
  Terms marked by $\checkmark$ appear explicitly in the rate and cost
  formulae, while other terms are omitted or set to constant values.
  The EMC0 function yields the sum of fluxes, EMC1 functions contain
  enzyme-specific flux burdens based on $\kcat$ and $h$ values (i.e.,
  replacing reaction rates by their maximal velocities).  EMC2 depends
  on metabolite levels only via the driving forces.  EMC3 functions
  are based on simplified rate laws, and EMC4 functions capture all
  rate laws, possibly including allosteric regulation.  The rate law
  denominators $D^{\rm S}, D^{\rm SP}, D^{\rm 1S}$, and $D^{\rm 1SP}$
  are described in SI \ref{sec:SIratelaws}, the EMC functions
  themselves in SI \ref{sec:SIfscScores}.}
\label{tab:EnzymeCostScores}
\end{table}

\myparagraph{Simple estimates of enzyme costs (EMC0 and EMC1
  functions)} Let us consider the different simplifications in more
detail.  As long as fluxes are the only data available, we may assign
identical catalytic constants and enzyme burdens to all enzymes and
assume that all reactions run at their maximal velocities.  Then,
enzyme levels and fluxes will be proportional across the network and
the cost function Eq.~(\ref{eq:TotalEnzymeDemand}) will be of type
EMC0 and proportional to the sum of fluxes. However, catalytic constants
span many orders of magnitude \cite{bnsl:11} and enzyme molecular masses
are quite variable as well, suggesting that EMC0 is a strong oversimplification.
In contrast, if individual $\kcatplus$ and $\hel$ values are known, we obtain an EMC1
cost function, which is still independent of metabolite levels.  In
the flux cost function $\sum_l \rrlmin\,v_l$, each enzyme
has an individual flux burden $\rrlmin = \hel/{\kcatplus}_l$, and the
same ratios have been used as cost weights in FBA with flux
minimization \cite{holz:04} or molecular crowding \cite{bvem:07}.  If
$\kcat$ values are unknown, they may be replaced by ``typical'' values
(see \cite{bnsl:11}).  The enzyme burdens $\he$ can subsume factors
like protein size, protein lifetime, covalent modifications, or space
restrictions (see \cite{horh:11} and SI \ref{sec:SIcostFactors}); if
these are unknown, one may assume that all enzymes are equally costly,
setting their burdens to $\hel=1$.  While the specific costs $\he$ are
relatively uniform, the $\kcat$ values vary within five orders of
magnitude \cite{bnsl:11}, and are thus the major determinant of
$\rrlmin$.  

\myparagraph{Small driving forces lead to larger enzyme costs (EMC2)}

However, by setting $\eta^{\rm th} = \eta^{\rm kin} = 1$, we may
obtain unrealistic results.  First, the simplifying assumption
$\eta^{\rm th}=\eta^{\rm kin}=1$ implies uncontrollable metabolic
states. In a kinetic model with completely irreversible and
substrate-saturated enzymes, the reaction rates would be
\emph{independent} of metabolite levels and the steady state would
depend on finely tuned enzyme levels. Any random variation of
the enzyme levels would lead to non-steady states, with fast
accumulation or depletion of intermediate metabolites. Such states are
extremely fragile and thus uncontrollable. \arrenalt{very good!}  When
assuming efficiencies $\eta^{\rm th}$ or $\eta^{\rm kin}$ smaller than
1, we accept an increased cost and thereby acknowledge that
controllability must be paid by enzyme investments.  Second, EMC1
functions underestimate all enzyme costs, and in reactions close to
chemical equilibrium the errors may become large. At a reaction Gibbs
energy of $\Delta_{\rm r} G' = -0.1\, RT$, the efficiency of the
catalyzing enzyme decreases by a factor of $\eta^{\rm th} = 1-\e^{0.1}
\approx 0.1$, and the demand for enzyme increases by a factor of
$1/\eta^{\rm th} \approx 10$. To account for this effect, we can use
EMC2 functions, considering the thermodynamic factor $\eta^{\rm th}_l
= 1-\e^{-\Theta_l(\sv)}$. The driving forces are expressed in terms of
metabolite log-concentrations $\Theta_l(\sv)$ and equilibrium
constants, which need to be known.  This factor approaches infinity as
reactions reach equilibrium (i.e. where $\Theta_l \rightarrow 0$),
which is what keeps reactions away from equilibrium during cost
minimization (see, for example, Figure \ref{fig:fourchain}).

Compared to the following EMC3 and EMC4 functions, the advantage of
EMC2 functions is that they are based on equilibrium constants only,
i.e., on a physical property of the reacting compounds that does not
depend on the enzyme at all.  Several in silico methods exist to
estimate $\keq$ for virtually any biochemical reaction \cite{jhbh:08,
  nhmf:13} and the values can be easily obtained at
\url{http://equilibrator.weizmann.ac.il/} \cite{fnbm:12}.  Methods
like MDF \cite{nbfr:14} and mTOW \cite{tnah:13} have been developed to
address exactly this situation, where detailed kinetic information is
hard to obtain.  We discuss the relation between EMC2 and MDF in SI
\ref{sec:SItighterConstraints}. Of course, in all of these cases,
$\kcat$ values still have to be guesstimated or set to identical
values for all enzymes. Aside from the EMC2 function, there are other
energy-based estimates of the enzyme cost.  For instance, the enzyme
demand in Figure \ref{fig:fourchain} (an EMC3 function with kinetic
constants, fluxes, and enzyme burdens set to 1) has the energy-based
cost $\rrpw = \sum_{l} [1-\e^{-\Theta(\cv)}]\inv$ as a lower estimate.
Since $1 - e^{-x} \leq x$ for all positive $x$, an even lower estimate
is $\sum_{l} \Theta(\cv)\inv$ (Figures \ref{fig:fourchainOther} and
\ref{fig:single_metabolite2} in SI).  Some variants of FBA relate
fluxes to metabolite profiles, which are then required to be
thermodynamically feasible, i.e., within the metabolite polytope.
ECM constrains the metabolite profiles even further: as shown in
Figure \ref{fig:fourchain}, profiles close to an E-face are very
costly and can never be optimal. This holds for EMC2 functions and for
the more realistic enzyme costs, which will even be higher. Thus,
regions close to E-faces can be excluded from the polytope. At
P-faces, defined by physiological bounds, there will be no such
increase, so the optimum may lie on a P-face (see Figure
\ref{fig:fourchain} (f)).  To do so, we simply define lower bounds for
all driving forces (see SI \ref{sec:SItighterConstraints}): these
bounds can be used both in ECM or in thermodynamic FBA to reduce the
search space for metabolite profiles.

\myparagraph{Incomplete substrate and product saturation increase the
  enzyme cost (EMC3 and EMC4)} The next logical step is to relax the
assumption that $\eta^{\rm kin} = 1$.  Just like the thermodynamic
factor $\eta^{\rm th}$, the kinetic factors $\eta^{\rm sat}$ and
$\eta^{\rm reg}$ can be used to define tighter constraints on
metabolite levels.  However, unlike $\eta^{\rm th}$, the kinetic terms
can take various forms and contain many kinetic
parameters. To obtain
simple, but reasonable formulae, we first consider rate laws in which
enzyme molecules exist only in three possible states: unbound, bound
to all substrate molecules, or bound to all product molecules.
Metabolites affect the rate only through the mass-action terms
$S=\prod_i (s_i/{\km}_i)$ (for substrates) and $P=\prod_j p_i/{\km}_j$
(for products), and the degree of saturation is determined by
$\eta^{\rm kin} = S / (1 + S + P)$, where the formula contains only
one Michaelis-Menten constant for all substrates and (optionally) one
for all products.  Since EMC3 requires both $\kcat$ and $\km$ values
for every enzyme, and $\kcat$ values are more likely to be known than
$\km$ values, there is no real reason to consider cases where the
$\kcat$ value is not known.  EMC3 represents a good balance between
complexity and requirement for kinetic parameters, and is a practical
cost function if simple, realistic rate laws are desired. The EMC4
functions, finally, represent general rate laws and $\eta^{\rm kin}$
can take many different forms depending on mechanism and order of
enzyme-substrate binding. Again, for simplicity, we resort to
analyzing only a small set of relatively general templates for EMC4,
known as convenience kinetics \cite{likl:06a} or modular rate laws
\cite{liuk:10}. Nevertheless, our formalism allows a much wider
range of rate laws, and we consider EMC4 a wild-card cost function
that covers almost any well-behaved metabolic rate law (see SI
\ref{sec:SIfscScores} for more details).

\subsection{Enzyme and metabolite levels in \emph{E.~coli} central metabolism} 
\label{sec:ecoli}

\begin{figure}[t!]
  \begin{center}
    \includegraphics[width=0.37\textwidth]{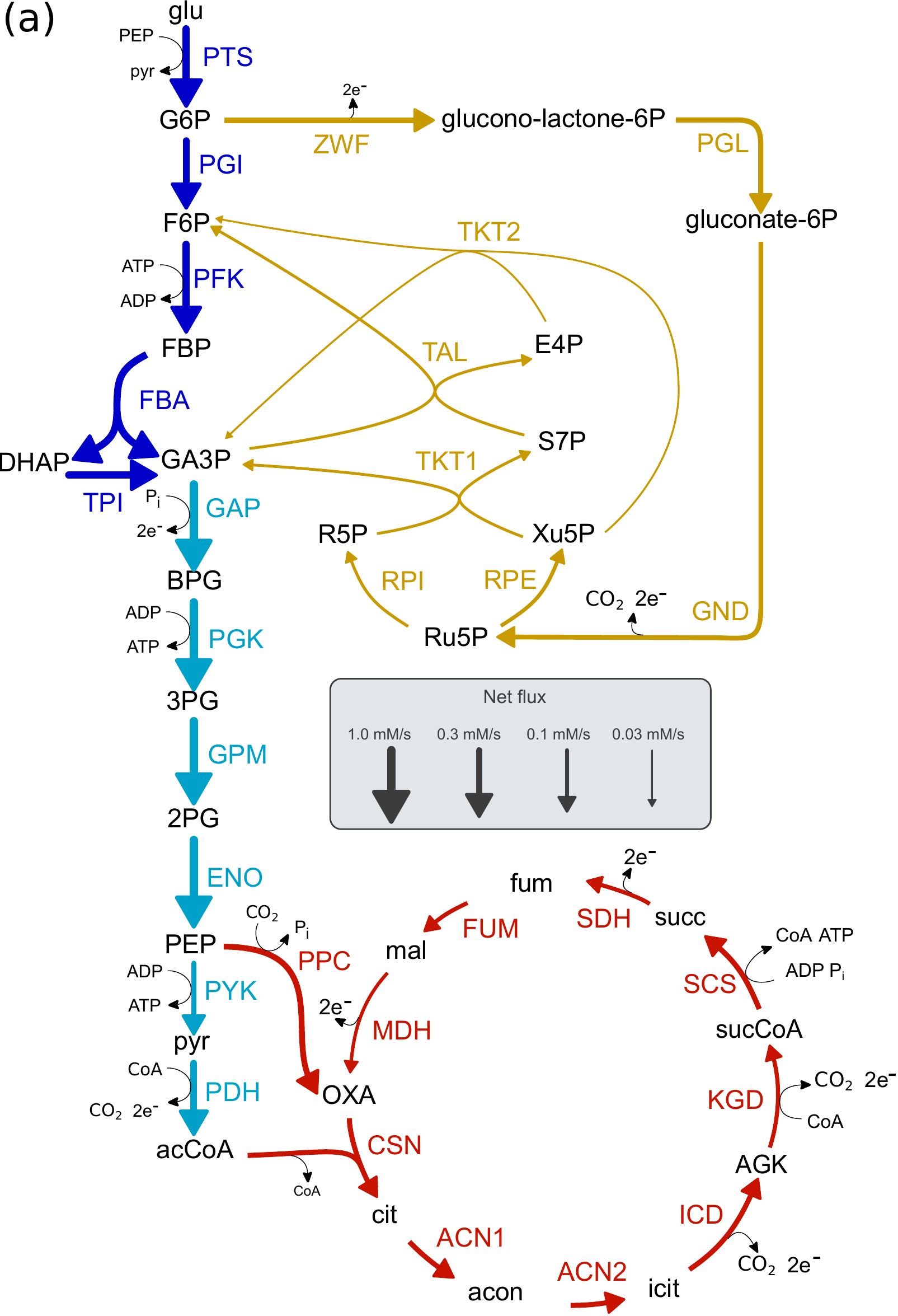} 
    \hspace{3mm}
    \includegraphics[width=0.55\textwidth]{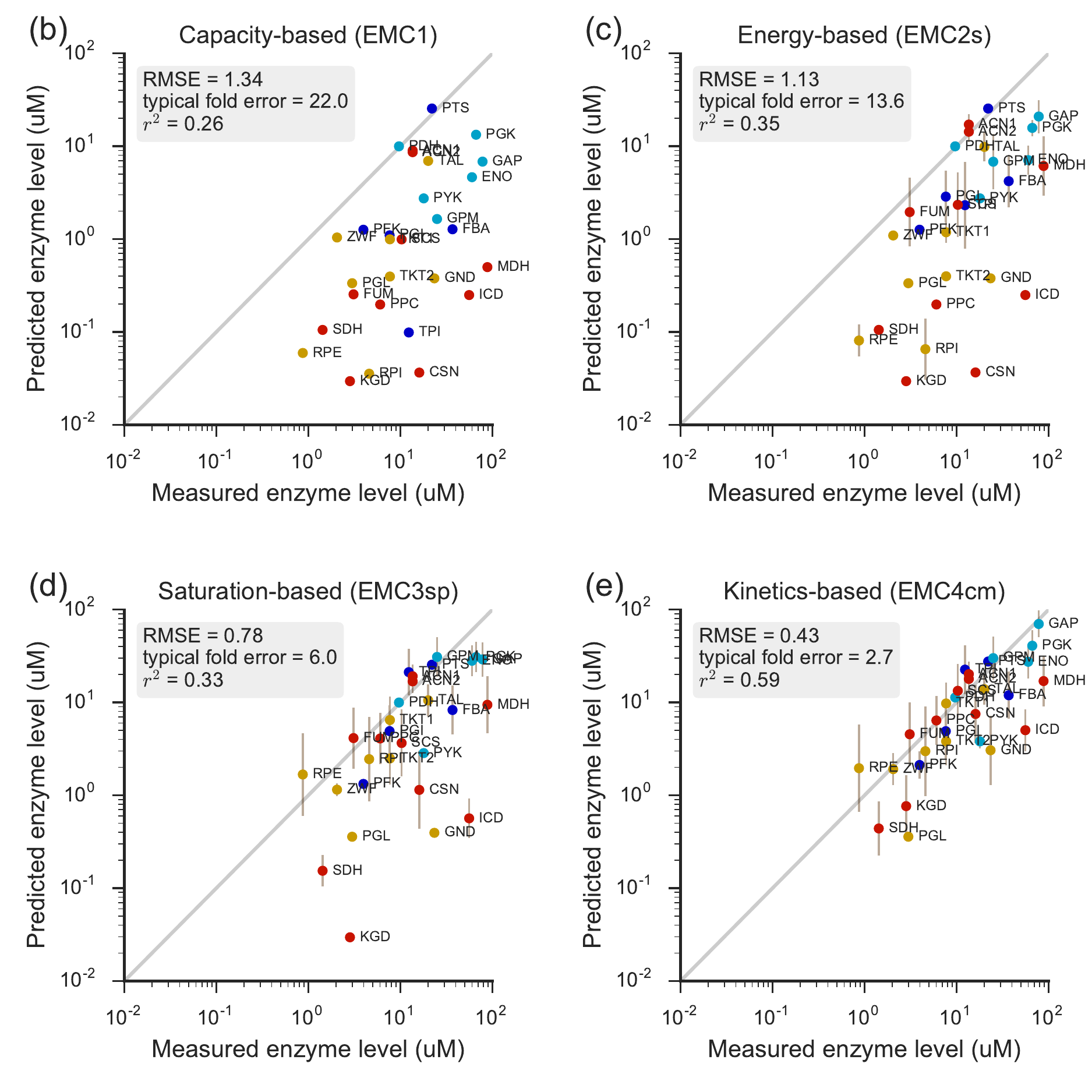}
    \caption{Predicted enzyme levels in
      \emph{E.~coli} central metabolism.  (a) Network model with
      pathways marked by colors.  Flux magnitudes are represented by
      the arrows' thickness.  (b) The ratio flux/$\kcatplus$ (EMC1)
      as a predictor for enzyme levels. Points on the dashed line
      would represent precise predictions.  (c) Enzyme levels
      predicted by the energy-based EMC2(S) function.  Vertical bars
      indicate tolerance ranges obtained from a relaxed optimality
      condition (allowing for a one percent increase in total enzyme
      cost).  (d) Enzyme levels predicted with EMC3 function
      representing fast substrate or product binding. (e) Enzyme
      levels predicted with EMC4 function based on the common modular
      rate law \cite{liuk:10}. In all sub-figures (b-e), RMSE is the 
      root mean squared error (in log$_{10}$-scale) of our predictions compared to the
      measured enzyme levels, and $r$ stands for the Pearson correlation
      coefficient.
      Predictions are based on fluxes from \cite{hann:11}, $\kcatplus$
      and $\km$ values from BRENDA \cite{sceg:04}, and compared to
      protein data from \cite{skva:15}.  For metabolite predictions,
      see SI Figure \ref{fig:EcoliPredictionsSI}.}
    \label{fig:EcoliPredictions}
  \end{center}
\end{figure}

\myparagraph{\ \\Validation using data from \emph{E. coli} central
  metabolism} To benchmark our prediction of metabolite and enzyme
levels and to see whether more complex EMC functions improve the
predictions, we applied ECM to a model of \emph{E. coli} central
metabolism, containing three major pathways: glycolysis, the pentose
phosphate pathway, and the TCA cycle (see Figure
\ref{fig:EcoliPredictions} (a), and Methods for modeling details).
Figure \ref{fig:EcoliPredictions} (b-d) compares predicted enzyme
profiles to measured protein levels \cite{skva:15}.  The absolute
values of predicted enzyme levels arise directly from the model, using
the fluxes reported in \cite{hann:11} (e.g., glucose uptake rate
$8.13$ mmol/gCDW/h), while cellular protein concentrations were
obtained from proteomics data (measured in similar conditions
\cite{skva:15}) and assuming an average cell volume of $\sim\!1$ fL
($10^{-15}$ liters) \cite{milo:13}. EMC4 predicts values that are in
the right order of magnitude and reflect differences in enzyme levels
along the pathways. The prediction error of $0.43$ for enzyme levels
(RMSE: root mean square error on a log$_{10}$ scale) corresponds to a
typical fold error of $10^{\rm RMSE} = 2.7$.  In line with the
measured protein levels, the predicted enzyme levels tend to be larger
in glycolysis than in TCA and pentose phosphate pathway, reflecting
the larger fluxes.  Predicted metabolite concentrations (RMSE $0.58$,
corresponding to a typical fold error of $3.8$), thermodynamic forces
and $c/\km$ ratios are shown in a supplementary file.

We note that the predicted enzyme levels become more accurate when
stepping up to more complex cost functions, with a
prediction error decreasing monotonously from $1.34$ to $0.43$.  The capacity-based
enzyme cost (EMC1) assumes that enzymes operate at full capacity ($v =
\enzyme\, \kcatplus$) and therefore underestimates all enzyme levels (Figure
\ref{fig:EcoliPredictions} (c)). In reality, many reactions in
central metabolism are reversible and many substrates do not reach
saturating concentrations. When taking these effects into account, the predictions
come closer to measured enzyme levels (Figure
\ref{fig:EcoliPredictions} (c-d)). For instance, FUM (fumarase, fumA)
and MDH (malate dehydrogenase) have a much higher predicted level in
EMC2-4 than in EMC1 since the thermodynamics-based costs account for
their low driving force. Similarly, the predicted levels of two
pentose-phosphate enzymes (Ribulose-5-phosphate epimerase RPE and
ribose phosphate isomerase RPI) are much higher in EMC3 and EMC4
because of their low affinity to the substrate ribulose-5-phosphate
(Ru5P).  In some cases, however, the more complex EMC4 fails to
improve the prediction over the simpler methods and can actually
make them worse. For instance, the 6-phosphogluconolactonase (PGL) 
and pyruvate kinase (PYK) reactions are underestimated in all cases and
do not improve significantly in EMC4.  Glucose 6-phosphote dehydrogenase
(ZWF) is predicted quite well by EMC2-3, but its level is overestimated
in EMC4.  Overall, the EMC4 function 
performs substantially better on
average than the simpler cost functions even though it relies on a
much larger set of parameters, many of which are known with low
certainty.  To test the sensitivity of our results to the choice of
proteomic data, we repeated the entire analysis using measured enzyme
concentrations from \cite{avpn:12} and reached essentially the same
findings.

\begin{figure}[t!]
  \begin{center}
{\small
    \begin{tabular}{cc}
      (a) Enzyme demand (energy-based EMC2s function) &
      (b) Enzyme demand (kinetics-based EMC4cm function)\\
      \includegraphics[width=7.5cm]{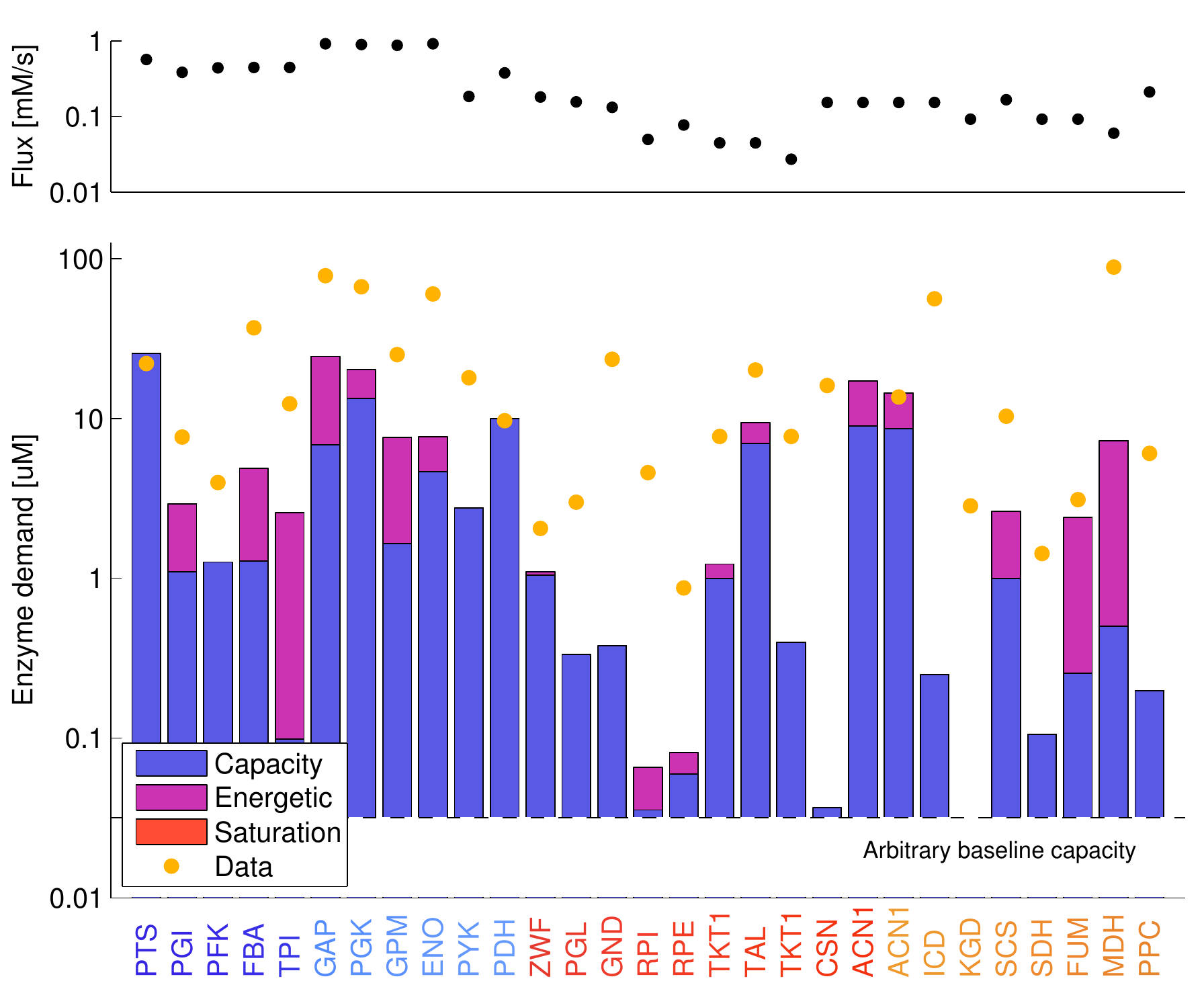}&
      \includegraphics[width=7.5cm]{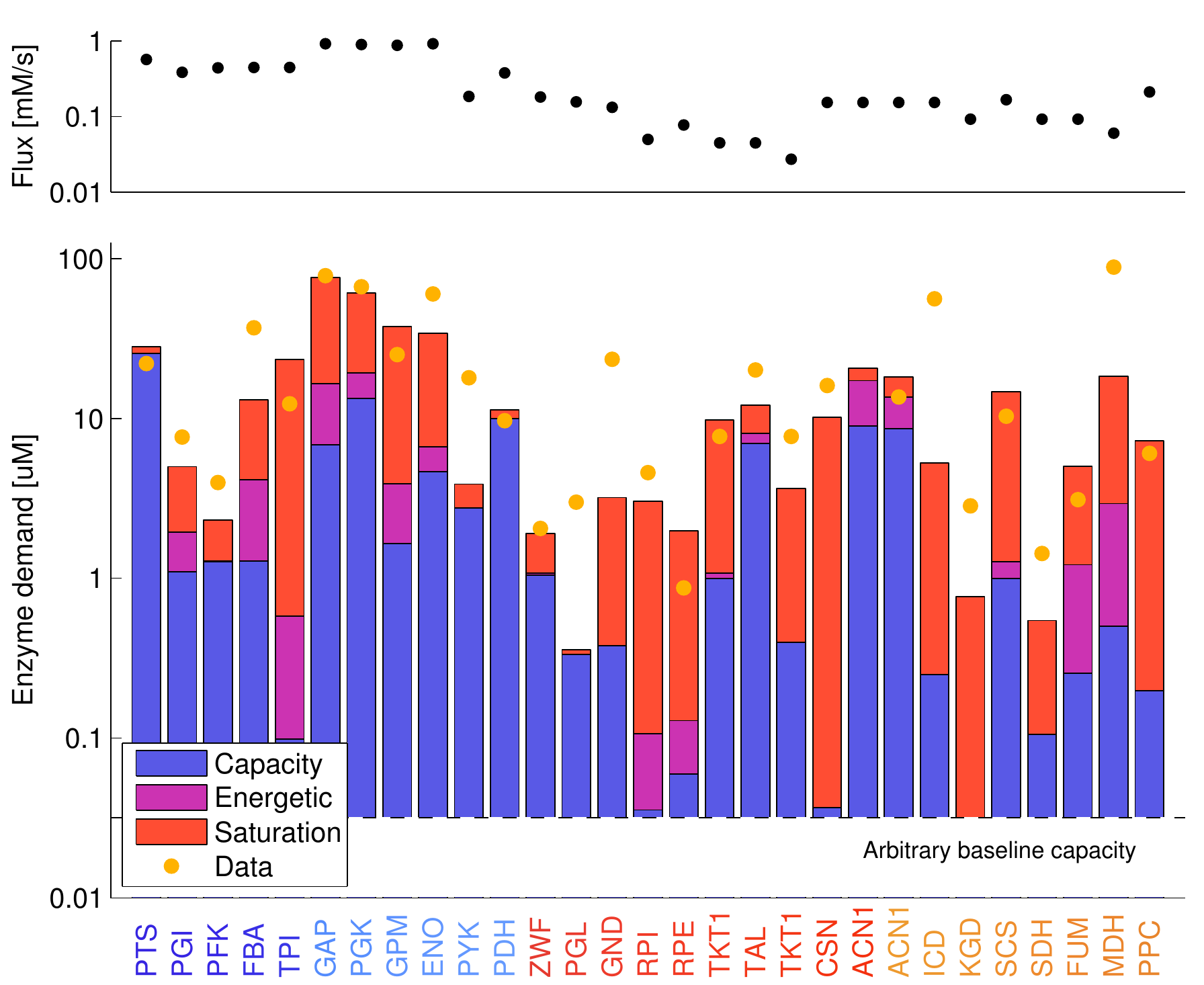}\\
    \end{tabular}
}
    \caption{Enzyme demand in central
      metabolism.  (a) Measured fluxes for all reactions (black dots
      on top) lead to an enzyme demand (bottom). The enzyme demand,
      predicted by using the energy-based EMC2s cost function, can be
      split into factors representing enzyme capacity and
      thermodynamics (see Box 1).  Bars show predicted enzyme levels
      in mM for individual enzymes on logarithmic scale (compare
      Figure \ref{fig:efficiencies} b). Yellow dots denote measured
      enzyme levels (in mM).  Note that the bars do not represent
      additive costs, but multiplicative cost terms on logarithmic
      scale; therefore, the relevant feature of the blue bars is not
      their absolute lengths, but their differences between enzymes.
      (b) The kinetics-based EMC4cm cost function includes saturation
      terms and yields more accurate predictions. Starting from the
      capacity cost (in blue), the thermodynamic (purple) and
      saturation (red) terms increase the enzyme demands and make them
      less variable between enzymes (on log-scale).  Note that flux
      data (circle) and protein data (yellow dots) are identical in
      both plots.}
    \label{fig:EcoliPredictionsTerms}
  \end{center}
\end{figure}

\myparagraph{Validation of predicted metabolite levels} Although ECM
puts enzymes on a pedestal due to their relatively high cost, the
metabolite concentrations are key to minimizing that cost. One would
thus expect to find a good correspondence between the predicted
metabolite profile and concentrations measured \emph{in vivo},
especially when the predictions of the enzyme levels are good.  Since
some of the EMC functions leave metabolite levels underdetermined, we
penalized very high or low metabolite concentrations by adding a
second, concentration-dependent objective to the optimization
problem. In particular for EMC0 and EMC1, this regularization term is
the only term -- aside from global constraints -- that determines the
metabolite concentrations since metabolites have no effect on enzyme
cost whatsoever. In all other cases, the term mostly influences
metabolites that have a minimal effect on the cost.  Comparing the EMC
metabolite prediction with in vivo experimental data, as shown in SI
Figure \ref{fig:EcoliPredictionsSI}, the predicted metabolite levels
are in the correct scale.  Similarly to enzyme level predictions, the
error decreases from EMC1 to EMC4cm, where we find a prediction error
of about $0.58$ (corresponding to a typical fold error of $3.8$),
slightly higher than the prediction error for enzymes (see Figure
\ref{fig:EcoliPredictions} (e)).

Can we now explain the cellular enzyme levels?  Figure
\ref{fig:EcoliPredictionsTerms}, just like the scheme in Figure
\ref{fig:efficiencies} (b), shows the specific contributions to enzyme
demand in each reaction.  A simple (EMC2) prediction based on driving
forces predicts the thermodynamic cost terms already quite reliably
and improves the enzyme prediction. However, accounting for incomplete
substrate saturation, described by the saturation cost term, has an
even larger effect on cost in most enzymes. For practical cost
estimates, for instance when computing flux burdens for FBA, we can
conclude that multiplying the experimentally determined $\kcat$ values
by energetic factors tends to improve the results. However, total
enzyme demand will still be underestimated.

\section{Discussion}

\myparagraph{\ \\Summary} When applying mathematical models to learn
about biology, one typically faces a conflict between desired model
accuracy and the amount of available data. Metabolic systems are known
to abide to several physical and physiological considerations, all of
which are mathematically well-described (e.g.~flux balance,
thermodynamics, kinetics, and cost-benefit optimality). Taking all of
these aspects into account would create very detailed models but at
the price of considerably increasing the demand for data.  Here, we
obtain a flexible modelling method by combining the two main modelling
approaches, constraint-based and kinetic modelling, in a new way: with
fixed metabolic fluxes, kinetic models are used to determine a
cost-optimal state.  The tiered approach in ECM allows for different
levels of detail, which can easily be matched to the amount of
existing data. The minimal requirement for running ECM is to have a
metabolic network with given steady-state fluxes, while the maximal
requirement would be a fully parameterized kinetic model.  Although
similar approaches exist in dynamic modeling \cite{sgsb:06,slsk:13}
and enzyme optimization \cite{tnah:13,fnbl:13,nbfr:14}, ECM extends
these ideas to the most general kinetic rate laws and cost functions,
while proving that the emerging optimization problem is convex and
thus easily (albeit numerically) solvable. We discuss the
advances made by ECM in detail by listing these five points:

 \textbf{1. Solving the enzyme optimality problem in metabolite space}
 One way of modelling the cost and benefit of enzymes is to study
 kinetic models and to treat enzyme levels as free variables to be
 optimized. However, this calculation can be hard because enzyme
 profiles may lead to one, several, or no steady states, and the
 resulting optimality problem can be non-convex. By using fluxes, and
 then metabolite concentrations, as our primary variables, we
 drastically simplify this task.  In thermodynamic FBA, known flux
 directions are used to determine a set of feasible metabolite
 profiles, the metabolite polytope.  Here, the same set is used as a
 space for screening, sampling, and optimization of metabolic states;
 accordingly bounds on metabolite concentrations or driving forces can
 be easily formulated as linear constraints.  Using log-concentrations
 as free variables, we can enumerate all possible metabolite profiles,
 solve for the enzyme profiles, and obtain a systematic
 parametrization of all (steady and non-steady) states (see SI Figure
 \ref{sec:SIworkflow}) -- which renders a screening of enzyme space
 obsolete. 

\textbf{2. Convexity} The metabolite polytope does not only provide a
good search space, but it also facilitates optimization because enzyme
cost is a convex function of the metabolite log-concentrations (see SI
\ref{sec:SIConvexCost}).  Convexity makes the optimization tractable
and scalable (see SI
\ref{sec:SIConvexCost}) -- unlike a direct optimization in enzyme
space. Simple convexity holds for a wide range of rate laws and for
extended versions of the problem, e.g., including bounds on the sum of
(non-logarithmic) metabolite levels or bounds on weighted sums of
enzyme fractions.  By adding a regularization term, representing
biological side objectives, we can even ensure strict convexity, and
thus the existence of a unique optimum that can be efficiently found.

\textbf{3. Separable rate laws disentangle individual enzyme cost
  effects} To assess how different physical factors shape metabolic
states, we focused on separable rate laws, which lead to a series of easily
interpretable, convex cost functions. The terms in these functions
represent specific physical factors and require different kinetic and
thermodynamic data for their calculation.  By neglecting some of the
terms, one obtains different approximations of the true enzyme cost.
The more terms are considered, the more precise our predictions about
metabolic states becomes (see Figure \ref{fig:pscscores} and SI
\ref{sec:SIListOfScores}).  Of course, it is often important to keep
models simple and the number of parameters small, and therefore the
stripped-down versions of ECM can be useful as well. For instance, in
some conditions such as batch-fed \emph{E. coli}, a simple enzyme
economy might still be a realistic approximation. Our results for EMC4
(see Figure \ref{fig:EcoliPredictions}) indicate that indeed one can
predict enzyme levels quite well even with this relatively simple
objective. Finally, in conditions where ECM's predictions are far
from the measured enzyme levels, we can use this information to focus
on specific enzymes or pathways that deviate the most, and that may
therefore display optimization or adaptations beyond simple
resource-optimality.

 \textbf{4. Relationship to other optimality approaches} Beyond the
 practical advantages of using factorized enzyme cost functions, they
 also allow us to easily compare our method to earlier modeling and
 optimization approaches. These approaches typically focused on only
 one or two of the factors that are taken into account in ECM, and
 many of them can be reformulated as approximations of ECM (as we
 have shown for MDF \cite{nbfr:14} and, by proxy, earlier
 thermodynamic profiling methods \cite{hlih:05, fsbh:09}).  For
 instance, the optimization performed by FBA with flux minimization is
 equivalent to using EMC0, while EMC1 is based on the same principles
 as FBA with molecular crowding \cite{bvem:07} and pathway specific
 activities \cite{bnlm:10}.  Thermodynamic profiling methods
 \cite{hlih:05, fsbh:09, nbfr:14} which use driving forces as a proxy
 for the cost, can be compared to EMC2 (where all $\kcat$ are assumed
 to be equal, see SI \ref{sec:SItighterConstraints}). To our
 knowledge, ECM is the first method that accounts for substrate and
 product saturation (as well as allosteric) effects in the
 optimization process and guarantees a convex, i.e., relatively
 tractable optimality problem. Moreover, ECM highlights how different
 aspects of metabolism are linked: most importantly, thermodynamic
 feasibility \cite{nbfr:14} is generalised by the quantitative notion
 of thermodynamic efficiency, which then turns out to be a natural
 precondition for enzyme economy. 

   \textbf{5. Improved parameters for flux analysis} Accordingly,
   results from ECM can be used to improve flux analysis
   \cite{hebh:07, tnah:13}. First, ECM can be used to define more
   realistic flux cost functions for FBA. In practice, the cost
   weights used so far (typically, defined by $\kcat$ values and
   enzyme sizes) could be adjusted by dividing them by efficiency
   factors obtained from our workflow. 
   Furthermore, ECM could be ``embedded'' into FBA by screening
   possible flux distributions (e.g., elementary flux modes) and
   characterising each of them by quantitative cost. Then the most
   cost-favorable mode could be picked.  This could be seen as a
   version of minimal-flux FBA, but one that uses kinetic knowledge
   instead of the various heuristic assumptions that go into FBA.
   Second, we can derive realistic bounds on thermodynamic forces
   based on kinetics and enzyme cost, or lower/upper bounds on
   substrates/products concentrations to avoid extreme saturation
   effects. All these constraints follow systematically from setting
   upper limits on the individual efficiency factors (see SI
   \ref{sec:SItighterConstraints}).  By applying them in
   thermodynamics-based flux analysis, we shrink the metabolite
   polytope and exclude stripes at its boundary where costs would be
   too high to allow for an optimal state. Similarly, by giving
   individually weights to thermodynamic driving forces, MDF could be
   used as a method to optimize some lower bound on the system's
   enzyme cost (see SI \ref{sec:SItighterConstraints}).

\myparagraph{Including other objectives into ECM} The assumption that
enzyme levels are continuously cost-optimized is of course debatable.
There is ample evidence that cells assume apparently sub-optimal
states in order to maintain robust homeostasis or to gain metabolic
flexibility for addressing future challenges \cite{szzh:12}. Moreover,
a random drift by mutations may affect the cell states as long as the
impact on fitness is not very high.  For example, an allosterically
regulated enzyme will often not reach its maximal possible activity,
so investment in enzyme production appears to be wasted. Nevertheless,
cells pay this price in order to gain the ability to adjust quickly to
changes (i.e. within seconds rather than the minutes required for
altering gene expression).  A simple principle of cost optimality, as
in ECM can be justify in several ways.  First, some alternative
objectives can be integrated into ECM by adding them to the objective
function. We have tried to keep our method as general as possible to
facilitate such objectives, e.g. by allowing for non-linear, convex
enzyme costs ($h(\enzyme)$).  In particular, metabolite levels may be
under additional constraints or optimality pressures because they
appear in pathways outside our model, which may favor high or low
levels of the metabolites. Also chemical molecule properties, such as
hydrophobicity or charge, may affect the preferable metabolite levels
in cells \cite{bnfb:11}. For instance, if our model captures an
ATP-producing pathway, low ATP levels will be energetically favorable,
whereas other ATP-consuming pathways would favor higher ATP levels. To
account for this trade-off, a requirement for sufficiently high ATP
levels can be included in our ECM model by constraints or additional
objectives $\metbene(\cv)$ that penalize low ATP levels.  If
metabolite levels are kept far from their upper or lower physiological
bounds, this will allow for more flexible adjustments in case of
perturbation.

If enzyme profiles were shaped by optimal resource allocation, as
assumed in ECM, this would have consequences for the shapes of enzyme
and metabolite profiles.  Enzyme cost, thermodynamic forces, and an
avoidance of low substrate levels would be tightly entangled, and the
shapes of enzyme profiles would reflect the role of enzymes in
metabolism, i.e., the way in which they control metabolic
concentrations and fluxes. Among other things, this would imply  three 
general properties of enzyme profiles:

   \textbf{1. Enzyme cost is related to thermodynamics} In FBA,
   thermodynamic constraints and flux costs appear as completely
   unrelated aspects of metabolism. Thermodynamics is used to restrict
   flux directions, and to relate them to metabolite bounds, while
   flux costs are used to suppress unnecessary fluxes.  In ECM,
   thermodynamics and flux cost appear as two sides of a coin. Like in
   FBA, flux profiles are thermodynamically \emph{feasible} if they
   lead to a finite-sized metabolite polytope, allowing for positive
   forces in all reactions. However, the values of these forces also
   play a role in shaping the enzyme cost function on that
   polytope. Together, metabolite polytope and enzyme cost function
   (as in Figure \ref{fig:fourchain}) summarize all relevant
   information about flux cost. 

  \textbf{2. Enzyme profiles reflect local metabolic necessities} What
  are the factors that determines the levels of specific enzymes? High
  levels are required whenever catalytic constants, driving forces, or
  substrate concentrations are low.  Accordingly, an efficient use of
  enzymes requires metabolite profiles with sufficient driving forces
  (for energetic efficiency) and sufficient substrate levels (for
  saturation efficiency). Trade-offs between these requirements,
  together with predefined bounds, will shape the optimal metabolite
  profiles \cite{tnah:13}: in a linear pathway, a need for energetic
  efficiency will push substrate concentrations up and product
  concentrations down; the need for saturation efficiency has the same
  effect.  However, since the product of one reaction is the substrate
  of another reaction, there will be trade-offs between efficiencies
  in different reactions. Therefore, where enzymes are costly or show
  low $\kcat$ values, we may expect a strong pressure on sufficient
  driving forces and substrate levels.

  \textbf{3. Enzyme profiles reflect global effects of enzyme usage}
  If enzyme profiles follow a cost-benefit principle, costly enzymes
  should provide large benefits.  Such a correspondence has been
  predicted, for example, from kinetic models in which flux is
  maximised at a fixed total enzyme investment \cite{hekl:1996}: in
  optimal states, high-abundance enzymes exert a strong control on the
  flux, and enzymes with strong flux control are highly abundant.  If
  this applies in reality, then highly investment (e.g., large enzyme
  levels shown in Figure \ref{fig:efficiencies} a) could be seen as a
  sign of large benefit, in terms of flux control. Here, we studied a
  different optimality problem (fixing the fluxes and optimizing
  enzyme levels under constraints on metabolite levels), and obtain a
   more general result. The optimal enzyme cost profile obtained by
  ECM is a linear combination of flux control coefficients and,
  possibly, control coefficients on metabolites that hit upper or
  lower bounds (see SI \ref{sec:SIcostAndControlCoefficients}).  In
  simple cases (e.g., the example in Figure \ref{fig:fourchain}),
  where there is only one flux mode and no metabolite hits a bound,
  and enzyme demands and flux control coefficients will be directly
  proportional.

\myparagraph{Practical application} Beyond the analysis of central
metabolism, ECM can be applied to select candidate pathways in
metabolic engineering projects. A prediction of enzyme demands or
specific activities (SI \ref{sec:SIoptimization}) can be helpful at
different stages of pathway design.  The optimal expression
profile for a pathway can be determined, critical steps in a pathway
can be detected (i.e., steps where lowering the enzyme's flux-specific
cost $\rrl$ would be most important), and enzyme demand and cost can
be compared between pathway structures. This type of application is
not unique to ECM, and although several of the methods that we
mention throughout this manuscript \cite{hmkw:97, mwhm:97, klhe:99,
  bnlm:10, tnah:13, fnbl:13} have been used for this purpose in the
past, we believe that ECM manages to bring them all under one
umbrella. 
  
\section*{Acknowledgements}
The authors thank Hermann-Georg Holzh\"utter, Andreas Hoppe, Uwe
Sauer, Tomer Shlomi, and Naama Tepper for inspiring discussions.  WL
is supported by the German Research Foundation (Ll 1676/2-1). EN is
supported by a SystemsX.ch TPdF fellowship.  RM is supported by
ERC:novcarbfix. 

\bibliographystyle{unsrt}
\bibliography{biology}

\section{Methods}

\paragraph{Metabolite polytope and enzyme cost functions}
A metabolic network with given flux directions, equilibrium constants,
and metabolite bounds defines the \emph{metabolite polytope}. This
convex polytope $\metabolitepolytope$ in the space of
log-concentrations $s_i = \ln c_i$ represents the set of feasible
metabolite profiles.  The flux profile used can be stationary
(e.g. determined by FBA or $^{13}$C MFA) or non-stationary (like
experimentally measured fluxes, directly inserted into a model). If
the provided flux directions are thermodynamically infeasible, the
metabolite polytope will be an empty set, $\metabolitepolytope =
\emptyset$.  The faces of the metabolite polytope arise from two types
of inequality constraints. First, the physical ranges $s^{\rm min}_{i}
\le s_{i} \le s^{\rm max}_{i}$ of metabolite levels define a
box-shaped polytope (bounded by P-faces). Some metabolite levels may
even be constrained to fixed values. Second, each reaction must
dissipate Gibbs free energy, and to make this possible, driving forces
and fluxes must have the same signs ($\Theta_l \cdot v_l>0$), and thus
$\sign(v_l) = \sign(\Delta_{\rm r} {G'}^{\circ}_{l}/RT + \sum_{i}
n_{il} s_{i})$.  The resulting constraints define E-faces of the
metabolite polytope (representing equilibrium states, $\Theta_l =
0$). Close to these faces, enzyme cost goes to infinity.

\paragraph{Enzyme cost minimization can be formulated as a convex
 optimality problem for metabolite levels} Enzyme cost minimization (ECM) uses a metabolic network, a flux profile $\vv$,
kinetic rate laws, enzyme burdens, and bounds on metabolite
levels to predict optimal metabolite and enzyme concentrations. The
enzyme cost of reactions or pathways is a convex function on the
metabolite polytope (proof in SI \ref{sec:SIConvexCost}), that is, a
metabolite vector $\sv$, linearly interpolated between vectors
$\sv_{\rm a}$ and $\sv_{\rm b}$, cannot have a higher cost than the
interpolated cost of $\sv_{\rm a}$ and $\sv_{\rm b}$. Convexity also
holds for cost functions $\hminus(\enzymev)$ that are non-linear, but
convex over $\enzymev$.  Some EMC functions are even strictly convex
(i.e., Eq.~(\ref{eq:ExplainConvexity}) holds with a $<$ sign instead
of $\leq$). In contrast, simplified EMC functions can be constant (as
in EMC0 and EMC1), or constant in certain directions in the metabolite
polytope (as in EMC2, under combined metabolite variations that do not
affect the driving forces) (see SI \ref{sec:SIproofUniqueMapping}). To
find an optimal state, we choose an EMC function and minimize the
total enzyme cost within the metabolite polytope.  Optimal metabolite
profiles, enzyme profiles, and enzyme costs are obtained by solving
the enzyme cost minimization (ECM) problem
\begin{eqnarray}
\label{eq:ECMproblem}
\sv^{\rm opt}(\vv)  &=& \mbox{argmin}_{\sv \in \metabolitepolytope}\, \enzymemetcost(\sv, \vv)\nonumber \\
\enzymev^{\rm opt}(\vv) &=& \enzymev(\sv^{\rm opt}(\vv),\vv)\nonumber \\
\enzymemetcost^{\rm opt}(\vv) &=& \enzymemetcost(\sv^{\rm opt}, \vv).
\end{eqnarray}
The total cost $\enzymemetcost(\sv,\vv)$ (defined in
Eq.~(\ref{eq:TotalEnzymeDemand})) is the sum of enzyme costs given by
EMC functions. Since $\enzymemetcost(\sv)$ and the metabolite polytope
itself are convex, ECM is a convex optimization problem.  The optimal
enzyme levels depend on external conditions and have to be
recalculated after any change in external metabolite levels.
There are cases where
$\enzymemetcost(\sv)$ is convex, but not strictly convex, and
therefore Eq. (\ref{eq:ECMproblem}) will have a continuum of optimal
solutions. To enforce a unique solution, one may add strictly convex
side objectives that score the log-metabolite levels, e.g., a
quadratic function favoring metabolite levels close to some typical
concentration vector $\hat{\sv}$: $\mbox{min}_{\sv \in
  \metabolitepolytope}\, \left(\enzymemetcost(\sv, \vv) + ||\sv -
\hat{\sv}||\right)$. Such extra objectives can be justified
biologically, e.g.~by assuming that intermediate metabolite levels
give cells more flexibility to adapt to perturbations.  Convexity does
not only simplify numerical calculations, but it also shows that the
evolutionary optimality problem  has a unique
solution.  In fact, metabolite polytope and cost functions remain
convex even under various modifications of the problem.  The shape of
the feasible set (usually, the metabolite polytope) remains convex if
we add constraints on the total metabolite level, on weighted sums of
metabolite levels, or on weighted sums of enzyme levels (see SI
\ref{sec:SIworkflowDetails}).  Finally, we can consider the more
complicated problem of preemptive enzyme expression, where a fixed
enzyme profile and  allosteric inhibition must allow a cell to
realise different flux distributions under different conditions (see
SI \ref{eq:preemptiveExpression}). Also this problem is convex. If a
model contains non-enzymatic reactions (or non-enzymatic processes
such as metabolite diffusion out of the cell or dilution in growing
cells), each such reaction leads to an extra constraint on the
metabolite polytope (for details, see SI \ref{sec:SInonEnzymatic}).  A
known flux in an irreversible diffusion or dilution reaction fixes the
concentration of one metabolite. In the presence of irreversible
non-enzymatic reactions with mass-action rate laws, the polytope is
intersected by a subspace. In both cases, the resulting sub-polytope
may be empty, i.e., the given flux distribution will not be
realisable.

\paragraph{Tolerance ranges for nearly optimal solutions} 
Evolution could tolerate non-optimal enzyme
costs; this tolerance depends on population dynamics and can sometimes
be quite significant, e.g. in small compartmentalized communities.  To
compute realistic tolerance ranges for the ECM problem, we start from
the optimum (total cost $\enzymemetcost$) and choose a tolerable cost
$\enzymemetcost^{\rm tol}$ (e.g., one percent higher than the optimal
cost). This defines a tolerable region in $\metabolitepolytope$:
$\metabolitepolytope^{\rm tol} \equiv \{\sv \in \metabolitepolytope
\,|\, \enzymemetcost(\sv) \le \enzymemetcost^{\rm tol}\}$.  A
tolerance range for each metabolite is defined by the minimal and
maximal values the metabolite can show within
$\metabolitepolytope^{\rm tol}$.  Tolerance ranges for enzyme levels
are defined in a similar way. Alternatively, tolerance ranges and
nearly optimal solutions can be estimated from the Hessian matrix (see
SI \ref{sec:SIHessianApproximation}).

\paragraph{Enzyme-based flux cost function} In FBA (e.g., in FBA variants with
flux minimization or molecular crowding), flux cost or enzyme demand
are linear functions of the fluxes. ECM yields plausible prefactors
for this formula: by rearranging Eq.~(\ref{eq:TotalEnzymeDemand}), we
can write the enzyme cost as a linear function $\enzymemetcost =
\sum_l \rrl\cdot v_l$ with flux burdens $\rrl(\cv) = \hel\cdot
\frac{1}{\kcatplusl} \cdot \frac{1}{\eta^{\rm th}_l(\cv)} \cdot
\frac{1}{\eta_l^{\rm sat}(\cv)} \cdot \frac{1}{\eta^{\rm reg}(\cv)}$.
The flux burden has a lower bound $\rrlmin = \hel / \kcatplusl$,
denoting the cost per flux under ideal conditions. Ignoring all
dependencies on metabolite levels, $\rrlmin$ could be used as a cost
weight to define flux cost functions for FBA.  However, these values
are further increased by the inverse enzyme efficiencies. A
flux-specific enzyme cost (or, inversely, a flux per enzyme invested)
can also be defined for entire pathways.  The Pathway Specific
Activity \cite{bnlm:10} is defined as the flux per enzyme mass
concentration (with flux in mM/s and enzyme mass concentration in
$\mu$g enzyme per gram of cell dry weight) and can be computed by
treating enzyme mass as a cost function.  Assuming that $\eta^{\rm
  enr} = \eta^{\rm kin} = 1$ and that cost is
expressed in terms of protein mass ($\hel=m_l$), we obtain the Pathway
Specific Activity by dividing the pathway flux $\vPW$ by
$\enzymemetcost$ (see SI \ref{sec:SIpathwayCost} and
\ref{sec:SIpsa}). Using protein masses in Daltons as specific cost
weights $\hel$, we obtain a formula for the enzyme
mass per flux (for reactions or pathways). The reciprocal value
$\activity_{\rm pw} = \vPW/\enzymemetcost$ is the pathway-specific
activity.

\paragraph{Workflow for model building and enzyme prediction.}  
To predict enzyme and metabolite levels in metabolic pathways (Figure
\ref{fig:workflowData}) we developed an automated workflow.  In a
consistent model, all parameters must satisfy Wegscheider conditions
for equilibrium constants \cite{wegs:02} and Haldane relationships
between equilibrium constants and rate constants \cite{hald:30}.  The
kinetic constants used in rate laws should represent \emph{effective}
parameters, which may differ from ``ideal'' parameters, e.g., by
crowding effects. However, since measured parameter values are usually
incomplete and inconsistent, parameter balancing \cite{lskl:10} is
used to translate measured kinetic constants into consistent model
parameters.  Based on a network and given fluxes, the software
extracts relevant data from a database (thermodynamic constants, rate
constants, fluxes, and protein sizes; metabolite and protein levels
for validation), determines a consistent set of model parameters,
builds a kinetic model, and optimizes enzyme and metabolite profiles
for the EMC function chosen. To assess the effects of parameter
variation, parameter sets can be sampled from the posterior
distribution, provided by parameter balancing. Sampled parameters lead
to different predicted enzyme levels, but the resulting variation in
enzyme levels can be explained, to a large extent, as a direct
compensation for the varying $\kcat$ values.  The workflow has been
implemented in MATLAB and python.

\paragraph{\emph{E.~coli} model} 
The model shown in Figure \ref{fig:EcoliPredictions} was built automatically
from a list of chemical reactions in \emph{E.~coli} central
metabolism. KEGG reaction identifiers \cite{KEGG} were automatically
translated into a kinetic model
(for details, see SI \ref{sec:SIecolimodel} and SI Table
\ref{tab:data}). The cofactors ATP, ADP, phosphate, NADH, NAD$^+$,
NADPH, and NADP$^+$ are included in the model.  Equilibrium constants
were estimated using the component contribution method \cite{nhmf:13},
kinetic constants ($\kcatplus$ and $\km$ values) were obtained from
the BRENDA database (after which each value was curated manually), and
a complete, globally consistent parameter set was determined by
parameter balancing.   State-dependent data were obtained from
publications using batch fed \emph{E. coli} BW25113 grown on minimal
media (M9) with glucose as the carbon source.  Our source
for metabolic fluxes \cite{hann:11} used $^{13}$C metabolic flux
analysis, metabolite concentrations \cite{gerc:15} were obtained using
LC-MS/SM, and enzyme concentrations \cite{skva:15} using SWATH-MS.
For a summary of data provenance, see SI Table \ref{tab:data}.  During
ECM, all metabolite levels were limited to predefined ranges, and the
levels of cofactors and some other metabolites were fixed at
experimentally known values.  To compute tolerances for predicted
metabolite and enzyme levels, we  defined an acceptable
enzyme cost, one percent higher than the minimal value, and determined
ranges for metabolite levels that agree with this cost limit.  Data,
model, and matlab code for ECM can be obtained from 
\url{www.metabolic-economics.de/enzyme-cost-minimization/}..

\begin{table}[t!]
\begin{center}
  \begin{tabular}{|>{\columncolor{lightyellow}}l>{\columncolor{lightyellow}}l>{\columncolor{lightyellow}}l|}
\hline
    \rowcolor{brown} \textbf{Name} &
  \textbf{Symbol}&
 \textbf{Unit}\\ \hline & & \\[-2mm]
    Flux & $v_{l}$& mM/s \\
    Metabolite level & $c_{i}$ & mM     \\
    Logarithmic metabolite level & $s_i = \ln (c_{i}/c_{\sigma})$ & unitless  \\
    Enzyme level & $\el$ & mM\\
    Reaction rate & $v_{l}(\el, \cv) = \el \cdot \ratelaw_{l}(\cv)$ & mM/s \\
    Catalytic rate & $\ratelaw_{l} = v_{l}/\el$& 1/s\\
    Scaled reactant elasticity & $\Esc_{li}$ & unitless \\
    Gibbs energy of formation (std.~chemical potential)  & ${G'}^{\circ}_{i}$ & kJ/mol\\
    Reaction Gibbs energy & $\Delta_{\rm r} G'_{l} = \Delta_{\rm r} {G'_{l}}^{\circ}+ RT \sum_i n_{il}\,\ln c_{i}]$ & kJ/mol\\
    Driving force & $\Theta_l = - \Delta_{\rm r} G'_{l}/RT$ & unitless \\
    Forward/backward catalytic constant  & $\kcatplus, \kcatminus$      & 1/s    \\
    Michaelis-Menten constant     & $\kmli$ & mM     \\
    Protein mass & $m_l$ & Da \\
    Enzyme cost & $\hminus(\enzymev) = \sum_{l } \hel \,\el$ & D \\
    Enzyme burden & $\hel$ & D/mM \\
    Enzyme-induced metabolite cost & $\enzymemetcost(\sv,\vv) = \hminus(\enzymev(\sv,\vv))$ & D \\
    Flux-specific cost & $\rrl$ & D/(mM/s) \\
    Baseline flux-specific cost & $\rrlmin$ & D/(mM/s) \\
\hline
  \end{tabular}
\end{center}
\caption{Mathematical symbols used.  The fitness unit Darwin (D) is
  a proxy for the different fitness  units used in cell
  models. Reaction must be orientated in such a way that all
  fluxes are positive.  To define metabolite log-concentrations, we
  use the standard concentration $c_\sigma=$ 1 mM.}
\label{tab:symbolsshort}
  \end{table}

\clearpage

\renewcommand{\thesection} {S\arabic{section}}
\renewcommand{\thefigure}  {S\arabic{figure}}
\renewcommand{\thetable}   {S\arabic{table}}
\renewcommand{\theequation}{S\arabic{equation}}

\setcounter{figure}{0}
\setcounter{equation}{0}
\setcounter{section}{0}

\ \\
\begin{center}
{\LARGE{Supplementary information}}
\end{center}
\ \\[-1cm]

\section{Kinetic rate laws}

\subsection{Rate laws for general enzymatic reactions}
\label{sec:SIratelaws}

\myparagraph{Reversible rate laws}
Reversible rate laws for reactions with multiple substrates
(concentrations $s_{i}$) and products (concentrations $p_{j}$) have
the form
\begin{eqnarray}
  \label{eq:GeneralRateLawRate}
  v &=&  \enzyme\, \frac{k^+_{\rm cat} \prod_{i} (\frac{s_{i}}{\kmi})^{m^{\rm S}_{i}}
    - k^-_{\rm cat} \prod_{i} (\frac{p_{i}}{\kmi})^{m^{\rm P}_{i}}}{D(s_1, s_2, ..,p_1, p_2, ..)}.
\end{eqnarray}
By default, we assume that an enzyme molecule contains a single
catalytic site.  If an enzyme is a protein complex with $N_{\rm sub}$
subunits and $N_{\rm cat}$ catalytic sites, we can use effective
values ${\kcatplus}' = \frac{N_{\rm cat}}{N_{\rm sub}}\,\kcatplus$
referring to single enzyme subunits, whose concentrations $\enzyme$
are recorded in proteomics data.  The molecularities $m^{\rm S}_{li}$
or $m^{\rm P}_{li}$ describe in what numbers reactants participate in
the enzyme mechanism.  Molecularities can differ from the (nominal)
stoichiometric coefficients by a reaction-specific scaling factor
$\gamma$ because the stoichiometric coefficients in the sum formula
may be arbitrarily scaled.  For example, in a reaction 2 A + 4 B
$\rightarrow$ 2 C (stoichiometric coefficients -2, -4, 2) with th rate
law $\kcatplus\,[A]\,[B]^{2} -\kcatminus\,[C]$ (with molecularities 1,
2, 1), this factor would be $\gamma = 1/2$.  If we assume, as it is
usually done, that the exponents in the rate law represent
stoichiometric coefficients, the factor $\gamma_l$ will appear like an
effective Hill coefficient. In turn, a reactant with stoichiometric
coefficient $n_{il}$ and an effective Hill coefficient $\gamma=2$ will
have a molecularity of $m^{\rm S}_{li} = 2 |n_{il}|$.  For reasons of
thermodynamic consistency (existence of a consistent equilibrium
state), all substrates and products in a reaction must show the same
$\gamma$ factors \cite{liuk:10}.

\myparagraph{Thermodynamic driving force and thermodynamic
  constraints} The signs and magnitudes of metabolic fluxes depend on
thermodynamic driving forces (see Figure \ref{fig:fluxfractions}).  We
define the thermodynamic driving force as the negative reaction Gibbs
energy $-\Delta_{\rm r} G'$, measured in units of $RT$. The symbol G'
denotes transformed Gibbs free energies, suitable variables for
systems at given or buffered pH value.  To obtain the correct
relationship between fluxes and driving forces, the driving forces
$\Theta_l$ must be defined based on molecularities, not on
stoichiometric coefficients\footnote{If stoichiometric coefficients
  and molecularities differ, the Hill-like coefficient $\gamma_l$ must
  appear in the definition of driving forces $\Theta_l = -
  \frac{1}{RT}\, \sum_i \gamma_l\,n_{il}\, G'_l = - \gamma_l\,
  \Delta_{\rm r} G'_l$. The difference along a reaction is not defined
  based on nominal stoichiometric coefficients, but on actual
  molecularities.}.  According to thermodynamics, all reaction rates
must vanish in chemical equilibrium; to ensure this in kinetic models,
equilibrium constants and rate constants must satisfy the Haldane
relationship \cite{hald:30}
\begin{eqnarray}
  \label{eq:HaldaneRelationShip}
 \keq =  \frac{\prod_{i} (s^{\rm eq}_{i})^{m^{\rm S}_{i}}}
{\prod_{i} (p^{\rm eq}_{i})^{m^{\rm P}_{i}}}
=  \frac{  k^+_{\rm cat} \prod_{i} (\kmi)^{m^{\rm P}_{i}}}
{k^-_{\rm cat}  \prod_{i} (\kmi)^{m^{\rm S}_{i}}},
\end{eqnarray}
where $s_i$ and $p_j$ denote substrate and product levels,
respectively.  Moreover, the equilibrium constants follow from Gibbs
energies of formation as $\keq = \e^{-\Delta_{\rm r} {G^{\circ}}'/RT}$. This implies
 Wegscheider conditions \cite{wegs:02}: the
vector of equilibrium constants satisfies $\ln \keq = {\Ntot}\trans
\,{\muv^\circ}'$, with the stoichiometric matrix $\Ntot$ for all
metabolites and the vector ${\muv^\circ}'$ of transformed Gibbs free
energies of formation. Accordingly, the equilibrium constants must
satisfy a Wegscheider condition $\ln \keq \cdot \kv = 0$ for any
thermodynamic cycle $\kv$, i.e., any nullspace vector of
${\Ntot}\trans$.

\begin{figure}[t!]
  \begin{center} \begin{tabular}{ll} (a) &
    (b) \\[2mm]
      \includegraphics[width=6cm]{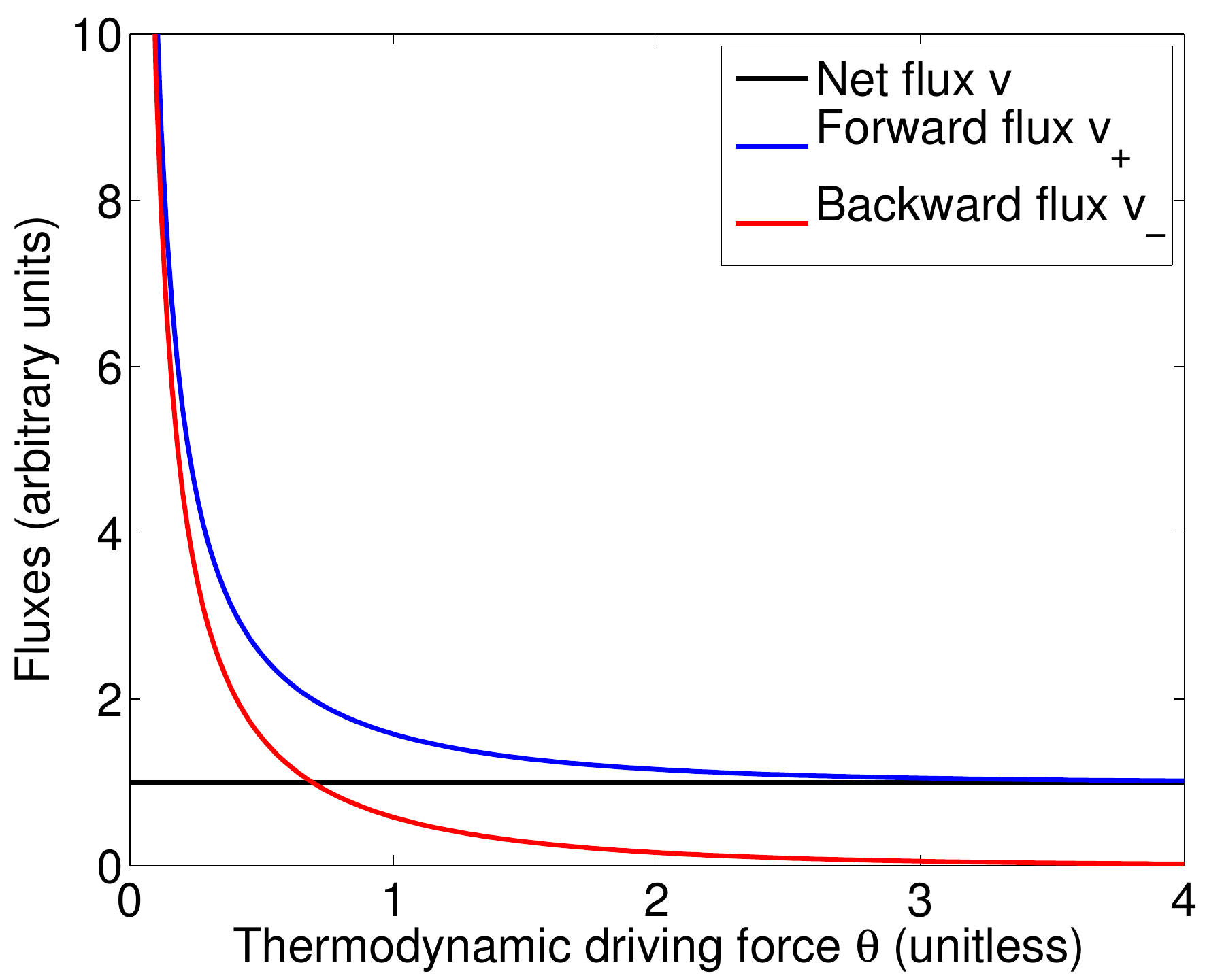} &
      \includegraphics[width=6cm]{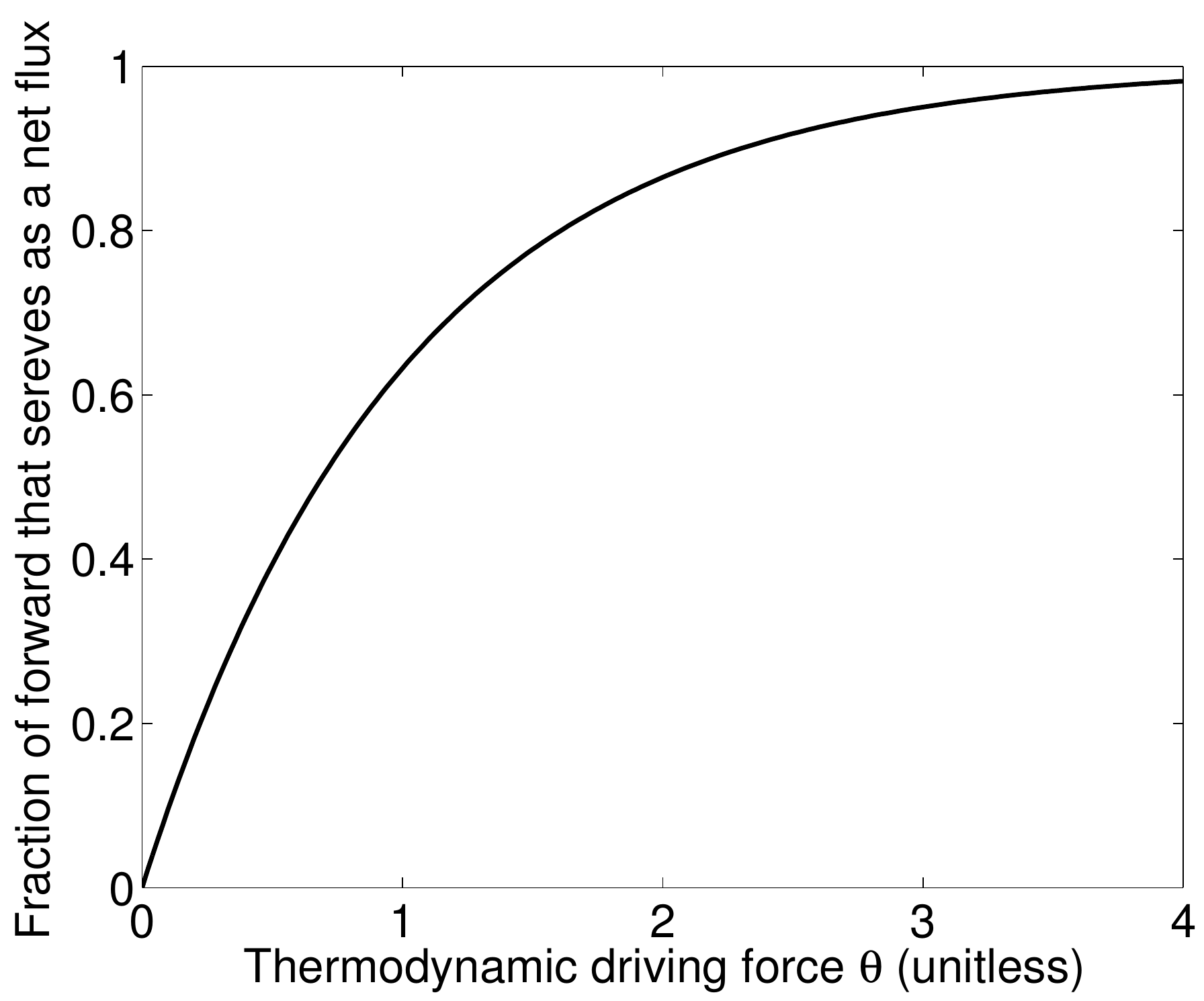} \end{tabular} \caption{Enzyme
      efficiency depends on thermodynamics. The thermodynamic driving
      force $\Theta=-\Delta G'/RT$ in a reaction determines the ratio
      between forward and backward fluxes: $v^+/v^-=\e^\Theta$ .  If
      the net flux $v=1$ is kept fixed, forward and backward fluxes
      strongly increase as $\Theta$ approaches 0 (chemical
      equilibrium). (a) Forward (blue) and backward (red) flux as
      functions of the thermodynamic  force. In each point,
      their difference yields the predefined net flux $v=1$. (b) Only
      a fraction of the forward flux $v^+$ acts as a net flux, while
      the rest is canceled by the backward flux (see Figure
      \ref{fig:efficiencies}). This fraction  varies between 0 (no
      thermodynamic force, chemical equilibrium) and 1 (high
      thermodynamic force, strongly driven
      reaction).}  \label{fig:fluxfractions} \end{center}
\end{figure}

\myparagraph{Simplified rate law denominators} 
The formula for the denominator $D$ in Eq.~(\ref{eq:GeneralRateLawRate})
depends on the enzyme mechanism assumed.  A general, biochemically plausible
choice is polynomials of the form
\begin{eqnarray}
\label{eq:DEMC4}
D(\cv) = 1 + \sum_{k} M_{lk} \prod_i c_{i}^{m_{lik}}
\end{eqnarray}
with positive coefficients $M_{lk}$ and exponents $m_{lik}$.  Each sum
term (index $k$) represents a binding state of the enzyme.  The
exponents $m_{lik}$ encode the numbers of bound reactant molecules and
the prefactors encode the binding energies. The sum term 1 represents
the unbound enzyme.  The highest-order substrate term reads $\prod_{i}
(s_{i}/\kmi)^{m^{\rm S}_{i}}$ and the highest-order product term reads
$\prod_{i} (p_{i}/\kmi)^{m^{\rm P}_{i}}$.  The denominator may also
contain additive or multiplicative terms for allosteric activation and
inhibition.  While the exponents $m_{lik}$ are usually positive
integers, allosteric regulation can imply denominator terms
$\kmS/s$. A special case of Eq.~(\ref{eq:DEMC4}) are rate laws for
polymerization reactions \cite{hogr:13}, which can also be used as
simplified rate laws for biomass-producing reactions. In this case, it
will be the ``template'' molecules rather than the enzyme that is
scored by a cost.  By focusing on simple enzyme mechanisms with few
binding states, we obtain general rate laws that are valid for all
reaction stoichiometries. Their denominators have simple structures
(containing only few sum terms and a few Michaelis-Menten constants as
parameters) \cite{liuk:10}.  Since these rate laws containing fewer
denominator terms than more complex rate laws, the rates become higher
and enzyme demand and costs tend to be underestimated. The
energy-based EMC2 functions are based on rate laws with the
denominators
\begin{eqnarray}
  \label{eq:EMC2ratelaws}
 D^{\rm S} &=& \prod_{i} (s_{i}/\kmi)^{m^{\rm S}_{i}} \nonumber \\
 D^{\rm SP} &=& 
 \prod_{i} (s_{i}/\kmi)^{m^{\rm S}_{i}} + \prod_{j} (p_{j}/\kmj)^{m^{\rm P}_{i}}.
\end{eqnarray}
The mathematical products are called mass-action
terms. In the first formula, we  assume that substrate levels are high and
product levels are low; and in the second one, that both substrate and
product levels are  high. The saturation-based EMC3 functions
are based on rate laws with the 
denominators
\begin{eqnarray}
  \label{eq:EMC3ratelaws}
 D^{\rm 1S} &=& 
1 +  \prod_{i} (s_{i}/\kmi)^{m^{\rm S}_{i}}\nonumber \\
 D^{\rm 1SP} &=& 
1 + \prod_{i} (s_{i}/\kmi)^{m^{\rm S}_{i}}+ \prod_{j} (p_{j}/\kmj)^{m^{\rm P}_{i}}.
\end{eqnarray}
These denominators contain only the term 1 and the substrate and
product mass-action terms. To justify these rate laws, we assume a
strongly cooperative binding between substrates and between products
and consider an enzyme mechanism with only three states: enzyme bound
with all substrates, enzyme bound with all products, and unbound
enzyme.  The first formula assumes low product concentrations, and The
second formula describes the direct-binding modular (DM) rate law
\cite{liuk:10}. The direct-binding modular rate law generalizes the
reversible MM kinetics.  Furthermore, we consider the common modular
rate (CM) law \cite{likl:06a, liuk:10}, a generalized form of
reversible MM kinetics with the denominator
\begin{eqnarray}
  \label{eq:EMC3ratelawsgeneral}
 D^{\rm CM} &=& 
 \prod_{i} (1+s_{i}/\kmi)^{m^{\rm S}_{i}}+ \prod_{j} (1+p_{j}/\kmj)^{m^{\rm P}_{i}} -1.
\end{eqnarray}
In the  enzyme mechanism, substrate molecules bind
independently, product molecules bind independently, and substrate and
product binding exclude each other.  Multiplying out the denominator
(\ref{eq:EMC3ratelawsgeneral}), we obtain many more terms than in the
direct-binding modular rate law. Realistic rate laws will
contain more denominator terms than the DM rate law, but possibly
fewer than the CM rate law. To interpolate between  the two
extremes, we may take
their arithmic or geometric mean
\begin{eqnarray}
D^{\rm geom} = \sqrt{ D^{\rm DM} \, D^{\rm CM}}, \qquad
D^{\rm arith} = \half \,D^{\rm DM} + \half \, D^{\rm CM}.
\end{eqnarray}
If the denominator values $D^{\rm DM}$ and $D^{\rm CM}$ are not too
different, the two mean values will be similar\footnote{If $a\approx
  b$, we can approximate $\sqrt{a\,b} = \sqrt{a\,(a+b-a)} = a \sqrt{1
    + \frac{b-a}{a}} \approx a (1 + \half \frac{b-a}{a}) = \half
  [a+b]$.}. In the second formula (arithmetic mean), the mass-action
terms appear as in DM and CM rate laws, and all other terms from the
CM law appear with prefactors of $\half$. If we define rate laws by
taking a geometric (or arithmetic) mean of rate laws denominators, the
corresponding enzyme costs will be given by geometric (or harmonic)
mean values of enzyme costs.  \myparagraph{Allosteric regulation} If
an enzyme is allosterically regulated, this can be described by
additive or multiplicative regulation terms in the rate law
denominator \cite{liuk:10}.  Additive terms arise from competitive
regulation. Multiplicative terms (for non-competitive regulation) can
be split from the denominator and become prefactors of the rate
law. Typical choices are $\frac{x}{x+k^{\rm A}_{X}}$ for
non-competitive activation and $\frac{k^{\rm I}_{X}}{x+k^{\rm I}_{X}}$
for non-competitive inhibition, with rate constants $k^{\rm A}$ and
$k^{\rm I}$ and a regulator concentration $x$ \cite{liuk:10}.  Thus,
in the factorized EMC formulae, allosteric effects can either be
listed by a separate efficiency factor or be included in the
saturation factor. For instance, the saturation factor for
Michaelis-Menten kinetics with non-competitive inhibition can be split
into
\begin{eqnarray} 
 \eta^{\rm sat} = \frac{s/\kmS}{(1+\frac{x}{K_{\rm I}}) (1+\frac{s}{\kmS}+\frac{p}{\kmP})}  = 
\frac{1}{1+\frac{s}{\kmS}+\frac{p}{\kmP}} \frac{1}{(1+x/K_{\rm I})} = 
\eta^{\rm sat*}\, \eta^{\rm reg}.
\end{eqnarray} 

\subsection{How the efficiency factors are derived} 

The general formula \ref{eq:GeneralRateLawRate} covers a wide range of
possible rate laws.  To demonstrate how it can be factorized into the
capacity and efficiency factors, we consider a bimolecular reaction $A
+ B \rightleftharpoons P + Q$ and an enzyme with a common modular (CM)
rate law (Equation \ref{eq:EMC3ratelawsgeneral}), i.e.
\begin{eqnarray}
\label{eq:convenienceKinetics}
v &=& \enzyme\, \frac{\kcatplus\, \frac{[A][B]}{K_A\,K_B} - \kcatminus\, \frac{[P][Q]}{K_P\,K_Q}}
{(1+\frac{[A]}{K_A})(1+\frac{[B]}{K_B})  + (1+ \frac{[P]}{K_P}((1 + \frac{[Q]}{K_Q}) -1} 
\nonumber \\
&=& \enzyme\, \kcatplus\, \frac{\frac{[A][B]}{K_A\,K_B} - \frac{\kcatminus}{\kcatplus}\frac{[P][Q]}{K_P\,K_Q}}
{ 1 + \frac{[A]}{K_A} + \frac{[B]}{K_B} + \frac{[A][B]}{K_{AB}} + \frac{[P]}{K_P} + \frac{[Q]}{K_Q} + \frac{[P][Q]}{K_{PQ}}}\nonumber \\
&=& \enzyme\, \kcatplus\, 
\underbrace{\left(1 - \e^{-\Theta}\right)}_{\eta^{\rm enr}} 
\underbrace{\frac{ \frac{[A][B]}{K_A\,K_B}} {1 + \frac{[A]}{K_A} + \frac{[B]}{K_B} + \frac{[A][B]}{K_{AB}} + \frac{[P]}{K_P} + \frac{[Q]}{K_Q} + \frac{[P][Q]}{K_{PQ}}}}_{\eta^{\rm sat}}
\end{eqnarray}
where for the last step, we used the Haldane relationship $K_{\rm eq}
= \frac{\kcatplus}{\kcatminus} \frac{K_P\,K_Q}{K_A\,K_B}$ and the
identity $\e^{-\Theta} = \frac{[P][Q]}{[A][B]}/K_{\rm eq}$.  This
Haldane relationship and the connection between the thermodynamic
driving force and the ratio between the numerator terms hold in
general:
\begin{eqnarray}
\frac{k^+_{\rm cat} \prod_{i} (\frac{s_{i}}{\kmi})^{m^{\rm S}_{i}}}
     {k^-_{\rm cat} \prod_{i} (\frac{p_{i}}{\kmi})^{m^{\rm P}_{i}}}
= 
\frac{\prod_{i} s_{i}^{m^{\rm S}_{i}}}{\prod_{i} p_{i}^{m^{\rm P}_{i}}}
\cdot
\frac{k^+_{\rm cat} \prod_{i} \kmi^{m^{\rm P}_{i}}}
     {k^-_{\rm cat} \prod_{i} \kmi^{m^{\rm S}_{i}}}
=
\frac{\prod_{i} s_{i}^{m^{\rm S}_{i}}}{\prod_{i} p_{i}^{m^{\rm P}_{i}}}
/ K_{\rm eq}
=
\e^{-\Theta}.
\end{eqnarray}
Thus we can obtain the general factorized rate law:
\begin{eqnarray}
  \label{eq:GeneralRateLawMulti}
  v =  \enzyme\, \frac{k^+_{\rm cat} \prod_{i} (\frac{s_{i}}{\kmi})^{m^{\rm S}_{i}} - 
                         k^-_{\rm cat} \prod_{i} (\frac{p_{i}}{\kmi})^{m^{\rm P}_{i}}}
                        {D(s_1, s_2, ..,p_1, p_2, ..)}
     = \enzyme\, \kcatplus\, 
       \underbrace{\left(1 - \e^{-\Theta}\right)}_{\eta^{\rm enr}} \,
       \underbrace{\frac{\prod_{i} (s_{i}/\kmi)^{m^{\rm S}_{i}}}
                        {D(s_1, s_2, .., p_1, p_2, ..)}}_{\eta^{\rm sat} \cdot \eta^{\rm reg}}.
\end{eqnarray}

\section{Enzyme cost functions}
\label{sec:SIListOfScores}

To quantify enzyme cost, we assume it is proportional to the
concentration of that enzyme. Potentially, each enzyme level can be
weighted by different enzyme-specific costs.  Are such cost weights
biologically justified? We now discuss the relevance of these costs
and show how the linearity assumption, combined with separable rate
laws, yields simple factorized enzyme cost functions.

\subsection{What factors determine the cost per enzyme molecule?}
\label{sec:SIcostFactors}

In ECM, we assume that cells realize their metabolic fluxes at a
minimal enzyme cost and that this cost is a direct function of the
enzyme levels.  We further assume that the cost function is linear,
i.e.  $\hminus(\enzyme_1, \enzyme_2, ..) = \sum_l {\hminus_{\enzyme_l}
  \enzyme_l}$.  The values of the enzyme-specific costs
$\hminus_{\enzyme_l}$ depend on the biological context. For instance,
cast can be defined by a growth deficit caused by enzyme
over-expression.  In microbes, such cost values can be measured using
standard lab techniques for measuring growth rate. A disadvantage of
this approach is the difficulty to disentangle the cost of the
specific over-expressed enzyme from other effects that the enzyme
could have on the metabolic network at large, most importantly the
potential benefit of increasing the flux in the reaction it catalyzes.
Theoretical approaches, on the other hand, can be used to isolate
enzyme cost from global effects, but may not capture the many possible
ways in which growth deficits are caused in reality.  Aside from the
resources required for production and maintenance, enzymes need to
compete with other proteins and macromolecules for the limited space
in the cytoplasm or on membranes \cite{zhvm:11, digs:11}. The
restriction can be related to the volume of the protein, the occupied
membrane surface area, and the effect it has on the osmotic pressure
(which depends on electro-static interactions with the surrounding
water). Therefore, protein cost is a complex function of the
copy-number of the enzyme, its physico-chemical parameters (such as
molecular weight, 3D structure, hydrophobicity, charge, etc.), and its
production or degradation rate. Furthermore, enzymes can have adverse
side effects, e.g., by promiscuous activity \cite{eako:12}, which are
virtually impossible to predict without extensive knowledge about an
organism's full metabolic network and physiology.  Since many of these
features are unknown for most enzymes, and some of these effects
require elaborate 3D models, which are beyond the scope of this work,
we try to define a cost function that is simple to calculate, but
captures many of these biological aspects.  To determine relative
cost weights, we may simply assume that enzyme cost is proportional to
enzyme mass. Since we only use total cost as an optimization goal,
the problem is scale-free and therefore the relative cost weights are
enough for ECM.  We obtain a linear cost function with specific
costs $\hel \sim L_l$, where $L_l$ is a measure of protein size (e.g.,
length in amino acid units or mass in Daltons). This may be relevant,
in particular, for protein complexes or for lumped reactions
representing entire pathways.  This formula for protein cost weights
can be extended by other factors:
\begin{itemize}
\item \textbf{Degradation rate and protein lifetime} To account for
  differences in protein degradation, we can assume that enzyme cost
  is proportional to the enzyme production rate (in units of amino
  acids or Daltons per second). We define the lifetime of the enzyme
  as $\tau_l = (\kappa_l+\lambda)\inv$ (where $\lambda$ is the growth
  rate $\kappa_l$ is the degradation rate\footnote{Protein lifetimes
    may systematically between types of reactions catalyzed.  Enzymes
    catalyzing oxidation reactions are likely to accumulate damages
    faster and can be expected to have shorter lifetimes.}), and
  therefore the cost would be $\hel \sim L_l/\tau_l$.  When the cell
  growth rate is much faster than the degradation rate, $\kappa_l \ll
  \lambda$, all enzymes have approximately the same lifetime and
  therefore the effect of protein degradation would be negligible.
\item \textbf{Individual amino acid costs} Enzymes show different
amino acid
  compositions, and different amino acids
  require different amounts of energy for their production. If enzyme
  cost is mainly due to investments in amino acid production, we can
  quantify the energetic and material costs of individual amino acids
  \cite{akgo:02} and account for them in our choice of enzyme cost
  weights. We did this in our calculations, but other cost functions,
  in which amino acid composition is neglected, lead to similar
  predictions of enzyme levels.
\item \textbf{Enzyme complexes with multiple subunits and catalytic
  sites} An enzyme may consist of several protein subunits and may
  contain several catalytic sites. Therefore, we adopt the
  convention that $\kcat$ values refer to catalytic sites, while
  protein levels refer to protein subunits. The number $N_{\rm sub}$
  of complex subunits and the number $N_{\rm cat}$ of catalytic sites
  per complex must appear in the formulae for reaction rates, enzyme
  demand, and enzyme cost: we replace in all these formulae the
  $\kcatplus$ value (referring to a single catalytic site) by an
  effective value ${\kcatplus}'=\frac{N_{\rm cat}}{N_{\rm sub}}\,\kcatplus$
  (referring to one subunit).
\item \textbf{Covalent modification} Enzyme activity can  be changed
  by phosphorylation or other posttranslational modifications.  So
  far, we assumed that enzymes exist in one form and that the enzyme
  level $\el$ represents their concentration. For modifiable enzymes,
  our variable $\el$ describes the concentration of enzyme molecules
  \emph{in the right modification state}, which is only a fraction
  $\rho_l<1$ of the total concentration. Since, the total enzyme
  concentration is $1/\rho_l$ times as large as the concentration
  $\el$ appearing in the rate law, the enzyme cost weight $\hel$
  must be increased by this factor $1/\rho_l$.
\item \textbf{Constrained enzyme levels} The enzyme amounts in cells
  are restricted by physical constraints (e.g.~space restrictions on
  mitochondrial membranes, which limit the number of respiration
  complexes). In ECM, this could be described by imposing upper
  limits on sums of enzyme levels in the cell, in cell compartments,
  or within membranes. As a heuristics, such constraints can also be
  replaced by cost terms that penalize high levels of these
  enzymes\footnote{To justify such cost terms mathematically, one
    could first consider a model with constraints on some enzyme
    fractions. These constraints could be treated by Lagrange
    multipliers, which lead to effective cost terms in the objective
    function. In these terms, the Lagrange multipliers appear as if
    they were enzyme cost weights.  Replacing these Lagrange
    multipliers by constant numbers, we obtain effective linear cost
    terms which we can add to our cost function.}.
\item \textbf{Lumped reactions} In a model, series of reactions can be
  represented by lumped reactions.  Effective parameters ($h$ and
  $\kcatplus$ values) for lumped reactions can be obtained as
  described in \ref{sec:SIlumped}.
\item \textbf{Absolute scaling} Beyond ECM, some applications require
  an absolute scaling of the cost function, i.e. the cost must be in
  units that are comparable to other factors that affect fitness, such
  as the biomass flux or the growth rate.  This absolute scaling can
  be determined based on experimental data, for instance, by matching
  measured growth deficits for GFP \cite{szad:10}.  Alternatively, we
  can convert the other fitness terms to units of enzyme mass, e.g. by
  quantifying how the biomass flux generates the amino acids that are
  eventually used to synthesize the enzymes.
\item \textbf{Convex non-linear cost functions} Finally, if a
  nonlinear cost function $\hminus(\enzymev)$ is used, the total cost
  of a pathway is not simply a sum over the reactions' enzyme
  costs. Instead, a high cost in one enzyme could increase the cost
  pressure on other enzymes. Nevertheless, if the cost functions is
  convex, the total cost remains a convex function on the metabolite
  polytope, so numerical optimization stays feasible.
\end{itemize}

\subsection{Enzyme cost as a function of metabolite levels}
\label{sec:SIfscScores}

The enzyme cost of a given metabolic flux profile can be cast as a
function $\enzymemetcost(\ln \cv)$ on the metabolite polytope.  To
obtain simple cost functions, we consider the factorized enzyme cost
Eq.~(\ref{eq:TotalEnzymeDemand}) and approximate some of the terms by
constant numbers. Constant values of 1 arise from limiting cases: an
infinite driving force leads to an energy factor of 1, and if enzymes
are fully substrate-saturated and product concentrations are small ($a
\gg k^{\rm M}_{a}$, $b \ll {\km}_{b}$), the saturation factor can be
set to 1. To approximate the true cost function, we can start from the
most simple estimate (EMC1) and subsequently reintroduce the different
efficiency factors.  The enzyme
cost functions can be grouped, according to the data required, into
five levels (see Tables \ref{tab:EnzymeCostScores} and
\ref{tab:formulae}):

\begin{table}[t!]
\parbox{10cm}{\begin{tabular}{lll}
    \cellcolor{brown} \textbf{Function} & \cellcolor{brown} \textbf{Denominator}
    & \cellcolor{brown} \textbf{Rate law $\ratelaw(\cv)$} \\ \hline\\[-2mm]
    EMC0         & 1 & $\const$ \\
    EMC1         & 1 & $\kcatplus$\\
    EMC2s      & $D^{\rm S} = S$    &  $\kcatplus \eta^{\rm enr}$\\
    EMC2sp     & $D^{\rm SP} = S+P$ &  $\kcatplus\frac{\eta^{\rm enr}}{1+\keq\,\e^{- \Theta}}$\\
    EMC3s     & $D^{\rm 1S}=1+S$   &  $\kcatplus\,\frac{\eta^{\rm enr}\,S}{1 + S}$\\
    EMC3sp    & $D^{\rm 1SP}=1+S+P$ &  $\kcatplus\frac{S\,\eta^{\rm enr}}{1 + S + P}$\\
    EMC4cm     & $D^{\rm CM}=S^{\rm CM}+P^{\rm CM}-1$ &  $\kcatplus\frac{S\,\eta^{\rm enr}}{S^++P^+-1}$\\
    EMC4geom   & $D^{\rm geom}$      &  $\kcatplus\frac{S\,\eta^{\rm enr}}{\sqrt{D^{\rm CM}\,D^{\rm 1SP}}}$\\
    EMC4arith  & $D^{\rm arith}$     &  $\kcatplus\frac{S\,\eta^{\rm enr}}{\half(D^{\rm CM} + D^{\rm 1SP})}$\\
    EMC4         & $D(\cv)$         &  $\kcatplus\frac{S\,\eta^{\rm enr}}{D(\cv)}$
  \end{tabular}}
\parbox{6cm}{\begin{tabular}{ll}
    \cellcolor{brown} \textbf{Quantity} & \cellcolor{brown} \textbf{Formula} \\\hline
    \\[2mm] 
    Reaction rate & $v = \enzyme\, \ratelaw$\\[2mm]
    Enzyme demand & $\enzyme = v/\ratelaw$\\[2mm]
    Enzyme cost & $\enzymemetcost = h\,\enzyme =  \frac{h\, v}{\ratelaw}$ \\[2mm]
    Flux-specific cost & $\rrs = \frac{\enzymemetcost}{v} = \frac{h}{\ratelaw}$
  \end{tabular}\ \\[2.2cm]}
  \caption{Rate laws and enzyme-based metabolic cost (EMC) 
functions. First
    table: EMC functions derived from simplified rate
    laws. Abbreviations: Forward catalytic constant
    $\kcatplus$. Energy efficiency $\eta^{\rm enr} =
    1-\e^{-\Theta(\cv)}$. Mass-action denominator terms $S= \prod_i
    (s_i/\kmi)^{m^{\rm S}_i}$, $P= \prod_j (p_j/\kmj)^{m^{\rm P}_j}$;
    convenience denominator terms $S^{\rm CM}= \prod_i
    (1+s_i/\kmi)^{m^{\rm S}_i}$; $P^{\rm CM}= \prod_j
    (1+p_j/\kmj)^{m^{\rm P}_j}$. The denominator $D(\cv)$ in the EMC4
    function is a polynomial with non-negative coefficients as in
    Eq.~(\ref{eq:DEMC4}); it can also contain terms describing
    allosteric regulation. The molecularities $m^{\rm s}$ and $m^{\rm
      p}$ represent stoichiometric coefficients, but they can contain
    reaction-specific Hill coefficients as prefactors.  The second
    table lists some quantities derived from the rate laws.  }
  \label{tab:formulae}
\end{table}

\begin{itemize}
\item \textbf{EMC0 (``sum of fluxes'')} If no enzyme parameters are known
  at all, we can assume the same flux-specific cost $\rrs$ for all
  enzymes. Enzyme levels and enzyme costs are proportional to fluxes
  across the network: $\el \sim \enzymemetcostl \sim v_l$, and enzyme
  cost is proportional to the sum of fluxes (where fluxes are positive
  due to our convention about reaction orientations).
\item \textbf{EMC1 (``capacity-based'')} In the capacity-based (EMC1)
  functions, enzymes have individual specific flux costs ${\rrl} =
  \hel/\kcatplusl$ (based on known $\kcatplus$ and $h$ values) and are
  independent of metabolite levels. This is equivalent to replacing
  reaction rates $v$ by $v_{\rm max}$ values, or dropping the
  efficiency factors in
  Eq.~(\ref{eq:TotalEnzymeDemand}). Alternatively, we can set each
  factor to a constant, enzyme-specific value.
\item \textbf{EMC2 (``energy-based'')} The energy-based (EMC2)
  functions capture the fact that cost increases close to
  equilibrium. They depend on metabolite levels, but only via the
  driving forces only, and equilibrium constants need to be known for the
  calculation. In the EMC2s function, we assume that enzymes are
  strongly substrate-saturated while product saturation is negligible:
  the denominator $D^{\rm S}$ (see Eq.~(\ref{eq:EMC2ratelaws}))
  cancels the numerator term, resulting in a constant saturation
  factor $\eta^{\rm sat}=1$ (i.e., full substrate saturation).  The
  EMC2sp function, another energy-based function, describes enzymes
  with strong substrate and product saturation (denominator $D^{\rm
    SP}$).  Here the saturation factor
\begin{eqnarray} \eta^{\rm sat}
  &=& \frac{\prod_{i} s_{i}/{\kmi}} {\prod_{i} s_{i}/{\kmi}+ \prod_{j}
    p_{j}/{\kmj}} = \frac{1} {1+ \frac{\prod_{j}
      p_{j}/{\kmj}}{\prod_{i} s_{i}/{\kmi}}} = \frac{1} {1+
    \keq\,\e^{-\Theta}} \end{eqnarray} is not a constant, but it
depends on the driving force.  Since the rate can be computed from
driving forces alone, the EMC2sp function is claissified as
``energy-based''.  \item \textbf{EMC3 (``saturation-based'')} The
saturation-based (EMC3) functions represent rate laws with the
denominators $D^{(1S)}$ and $D^{(1SP)}$.  These are rate laws that do
not depend on metabolite levels, but on their mathematical products,
the mass-action terms. To compute them, the $\km$ values (more
precisely, $\km$ values multiplied over all substrates or products)
must be known.  The EMC3sp function follows from the direct-binding
rate law or, for unimolecular reactions, from reversible
Michaelis-Mention kinetics.  The EMC3s function has a similar form,
but contains no product term.  It describes enzymes with incomplete
substrate saturation, but far from equilibrium ($\Theta \rightarrow
\infty$), so the product term can be neglected.  The energy factor
$\eta^{\rm enr}$ can be set to 1, but the factor
\begin{eqnarray}
\eta^{\rm sat} &=& \underbrace{\frac{\prod_i (s_i/\kmi)^{m^{\rm
        S}_i}}{1+\prod_i (s_i/\kmi)^{m^{\rm S}_i}}}_{D^{\rm 1S}}
\end{eqnarray}
remains an explicit term in the rate law.

\item \textbf{EMC4 (``Kinetics-based'')} The kinetics-based cost
  functions (EMC4) capture all thermodynamically feasible rate laws,
  including rate laws with allosteric activation or inhibition terms.
  Their denominators have the form Eq.~(\ref{eq:DEMC4}) and contain
  the terms from the $D^{\rm 1SP}$ denominator, plus others.  Examples
  are rate laws with $D^{\rm CM}$, $D^{\rm geom}$, and $D^{\rm arith}$
  denominators.
\end{itemize}

Here are some additional remarks.
\begin{itemize}
\item To obtain simplified EMC functions, we can apply the following
  simplifications: (i) neglect individual enzyme cost weights $h$;
  (ii) neglect individual catalytic constants $\kcatplus$; (ii) set
  $\eta^{\rm enr}$ to a constant value; (iii) set $\eta^{\rm sat}$ to a
  constant value. (iv) If $\eta^{\rm sat}$ is not set constant, (iv.a)
  use/do not use the term 1 in denominator; (iv.b) use/do not use
  highest-order product term in denominator; (iv.c) with more
  reactants: use/do not use additional terms; (v) if the enzyme is
  allosterically regulated: possibly, set regulation term
  constant. These simplifications can be freely combined. Whether a
  cost function is classified as EMC0, EMC1, EMC2, EMC3, or EMC4
  depends can be determined from its formula.
\item Different EMC functions require different types of input data:
  $\kcatplus$ values for EMC1; additionally equilibrium constants (or
  standard reaction Gibbs energies) for EMC2; additionally, $\km$
  values for EMC3; and possibly, more parameters for EMC4.
\item If efficiency factors are set to 1 (and not to smaller constant
  values), each EMC function is a lower estimate of the following (less
  simplified) ones. This includes EMC0 function if we use the
  \emph{largest} $\rrlmin$ value from the other functions as a prefactor
  in the EMC0 function.
\item If a rate law contains Hill coefficients, they can be treated as
  part of the molecularities. The reactants of a reaction must have
  the same Hill coefficient.
\item Given two possible rate laws for the same reaction, we may
  define new rate laws by taking their geometric or harmonic mean
  (this is, for instance, how the rate law denominators $D^{\rm
    (geom)}$ or $D^{\rm arith}$ were defined). In this case, the
  enzyme demands and costs (for the new rate law) are given by the
  geometric or arithmetic means from the original rate laws. In
  particular, the enzyme cost functions related to the denominators
  $D^{\rm geom}$ or $D^{\rm arith}$ (called EMC4geom and
  EMC4arith) represent the geometric (or arithmetic) mean of the
  original EMC3sp and EMC4cm functions.
\end{itemize}

\subsection{Flux-specific enzyme cost and pathway-specific activity}

\myparagraph{\ \\Flux-specific enzyme cost for enzymes and pathways}
\label{sec:SIpathwayCost}
The flux-specific cost $\rrl$ of an enzyme, in the context of a
certain metabolic state, is defined as the enzyme cost per unit flux.
At given metabolite levels, and assuming a linear cost function, the
flux-specific cost is a constant. At constant metabolite levels, a
doubling of the flux will require a doubling of the enzyme level, and
thus a doubling of the enzyme cost. The ratio of cost and flux remains
constant and is given by Eq.~(\ref{eq:TotalEnzymeDemand}).  A
flux-specific cost can also be defined for pathways or any sets of
reactions.  Since different reactions may carry different fluxes a
pathway (due to non-stationarity, side branches, or splitting of
molecules as between upper and lower glycolysis), we choose one flux
or production rate as the representative \emph{pathway flux} $\vPW$
and define the the \emph{pathway specific cost} by $\rrpw =
\frac{\enzymemetcost}{\vPW}$, i.e., the pathway enzyme cost divided by
the pathway flux.  In practice, the pathway flux should represent a
flux that matters for the cell's benefit (e.g., ATP production in a
glycolysis model). To compare different pathway models at identical
benefits, we could scale their fluxes to the same benefit value.
Given fixed metablite levels at the pathway boundaries\footnote{In a
  kinetic model with constant external metabolite levels, an overall
  scaling of enzyme levels will lead to a proportional scaling of
  fluxes.  With a linear enzyme cost function $\hminus(\enzyme_1,
  \enzyme_2, ..)$), this scaling will leave all flux-specific costs
  unchanged.  In reality, a change in enzyme levels is likely to
  affect metabolite levels outside the pathway, so the theoretical
  result does not exactly apply.} the flux-specific cost of a pathway
will be constant.  If all reactions in a pathway carry identical
fluxes, it is given by the sum of the reactions' flux-specific costs.
Otherwise, the pathway specific cost will be a weighted sum $\sum_{l }
\rrl' \,\vscaled_{l}$ of the reaction flux-specific costs, with
unitless relative fluxes $\vscaled_{l} = \frac{v_{l}}{v_{{\mathcal
      L}}}$ as weights (in a simple linear chain, $\vscaled_{l}=1$).

\myparagraph{Pathway specific activities}
\label{sec:SIpsa}
An enzyme's specific activity is given by the catalyzed flux divided
by the enzyme mass (in $\mu$mol/min/mg enzyme).  Specific activities
can also be defined for entire pathways \cite{bnlm:10}.  If we treat
enzyme mass (in grams/cell volume) as the cost function
$\hminus(\enzymev)$, the resulting flux-specific cost $\rrs$ (enzyme
cost per flux) is exactly the inverse of the specific activity (flux
per enzyme mass)\footnote{In this case, an  enzyme's cost weight $\hel$
will be given by the enzyme's total mass (in grams/cell volume),
divided by the concentration (number of enzyme molecules divided by
Avogadro constant and cell volume), so $\hel$ is just the enzyme's
molecular mass in Daltons.}  . This holds both for single reactions
and entire pathways.  With enzyme mass used as a cost function, a
pathway's specific cost $\rrspw = \enzymemetcost/\vPW$ yields
the amount of enzyme (in grams/cell volume) divided by the pathway
flux (in mM/s) or, in other words, the amount of enzyme (in grams)
divided by the pathway flux (in mol/s). Accordingly, the pathway
specific activity (in (mol/s)/grams enzyme) is given by
\begin{eqnarray}
\activity_{\rm pw} = \frac{\vPW}{\enzymemetcost} = \rrs\inv.
\end{eqnarray}
To express this in units of $\mu$mol/min/mg enzyme, we multiply by
$60000$.  Since the pathway specific cost is a weighted sum of
enzyme specific costs
\begin{eqnarray}
 \rrspw &=&  \sum_l \vscaled_l\, \rrl,
\end{eqnarray}
the pathway specific activity $\activity_{\rm pw}$ (referring to the
pathway flux $v_{\rm pw}$) is the weighted harmonic sum of the enzyme
specific activities $\activity_l$
\begin{eqnarray}
 \activity_{\rm pw} &=& \left[ \sum_l \vscaled_l\,  \activity_l\inv \right]\inv
\end{eqnarray}
where the $\vscaled_l=v_l/v_{\rm pw}$ are scaled (unit-less) fluxes.
This formula agrees with the formula given in \cite{bnlm:10} and
allows for non-uniform fluxes along the pathway.

\section{Enzyme cost minimization}

\subsection{Parameterizing the metabolic states of a kinetic model}
\label{sec:SIproofUniqueMapping}

The standard practice in kinetic modeling is to set up an ODE system
where enzyme levels $\enzymev$ are given (typically, due to separation
of time scales they are assumed to be fixed) and  metabolite levels
$\cv(t)$ are the free variables which evolve over time.  The
kinetic model describes the relationship between enzymes and
metabolites (via kinetic rate laws), which in turn affect the
metabolites.
\begin{eqnarray}
\dot{\cv} &=& \Nmat \vv \nonumber \\
\vv &=& \vv(\enzymev, \cv).
\end{eqnarray}
In order to find a steady state, the ODE is integrated over time
until $||\dot{\cv}||$ is small enough to be labeled as stationary.
Then we can say the system is in steady state and determine the flux
and metabolic state $(\cv(\infty), \vv(\infty))$.  In many cases, the
steady state will depend on the choice of initial conditions
$\cv(0)$. We thus define the set of all steady states as ${\mathcal S}
= \{(\cv(\infty), \vv(\infty), \enzymev)\}_{\enzymev, \cv(0)}$.  Using
this representation, determining ${\mathcal S}$ requires an exhaustive
scan of all parameters $\enzymev, \cv(0)$, which can be  time-consuming, 
and virtually impossible for large kinetic networks.
Here, we suggest an alternative representation of steady states which
is computationally simple, and is especially useful for certain types
of optimization problems.  Instead of enzyme levels as parameters, we
use the steady-state fluxes.  Then, for each given steady state,
metabolite levels $\cv$ we can derive the enzyme levels, using the
inverted kinetic rate laws discussed in the previous sections
(essentially, the value of $\enzyme$ in the EMC function).  Therefore,
we can redefine the set of steady states as ${\mathcal S} = \{(\cv,
\vv, \enzymev(\vv, \cv))\}_{\vv, \cv}$ -- where $\cv$ and $\vv$
correspond to the steady state values, like $\cv(\infty)$ and
$\vv(\infty)$ in the previous definition.  This representation of
steady states has a number of practical advantages.

\begin{proposition}
\label{setequality} \textbf{Set of metabolic states} 
Consider a kinetic model with rate laws $v_{l}=
\el\,\ratelaw_{l}(\cv)$, thermodynamically consistent rate constants
(see SI \ref{sec:SIratelaws}), a feasible flux profile $\vv$,
and bounds on metabolite levels. For any feasible 
metabolite profile $\ln \cv \in \metabolitepolytope$ there is a unique set
of enzyme levels $\el$ which realizes $\cv$.
The function $\el(\ln \cv) = v_{l}/\ratelaw_{l}(\cv)$ is differentiable
on the metabolite polytope.
\end{proposition}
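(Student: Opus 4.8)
The plan is to exploit the fact that the rate law $v_{l}=\el\,\ratelaw_{l}(\cv)$ is \emph{linear} in the enzyme level $\el$, while the catalytic rate $\ratelaw_{l}$ depends on $\cv$ alone. Hence, for each reaction separately, inverting the rate law is trivial provided $\ratelaw_{l}(\cv)\neq 0$: the unique solution is $\el=v_{l}/\ratelaw_{l}(\cv)$, and because $\ratelaw_{l}$ does not couple to the other enzyme levels, uniqueness of the whole vector $\enzymev$ follows component by component. The entire proposition therefore reduces to two claims: (i) $\ratelaw_{l}(\cv)$ never vanishes on the polytope and carries the same sign as $v_{l}$, so that $\el$ is a well-defined nonnegative enzyme level; and (ii) the resulting map $\cv\mapsto\el$ is differentiable.

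For step (i) I would use the factorization $\ratelaw_{l}(\cv)=\kcatplusl\,\eta^{\rm th}_{l}(\cv)\,\eta^{\rm kin}_{l}(\cv)$ from Eq.~(\ref{eq:factorised}) and argue that each factor is strictly positive in the interior of $\metabolitepolytope$. The constant $\kcatplusl>0$ by definition. The kinetic factor $\eta^{\rm kin}_{l}$ is a ratio of a substrate mass-action term and a denominator $D$, both of which are sums of strictly positive monomials in the concentrations $c_{i}=\e^{s_{i}}$, hence strictly positive. The thermodynamic factor $\eta^{\rm th}_{l}=1-\e^{-\Theta_{l}}$ has the same sign as the driving force $\Theta_{l}$, and $\Theta_{l}(\sv)$ is an affine function of $\sv$ that is strictly positive precisely in the interior of the polytope, since the E-face constraints were defined to enforce $\Theta_{l}\cdot v_{l}>0$ (with all fluxes positive by the orientation convention). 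Multiplying the three positive factors gives $\ratelaw_{l}(\cv)>0$, so $\el=v_{l}/\ratelaw_{l}(\cv)>0$ is the unique enzyme level realizing the flux.

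For step (ii), differentiability follows by composition once non-vanishing of the denominator is in hand: $s_{i}\mapsto c_{i}=\e^{s_{i}}$ is smooth, $\Theta_{l}$ is affine in $\sv$, and $D$ is a polynomial in $\cv$ that stays strictly positive on the polytope, so $\ratelaw_{l}$ is a smooth, nowhere-zero function of $\sv$ in the interior; therefore $\el=v_{l}/\ratelaw_{l}(\cv)$ is differentiable (in fact $C^{\infty}$). The main obstacle is really the boundary behavior rather than the interior: on an E-face one has $\Theta_{l}=0$, hence $\eta^{\rm th}_{l}=0$ and $\ratelaw_{l}=0$, so $\el\to\infty$ and the clean inversion breaks down. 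The delicate point is thus to pin down that the thermodynamic constraints defining $\metabolitepolytope$ are exactly those that guarantee $\Theta_{l}>0$ strictly in the relative interior, so that the statement should be read as holding on the (relatively) open polytope with the E-faces excluded; away from those faces the argument above is essentially routine.
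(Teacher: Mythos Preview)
Your proposal is correct and follows essentially the same route as the paper's proof: the paper simply observes that feasibility of $\sv\in\metabolitepolytope$ means $\ratelaw_{l}(\cv)$ carries the same sign as $v_{l}$ (hence is nonzero), so $\el=v_{l}/\ratelaw_{l}(\cv)$ is positive and differentiable. Your factor-by-factor analysis of $\kcatplusl\,\eta^{\rm th}_{l}\,\eta^{\rm kin}_{l}$ is a more explicit unpacking of that same sign claim; and your boundary concern is already handled by the paper's definition of $\metabolitepolytope$ via the \emph{strict} inequality $\Theta_{l}\cdot v_{l}>0$, which excludes the E-faces from the polytope outright.
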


\textbf{Proof:} If a metabolite profile $\cv$ is feasible for our
 flux profile $\vv$, the catalytic rates $\ratelaw_{l}(\cv)$ obtained
from the rate laws Eq.~(\ref{eq:GeneralRateLawRate}) must have the
same signs as $v_{l}$, so $\el = v_{l}/\ratelaw_{l}(\cv)$ is positive
on the entire metabolite polytope.  In particular, we know that
$\ln \cv \in \metabolitepolytope \rightarrow \ratelaw_l(\cv) \neq 0$.
Since $\ratelaw_{l}(\cv)$ is differentiable and has a constant sign on
the metabolite polytope, $\el(\ln \cv)$ is differentiable on the
metabolite polytope.

Here are some additional remarks.
\begin{itemize}
\item \textbf{Metabolite profiles parameterize the possible states} 
  Proposition \ref{setequality} guarantees that all thermodynamically
  feasible metabolite profiles can be realized by steady states of the
  kinetic model. In other words, the set ${\mathcal S}$ of metabolic
  states for a given flux profile $\vv$ can be parameterized by the
  points of the metabolite polytope.  This means that the set of
  kinetically realizable metabolite profiles in a kinetic model
  depends on the equilibrium constants, but not on other
  enzyme-specific parameters.
\item \textbf{An enzyme profile need not uniquely determine the
  metabolite profile} In ECM, the same enzyme profile may be
  realizable by different metabolite profiles; this happens, in
  particular, if simplified rate laws are used. (i) If a metabolite
  appears in a model but has no impact on any reaction, its
  concentration can be freely varied, independently of fluxes or
  enzyme levels. (ii) With the EMC0 and EMC1 functions, enzyme levels do
  not depend on metabolite levels. (iii) With EMC2 functions, $\sv = \ln
  \cv$ can be varied along directions in the nullspace of
  $\Ntot\trans$ without affecting the driving forces or enzyme
  cost. These EMC2 functions, on the metabolite polytope, have an
  invariant subspace (namely the nullspace of $\Ntot\trans$).  Under
  what conditions EMC3 and EMC4 functions (without regularization terms)
  have unique optima remains an open question.
\end{itemize}

\subsection{Enzyme-based metabolic cost functions are convex on the metabolite polytope}
\label{sec:SIConvexCost}

The enzyme cost functions
  Eq.~(\ref{eq:TotalEnzymeDemand}) are convex  on
the metabolite polytope: the cost for a metabolite
log-concentration vector, interpolated between two vectors $\sv_{\rm
a}$ and $\sv_{\rm b}$, cannot be higher than the interpolated cost:
\begin{eqnarray}
\label{eq:ExplainConvexity}
  \forall \lambda \in [0, 1]: \enzymemetcost(\lambda \,\sv_{\rm a} +
  (1-\lambda)\,\sv_{\rm b}) \leq \lambda \, \enzymemetcost(\sv_{\rm a}) +
  (1-\lambda)\,\enzymemetcost(\sv_{\rm b}).
\end{eqnarray}
To show that all enzyme-based metabolic cost functions are convex, we
consider the most general rate law with denominator (\ref{eq:DEMC4}),
written in factorized form
\begin{eqnarray}
v &=&  \enzyme \cdot  \kcatplus \cdot  \eta^{\rm enr} \cdot \eta^{\rm sat},
\end{eqnarray}
where
\begin{eqnarray}
\label{eq:SIConvexEtaDefinitions}
\eta^{\rm enr} &=& 1 - \e^{-\Theta} = 1 - 
\exp \left(\frac{1}{RT}\,\Delta_{\rm r} {G^\circ}' + \sum_i n_i \ln c_i\right) \nonumber \\
\eta^{\rm sat} &=& 
\prod_i \left(\frac{s_i}{\kmi}\right)^{-m^{\rm S}_i}
\left(\sum_{k} M_{k} \prod_j c_{i}^{m_{ik}}\right)\inv
=\left(\sum_{k} \alpha_{k} \prod_j c_{i}^{a_{ik}}\right)\inv
\end{eqnarray}
 with coefficients $\alpha_k \in \Rset_+$ and $a_{ik}\in \Rset$.  The
 regulation factor $\eta^{\rm reg}$ need not be explicitly considered because it
 can  be included in the term $\eta^{\rm sat}$. With this rate law, the
 enzyme cost  for a pathway reads 
\begin{eqnarray}
\enzymemetcost_{\rm pw} = \sum_{l} \enzymemetcostl = \sum_{l} \frac{\hel\,v_{l}}{\kcatplusl} 
\cdot \frac{1}{\eta^{\rm enr}_{l}} \cdot\frac{1}{\eta^{\rm sat}_{l}}.
\end{eqnarray}
This function is convex on the
metabolite polytope.  For the proof, the cost function
$\hminus(\enzymev)$ need not be linear; if it is nonlinear, it must be
convex. For the proof, we start with some general lemmas.

\begin{lemma}\label{lemma:exp_cvx} The function
$f(y) = -\ln(1-\e^y)$ is convex in the range $y < 0$.
\end{lemma}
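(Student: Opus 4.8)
The plan is to show convexity of $f(y) = -\ln(1-\e^y)$ on $(-\infty, 0)$ by the standard second-derivative test: a twice-differentiable function on an interval is convex if and only if $f''(y) \geq 0$ throughout. So first I would compute $f'(y)$ and $f''(y)$ directly, and then verify that $f''(y) > 0$ for all $y < 0$.

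Concretely, writing $f(y) = -\ln(1-\e^y)$, the first derivative is
\begin{eqnarray}
f'(y) = -\frac{-\e^y}{1-\e^y} = \frac{\e^y}{1-\e^y}.
\end{eqnarray}
For the second derivative I would differentiate this quotient. Using the quotient rule on $\e^y/(1-\e^y)$, the numerator derivative is $\e^y(1-\e^y) - \e^y(-\e^y) = \e^y(1-\e^y) + \e^{2y} = \e^y$, so
\begin{eqnarray}
f''(y) = \frac{\e^y}{(1-\e^y)^2}.
\end{eqnarray}
For $y < 0$ we have $0 < \e^y < 1$, hence the numerator $\e^y$ is strictly positive and the denominator $(1-\e^y)^2$ is strictly positive (and finite, since $1-\e^y > 0$). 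Therefore $f''(y) > 0$ on the entire range $y < 0$, which establishes that $f$ is convex (in fact strictly convex) there.

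The computation is entirely routine, so there is no serious obstacle; the only point requiring a little care is to confirm that $f$ is well-defined and smooth on the open half-line — namely that $1 - \e^y > 0$ exactly when $y < 0$, so the logarithm's argument stays strictly positive and $f$ is $C^\infty$ on $(-\infty,0)$. This justifies differentiating twice and applying the second-derivative characterization of convexity. I expect this lemma is a building block for the main convexity proof: the substitution $y = -\Theta$ (with $\Theta > 0$, i.e. $y < 0$, on the interior of the metabolite polytope) will turn $-\ln \eta^{\rm enr} = -\ln(1-\e^{-\Theta})$ into $f(-\Theta)$, and since $\Theta$ is an affine function of the log-concentrations $\sv$, composing the convex $f$ with an affine map preserves convexity — which is presumably how the thermodynamic factor's contribution to $\ln \enzymemetcost$ is handled downstream.
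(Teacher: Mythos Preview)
Your proof is correct and takes essentially the same approach as the paper: both compute the second derivative $f''(y) = \e^y/(1-\e^y)^2$ and observe it is positive for $y<0$. Your version is simply more explicit, showing the intermediate step $f'(y)$ and spelling out why numerator and denominator are positive, whereas the paper states the second derivative and its sign in a single line.
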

\begin{proof} The second derivative 
\[\frac{\mbox{d}}{\mbox{d}y^2} f(y) = \frac{\e^y}{(1-\e^y)^2}\]  is positive for $y < 0$.
\end{proof}

\begin{lemma}\label{lemma:lse_cvx} The function 
$f(\sv) = \ln \sum_{k=1}^{n} \e^{s_k}$ is convex.
\end{lemma}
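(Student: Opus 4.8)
The plan is to prove that $f(\sv) = \ln \sum_{k=1}^{n} \e^{s_k}$ is convex by showing that its Hessian is positive semidefinite everywhere. First I would compute the gradient and Hessian explicitly. Writing $Z = \sum_{k} \e^{s_k}$ for the normalizing sum and introducing the abbreviations $p_k = \e^{s_k}/Z$, the first derivatives are
\begin{eqnarray}
\frac{\partial f}{\partial s_i} = \frac{\e^{s_i}}{Z} = p_i,
\end{eqnarray}
so the gradient is the probability vector $\pv = (p_1, \dots, p_n)$ with $p_k \ge 0$ and $\sum_k p_k = 1$. Differentiating once more gives the Hessian entries
\begin{eqnarray}
H_{ij} = \frac{\partial^2 f}{\partial s_i \partial s_j} = \delta_{ij}\, p_i - p_i\, p_j,
\end{eqnarray}
that is, $\Hmat = \diag(\pv) - \pv\,\pv\trans$.

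The key step is then to verify that this Hessian is positive semidefinite. For an arbitrary vector $\zv \in \Rset^n$ I would form the quadratic form and recognize it as a variance:
\begin{eqnarray}
\zv\trans \Hmat\, \zv = \sum_i p_i\, z_i^2 - \left(\sum_i p_i\, z_i\right)^2 = \mean{z^2} - \mean{z}^2,
\end{eqnarray}
where $\mean{\cdot}$ denotes the expectation under the distribution $\pv$. Since this is exactly the variance of $z$ under $\pv$, it is nonnegative for every $\zv$ (and one can see it directly via the Cauchy--Schwarz inequality applied to $\sqrt{p_i}$ and $\sqrt{p_i}\,z_i$). Hence $\Hmat \succeq 0$, and a function with positive semidefinite Hessian on the convex domain $\Rset^n$ is convex, which completes the argument.

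I do not anticipate a genuine obstacle here, since the log-sum-exp function is a textbook example and the computation is short. The only point requiring mild care is the final inequality $\zv\trans\Hmat\,\zv \ge 0$: rather than quoting it, I would justify it cleanly by the variance interpretation (or equivalently Cauchy--Schwarz), which makes the nonnegativity transparent and avoids any appeal to eigenvalue estimates. An alternative route that sidesteps the Hessian entirely is to invoke Hölder's inequality to establish convexity directly from the definition, but the second-derivative computation is more self-contained and dovetails with the style of the preceding Lemma, whose proof also proceeded by a sign check on a second derivative.
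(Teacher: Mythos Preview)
Your proposal is correct and follows essentially the same route as the paper: compute the Hessian of log-sum-exp and show it is positive semidefinite by checking the sign of the quadratic form. The only cosmetic difference is notation---you normalize to probabilities $p_i = \e^{s_i}/Z$ and read the quadratic form as a variance, whereas the paper keeps the unnormalized $c_i=\e^{s_i}$ and invokes Cauchy--Schwarz directly; these are the same inequality.
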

\begin{proof}

\[\nabla^2 f(\sv) = \frac{\diag(\cv) ({\bf 1}\trans \cv) - \cv\,\cv\trans}{({\bf 1}\trans \cv)^2} ~~~~~~~~ (\text{where } c_i = e^{s_i})\]

\[\forall \uv:~~ \uv \trans \nabla^2 f(\sv) \uv = 
  \frac{(\sum_i c_i u_i^2)(\sum_i c_i)- (\sum_i u_i c_i)^2}{(\sum_i c_i)^2} \ge 0
\]
since $(\sum_i u_i\, c_i)^2 \le (\sum_i c_i\, u_i^2)(\sum_i c_i)$ from
the Cauchy-Schwarz inequality. Therefore, the Hessian $\nabla^2 f(\sv)$ is
positive semi-definite, which proves that  $f(\sv)$ is convex.

\end{proof}

\begin{lemma}\label{lemma:th_cvx}
For any number  $\nu \in \mathbb{R}_+$ and vector $\nv \in \mathbb{R}^m$, the
function $- \ln (1 - \nu\, \e^{\nv\cdot \sv} )$ is convex over $\{\sv \in
\mathbb{R}^m~|~\nu\, \e^{\nv \cdot \sv} < 1\}.$
\end{lemma}
\begin{proof}
 This function is a composition of $f = -\ln(1-\e^y)$ from Lemma
 \ref{lemma:exp_cvx} with the affine transformation $y = \nv \cdot \sv
 + \ln{\nu}$, an operation which preserves convexity.
\end{proof}

\begin{lemma}\label{lemma:kin_cvx}
For any matrix ${\bf A} \in \mathbb{R}^{n \times m}$ and vectors ${\bf
  b} \in \mathbb{R}^{n}_+$, the following function is convex over
${\bf x} \in \mathbb{R}^{m}$:
\begin{eqnarray}
\ln \left( \sum_{k=1}^{n} \e^{\av_{k}\cdot\sv + b_k} \right)
\end{eqnarray}
where $\av_{i}$ is the $i$th row of $\bf A$.
\end{lemma}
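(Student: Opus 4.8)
The plan is to recognize the stated function as a composition of the log-sum-exp function from Lemma~\ref{lemma:lse_cvx} with an affine transformation, in direct analogy to how Lemma~\ref{lemma:th_cvx} was obtained from Lemma~\ref{lemma:exp_cvx}. This is the natural next rung: Lemma~\ref{lemma:lse_cvx} handles the bare log-sum-exp in the coordinates $s_k$, and here we simply allow those coordinates to be arbitrary affine functions of the underlying variable $\sv$.

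First I would observe that each exponent $\av_{k}\cdot\sv + b_k$ is an affine function of $\sv$, so collecting them into a single vector defines the affine map $\yv = {\bf A}\,\sv + {\bf b}$ from $\mathbb{R}^m$ to $\mathbb{R}^n$, whose $k$th component is precisely $y_k = \av_{k}\cdot\sv + b_k$. Under this substitution the target function equals $f({\bf A}\,\sv + {\bf b})$, where $f(\yv) = \ln \sum_{k=1}^{n} \e^{y_k}$ is exactly the function shown to be convex in Lemma~\ref{lemma:lse_cvx}. The remaining step is to invoke the standard fact that convexity is preserved under pre-composition with an affine map: if $g$ is convex and $T$ is affine, then $g\circ T$ is convex, since $T$ sends a convex combination of points to the same convex combination of their images, and one then applies the convexity inequality for $g$. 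Combining this with Lemma~\ref{lemma:lse_cvx} yields convexity of $\sv \mapsto f({\bf A}\,\sv + {\bf b})$ on all of $\mathbb{R}^m$.

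I expect no real obstacle here: the only content is identifying the affine map and applying the stability of convexity under affine substitution, both of which are routine and require no computation with Hessians (that work having already been done in Lemma~\ref{lemma:lse_cvx}). I would note in passing that the sign restriction ${\bf b} \in \mathbb{R}^n_+$ plays no part in the argument — the identical reasoning gives convexity for any ${\bf b} \in \mathbb{R}^n$ — so the positivity assumption reflects the intended application, where $b_k = \ln \alpha_k$ records the logarithm of a positive rate-law coefficient from Eq.~(\ref{eq:SIConvexEtaDefinitions}), rather than any requirement of the proof itself.
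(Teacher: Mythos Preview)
Your proposal is correct and matches the paper's own proof essentially verbatim: the paper also identifies the function as the composition of the log-sum-exp function from Lemma~\ref{lemma:lse_cvx} with the affine map $s_i = \av_i\cdot\sv + b_i$ and invokes preservation of convexity under affine substitution. Your additional remark that the positivity of ${\bf b}$ is unnecessary for the argument is a valid observation not made explicit in the paper.
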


\begin{proof}
 This function is a composition of $f = \ln \sum_{i=1}^{n} \e^{s_i}$
 from Lemma \ref{lemma:lse_cvx} with the affine transformation $s_i =
 \av_{i}\cdot\sv + b_i$, an operation which preserves convexity.
\end{proof}

Based on these lemmas, we can now  prove the convexity of enzyme cost functions.

\begin{lemma}
Assume that all enzyme-catalysed reactions in a model behave according
to rate laws of the type
\begin{eqnarray}
v &=&  \enzyme \cdot  \kcatplus \cdot  \eta^{\rm enr} \cdot \eta^{\rm sat},
\end{eqnarray}
with $\eta^{\rm enr}$ and $\eta^{\rm sat}$ given by 
Eq.~(\ref{eq:SIConvexEtaDefinitions}), with coefficients $\alpha_k \in
\Rset_+$ and $a_{ik}\in \Rset$. Assume that the enzyme cost function
for enzymatic reaction $l$ reads
\begin{eqnarray}
\enzymemetcostl = \frac{\hel\,v_{l}}{\el} = \frac{\hel\,v_{l}}{\kcatplusl} 
\cdot \frac{1}{\eta^{\rm enr}_{l}}
\cdot\frac{1}{\eta^{\rm sat}_{l}}\,.
\end{eqnarray}

Then the total enzyme cost $\enzymemetcost = \sum_{l} \enzymemetcostl$, as a function
of logarithmic metabolite concentrations ($\sv = \ln \cv$), is convex.
\end{lemma}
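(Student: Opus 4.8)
The plan is to reduce everything to the two composition lemmas already established and to the elementary fact that log-convexity implies convexity. First I would fix a single oriented reaction $l$, so that $v_l>0$ and the prefactor $\hel\,v_l/\kcatplusl$ is a strictly positive constant. It therefore suffices to prove that the metabolite-dependent part $1/(\eta^{\rm enr}_l\,\eta^{\rm sat}_l)$ is convex in $\sv=\ln\cv$: convexity is preserved under multiplication by a positive constant, and the sum $\enzymemetcost=\sum_l\enzymemetcostl$ of convex functions is again convex, so the full statement follows once each reaction term is handled.

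To show $1/(\eta^{\rm enr}_l\,\eta^{\rm sat}_l)$ is convex, I would prove the stronger statement that it is \emph{log-convex}, i.e. that its logarithm
\begin{eqnarray}
\ln\frac{1}{\eta^{\rm enr}_l\,\eta^{\rm sat}_l} = -\ln\eta^{\rm enr}_l - \ln\eta^{\rm sat}_l
\end{eqnarray}
is convex in $\sv$. For the thermodynamic term, writing $\e^{-\Theta_l}=\nu_l\,\e^{\nv_l\cdot\sv}$ with $\nu_l=\e^{\Delta_{\rm r}{G^{\circ}}'_l/RT}>0$ and $\nv_l$ the stoichiometric vector gives $-\ln\eta^{\rm enr}_l=-\ln(1-\nu_l\,\e^{\nv_l\cdot\sv})$, which is exactly the function of Lemma \ref{lemma:th_cvx} and hence convex on the set where $\nu_l\,\e^{\nv_l\cdot\sv}<1$. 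For the saturation term, $-\ln\eta^{\rm sat}_l=\ln\left(\sum_k\alpha_k\prod_i c_i^{a_{ik}}\right)=\ln\left(\sum_k\e^{\sum_i a_{ik}s_i+\ln\alpha_k}\right)$, which is the log-sum-exp composition of Lemma \ref{lemma:kin_cvx} (take the matrix ${\bf A}=(a_{ik})$ and constants $b_k=\ln\alpha_k$) and thus convex. A sum of convex functions being convex, the displayed logarithm is convex, so $1/(\eta^{\rm enr}_l\,\eta^{\rm sat}_l)$ is log-convex; since $\exp$ is convex and non-decreasing, composing it with this convex logarithm shows that $1/(\eta^{\rm enr}_l\,\eta^{\rm sat}_l)$ is itself convex.

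The only point requiring care is the \emph{domain}: the condition $\nu_l\,\e^{\nv_l\cdot\sv}<1$ in Lemma \ref{lemma:th_cvx} is precisely $\e^{-\Theta_l}<1$, i.e.\ a strictly positive driving force $\Theta_l>0$. This is exactly what holds in the interior of the metabolite polytope, away from its E-faces, so $\eta^{\rm enr}_l>0$ throughout and all logarithms are well defined. I expect this identification of the feasible region with the domain on which the lemmas apply to be the main conceptual step, since it is what guarantees convexity everywhere the cost is finite; the remaining work (assembling the per-reaction results into the total cost) is routine, as convexity is closed under positive linear combinations.
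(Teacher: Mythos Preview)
Your proposal is correct and follows essentially the same approach as the paper's proof: both take logarithms to split $\ln\enzymemetcostl$ into a constant plus $-\ln\eta^{\rm enr}_l-\ln\eta^{\rm sat}_l$, invoke Lemmas~\ref{lemma:th_cvx} and~\ref{lemma:kin_cvx} for the two pieces, then use that $\exp$ composed with a convex function is convex and sum over $l$. Your explicit remark on the domain (identifying $\nu_l\,\e^{\nv_l\cdot\sv}<1$ with the interior of the metabolite polytope) is a useful addition the paper leaves implicit.
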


\begin{proof}
To simplify the efficiency factors, we can use the abbreviations $s_{i} \equiv
\ln c_{i}$, $\nu \equiv \exp (\Delta_{\rm r} {G^\circ}'/RT)$, and $b_k = \ln
\alpha_k$:
\begin{eqnarray}
\eta^{\rm enr} &=& 1 - \nu\,\e^{-\nv\cdot \sv} \nonumber \\
\eta^{\rm sat} &=& \left(\sum_{k=1}^{n} \e^{\av_{k}\cdot\sv + b_k}\right)\inv.
\end{eqnarray}

If we look at the natural logarithm of $\enzymemetcostl$,
\begin{eqnarray}
\ln \enzymemetcostl = \ln \left( \frac{\hel\,v_{l}}{\kcatplusl}  \right) -\ln \eta^{\rm enr}_{l} - \ln \eta^{\rm sat}_{l},
\end{eqnarray}
we see that each of the three terms in the sum is convex in $\sv$. The
first term is constant with respect to the metabolite concentrations
and therefore trivially convex.  The energetic term, $-\ln \eta^{\rm
  th} = - \ln (1 - \nu\,\e^{-\nv\cdot \sv})$, is convex according to
Lemma \ref{lemma:th_cvx}. The saturation factor, $-\ln \eta^{\rm sat} = \ln
\left(\sum_{k=1}^{n} \e^{\av_{k}\cdot\sv + b_k}\right)$, is convex
according to Lemma \ref{lemma:kin_cvx}. We conclude that $\enzymemetcostl$ is
convex too, since it is a composition of a convex function ($\e^x$)
with another convex function ($\ln \enzymemetcostl$).  Finally, the total enzyme
cost ($\enzymemetcost$) is convex since it is a sum of convex functions:
\begin{eqnarray}
\enzymemetcost = \sum_{l} \enzymemetcostl(\sv).
\end{eqnarray}
\end{proof}

\subsection{The ECM problem remains convex under metabolite and enzyme constraints}
\label{sec:SIproblemremainsconvex}

In ECM, we may introduce an upper bound on the sum of all
(non-logarithmic) metabolite levels as an extra constraint.  Unlike
our original metabolite polytope, the resulting admissible region will
have a curved surface.  However, since the sum of metabolite levels is
convex on the metabolite polytope, the new constraint leads to a
convex region, and the ECM problem remains convex. Another possible
constraint comes from predefined concentrations of conserved
moieties. A fixed concentration [ATP]+[ADP], for example, would define
a nonlinear constraint on the metabolite polytope. However, the
inequality constraint [ATP]+[ADP] $\le \const$ would lead to a convex
feasible region.  Similarly, we may postulate that the sum of enzyme
levels, or some weighted sums of enzyme levels (e.g., for enzymes
occupying a certain membrane) are bounded from above. Since these sums
are convex functions on the metabolite polytope, a bound on these sums
will define a convex set, and again, the optimality problem remains
convex.

\begin{figure}[t!]
  \begin{center}
    {\begin{tabular}{llll}
        (a) & (b) & (c) & (d) \\
        \includegraphics[height=3.8cm]{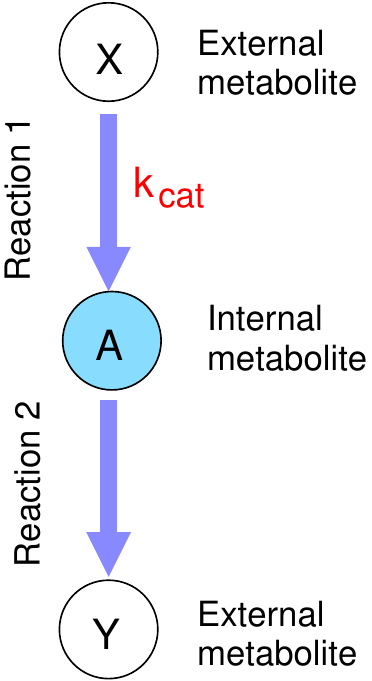} &
        \includegraphics[height=3.8cm]{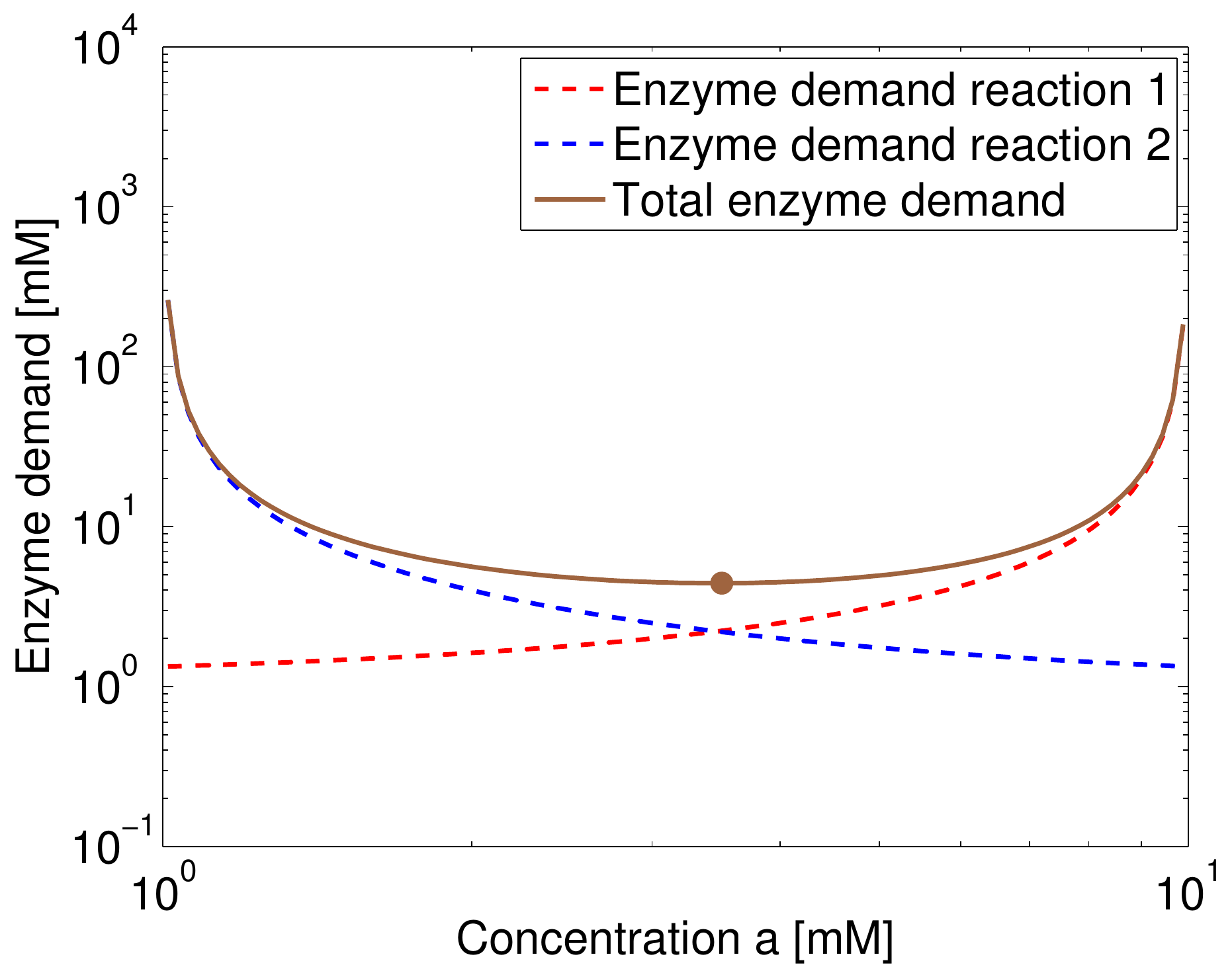} &
        \includegraphics[height=3.8cm]{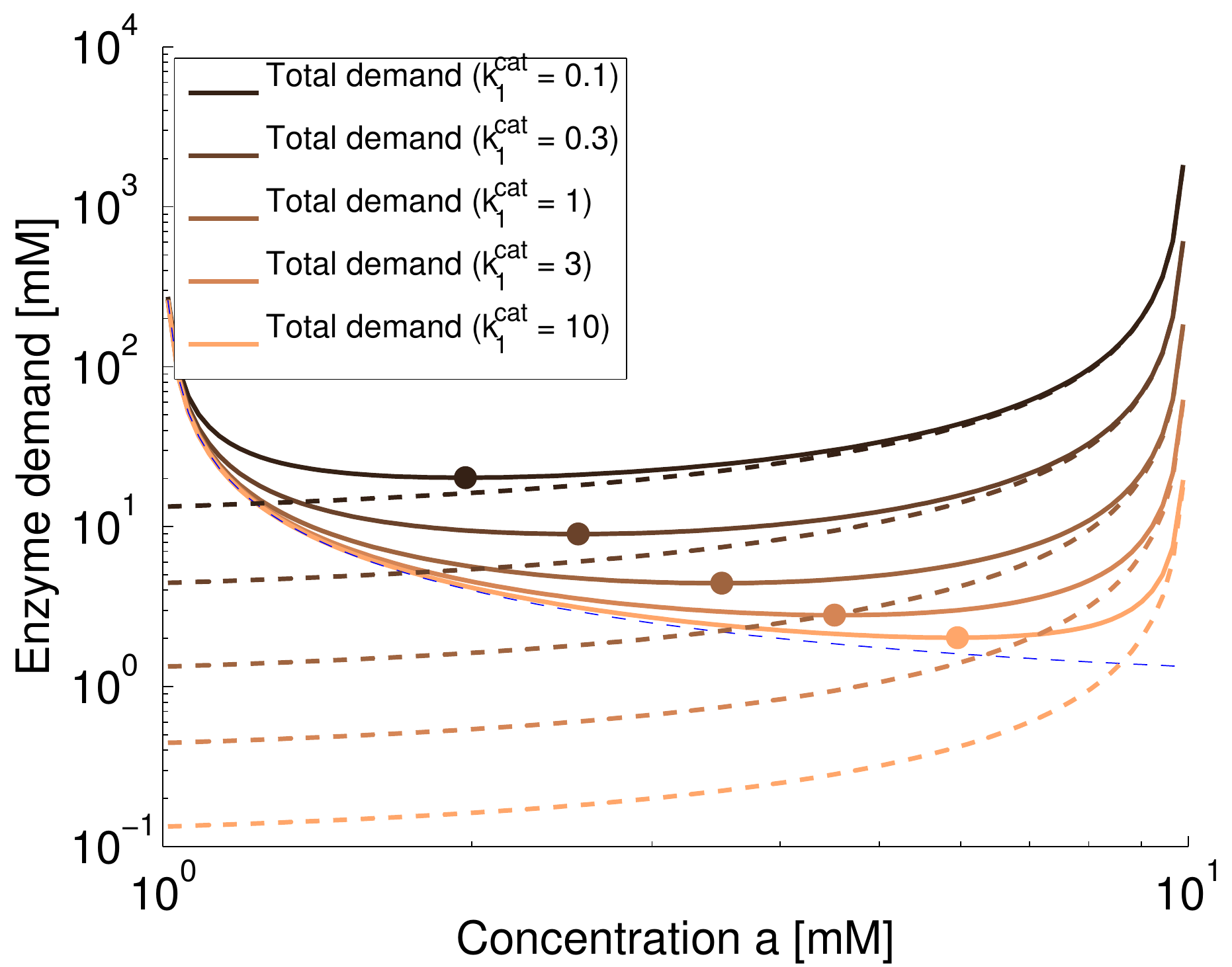}&
        \includegraphics[height=3.8cm]{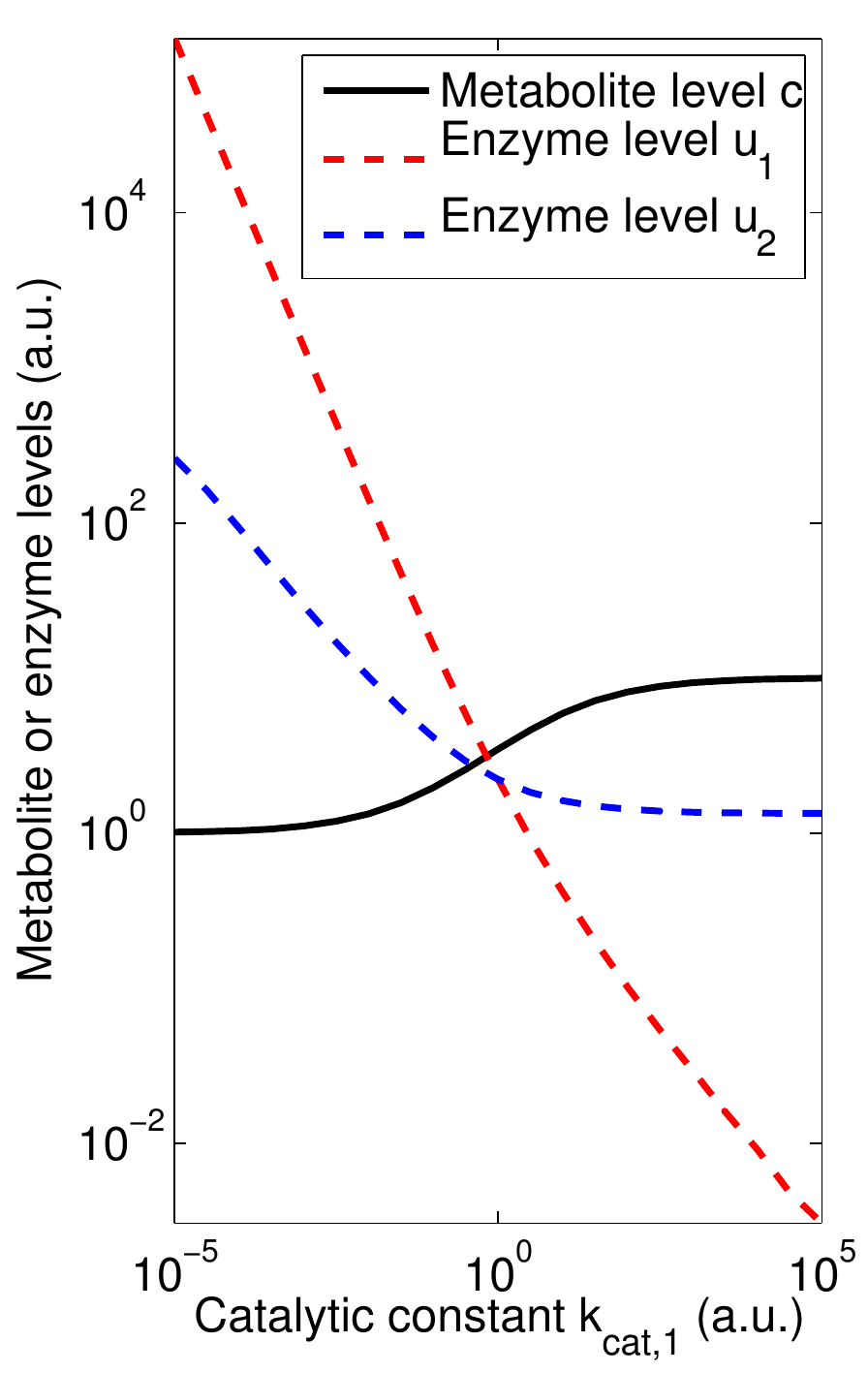}\\
      \end{tabular}
    }
    \caption{The optimal metabolic state depends on the catalytic
      constants.  (a) Two-reaction pathway with external
      concentrations $x=1$ and $y=0.1$.  The enzyme demand depends on
      the concentration $a$ of intermediate metabolite A. (b) Enzyme
      demand as a function of $a$, for a desired flux $v=1$ (a.u.) and
      assuming reversible Michaelis-Menten rate laws (all parameters
      set to 1). Close to chemical equilibrium in reaction 1 (right
      border) or reaction 2 (left border), enzyme demand diverges. The
      optimum metabolite concentration is marked by a dot.  (c) Enzyme
      demand depends on kinetic constants $\kcatplus$.  Results with
      varying $\kcatplus$ values in reaction 1 (reference value
      ${\kcatplus}=1$ s$\inv$) are shown.  Higher $\kcatplus$ values
      decrease the enzyme demand in reaction 1, shifting the optimum
      point towards higher values of $a$. (d) Optimal metabolite and
      enzyme levels shown as functions of $\kcatplus$.}
  \label{fig:single_metabolite}
  \end{center}
\end{figure}

\subsection{Optimal metabolic states depend on model parameters.}  
How do  optimal  metabolite and enzyme profiles depend on kinetic
parameters?  The metabolite profile reflects a compromise between
requirements in different reactions and depends on many 
model details. Changes in external concentrations or equilibrium constants
will shift the boundaries of the metabolite polytope, and changes in
$\kcat$ values, enzyme cost weights $\hel$, or desired fluxes $v_l$
will scale the cost of individual enzymes and shift the optimum
point (see Figure \ref{fig:fourchain} (f)).  Figure
\ref{fig:single_metabolite} shows this  for a varying
 $\kcat$ value. We consider a two-reaction pathway with 
parameters set to 1 (arbitrary units).  We note that a higher
intermediate level $a$ decreases the force in reaction 1 (i.e.,
increases its enzyme demand) and increases the force in reaction 2
(i.e., decreases its enzyme demand).  With the parameters chosen, the
total enzyme demand becomes minimal when both reactions show the same
driving force: this is the state that would also be predicted by the MDF
method (which focuses on driving forces instead of enzyme costs
\cite{nbfr:14}).  If we increase the $\kcat$ value in reaction 1, the
cost of  enzyme 1 will have a smaller impact on the overall cost,
and the optimal concentration $a$ is shifted to higher values. Since
the cost of  enzyme 2 becomes more dominant,  energy
efficiency in this reaction is increased on the expense of reaction 1.
A variation of enzyme cost weights $\hel$ or fluxes $v_{l}$ will have
similar effects as variations of $1/\kcatplusl$.

\subsection{Preemptive enzyme expression as a convex optimality problem}
\label{eq:preemptiveExpression}

Cells have to deal with varying environments which require different
fluxes and enzyme levels. Since switching takes time and the resulting 
maladaptation can be costly, a possible strategy is to
anticipate all possible (or likely) situations and to express enzymes
preemptively. In a simple strategy, the cell could express all enzymes
at a constant level and inhibit some of them in each situation
 to realise a favourable state.  The choice of optimal preemptive enzyme
levels can be formulated as an optimality problem, which turns out to be 
 convex.  We
assume a set of possible situations $\sigma$, each
characterised by different conditions (external metabolite
concentration vector $\cv^{\rm ext,\sigma}$ and other
kinetics-relevant parameters $\pv^{\sigma}$) and a necessary  flow
$\vv^{\sigma}$. For simplicity, we assume that each reaction has a fixed 
flux
direction across all conditions. Each  condition leads to a
different metabolite polytope ${\mathcal P}_\sv^\sigma$ and to
a different specific rate function $r^\sigma(\vv,\sv)$.  A preemptive
adaptation strategy is a tuple $\{\sv^\sigma\}$ of metabolite
profiles for the different situations, each located in its 
metabolite polytope 
$\sv^\sigma
\in {\mathcal P}_\sv^\sigma$.  The corresponding \emph{required enzyme
  activities} comprise the enzyme profiles required in the different
situations:
\begin{eqnarray}
   \enzyme_l^{\sigma}(\sv^{\sigma}) = \frac{v_l^\sigma}{r_l^\sigma(\sv^{\sigma})}
\end{eqnarray}
We note that each of the $\enzyme_l^{\sigma}(\sv^{\sigma})$ is a convex
function on the corresponding metabolite polytope ${\mathcal
  P}_\sv^\sigma$.  To define the overall cost of the strategy, we
determine, for each enzyme, the maximal level that it needs to show
across situations (for all other situations, we assume that the enzyme
activity will be reduced allosterically, without reducing the actual
enzyme cost).  Thus,
\begin{eqnarray}
 q^{\rm strategy}(\{\sv^\sigma\}) = \sum_l \mbox{max}_\sigma\, \enzyme_l^{\sigma}(\sv^\sigma).
\end{eqnarray}
We now show that this cost is a convex function on the product
polytope $\prod_\sigma {\mathcal P}_\sv^\sigma$. First of all, the
cost is convex if for each reaction $l$, the cost 
\begin{eqnarray}
 q^{\rm strategy}_l(\{\sv^\sigma\}) = \mbox{max}_\sigma\, \enzyme_l^{\sigma}(\sv^\sigma)
\end{eqnarray}
related to this reaction is convex. This is what we show now. From
ECM, we know that $\enzyme_l^{\sigma}(\sv^\sigma)$ is convex on ${\mathcal
  P}_\sv^\sigma$, so
\begin{eqnarray}
\enzyme_l^{\sigma}([1-\lambda]\,\sv^\sigma_A + \lambda\,\sv^\sigma_B)
\le [1-\lambda]\, \enzyme_l^{\sigma}(\sv^\sigma_A)
+ \lambda\,\enzyme_l^{\sigma}(\sv^\sigma_B).
\end{eqnarray}
Thus,
\begin{eqnarray}
 q^{\rm strategy}_l([1-\lambda]\,\{\sv^\sigma_A\} + \lambda\,\{\sv^\sigma_B\})
&=& \mbox{max}_\sigma\, \enzyme_l^{\sigma}([1-\lambda]\,\sv^\sigma_A + \lambda\,\sv^\sigma_B) \nonumber \\
&\le& \mbox{max}_\sigma\, \left( [1-\lambda]\, \enzyme_l^{\sigma}(\sv^\sigma_A)  + \lambda\,\enzyme_l^{\sigma}(\sv^\sigma_B) \right) \nonumber \\
&\le& [1-\lambda]\, \mbox{max}_\sigma\, \left( \enzyme_l^{\sigma}(\sv^\sigma_A) \right)
+ \lambda\,\mbox{max}_\sigma\, \left(\enzyme_l^{\sigma}(\sv^\sigma_B) \right) \nonumber \\
&=& [1-\lambda]\, q^{\rm strategy}_l(\{\sv^\sigma_A\})
+\lambda\, q^{\rm strategy}_l(\{\sv^\sigma_B\})
\end{eqnarray}
This shows that $ q^{\rm strategy}_l(\{\sv^\sigma\})$ is a convex
  function.  The first inequality holds because
  $\enzyme_l^{\sigma}(\sv^\sigma)$ is convex in $\sv^\sigma$. the second
  inequality holds because the maximum function
  $\mbox{max}(a_1,a_2,,a_3,...)$ is convex in its arguments.

\subsection{Non-enzymatic reactions}
\label{sec:SInonEnzymatic}

So far, we generally assumed that all reactions in a model are
enzyme-catalysed. In reality, some chemical reactions are
fast enough even without a catalyst, as is also the case for membrane
diffusion (e.g., for small molecules like O$_2$ and CO$_2$). These
processes are often counter-productive, such as spontaneous
degradation of complex compounds or leakage efflux of useful
metabolites. Furthermore, since many models of growing cells use
metabolite concentrations (not absolute amount) as variables, the
increase of cell volume dilutes these concentrations and is thus
equivalent to a global degradation rate, which might be significant
(e.g. in fast growing bacteria). These non-enzymatic processes could
have a large impact on the metabolic flows or, in the perspective
taken here, on how costly certain flows will be. Thus in general, our
flows contain, aside from enzymatic reactions, a number of
non-enzymatic reactions degrading or converting metabolites, most
probably with mass-action rate laws.  This radically changes things:
in ECM, effectively, non-enzymatic reactions put additional
constraints on metabolite concentrations, which confine the metabolite
polytope to a subspace, and may make the polytope become empty.

\textbf{In ECM, non-enzymatic reactions lead to constraints on the
  metabolite polytope, but leave the optimality problem convex}
Consider an ECM problem with non-enzymatic reactions. The rate laws
$v^{\rm non}_j = r_j(\cv)$ have the general form of reversible
reactions (with thermodynamic numerator, and some
concentration-dependent denominator), such that the functions
$1/r_j(\sv)$ will be convex on the metabolite polytope.  In ECM, with
given fluxes, the metabolite levels must be such that the flux is
realised. The non-enzymatic reactions are not scored by enzyme costs,
but they create (potentially nonlinear) equality constraints on the
metabolite polytope. In the simple case of \emph{irreversible mass
action laws}, we obtain a linear equality constraint on the metabolite
polytope, that is, the polytope is cut by a plane, and all solutions
must lie in the resulting subspace. Obviously, this leaves the
optimization problem convex. 

\subsection{The enzyme cost profile obtained by ECM is a linear combination of  metabolic control profiles}
\label{sec:SIcostAndControlCoefficients}

In kinetic models of metabolic pathways, a flux maximization at a
fixed total enzyme level will lead to a state in which enzyme levels
and flux control coefficients are proportional
\cite{hekl:1996}.  Treating metabolic pathways by ECM, we obtain
a similar, yet more general relationship between enzyme costs and
metabolic control coefficients. The cost of an enzyme is proportional
to a linear combination of metabolic control coefficients, and the
control coefficients appearing in this linear combination refer to (i)
possible stationary flux modes in the network and (ii) concentrations
of internal metabolites that are either kept fixed or hit a bound in
ECM. 

\begin{proposition}
\label{prop:lincomb} \textbf{(Enzyme cost and control
  coefficients)} In kinetic models with metabolic states obtained by
enzyme cost minimization, the profile of enzyme cost $\hel\,\el$ is
a linear combination (proof see section
\ref{sec:ProofEnzymeControl})
\begin{eqnarray*}
 \hel\,\el = \sum_{a \in \rm stat} \alpha_{a} {\mathcal C}^{j_a}_{l}
  + \sum_{b \in \rm bnd} \beta_{b} \,{\mathcal C}^{s_b}_{l}
\end{eqnarray*}
of flux control coefficients (for the independent stationary fluxes
$j^{\rm stat}_a$) and of metabolite contral coefficients (for internal
metabolites $s^{\rm bnd}_b$ that hit upper or lower bounds). The
coefficients $\beta_{i}$ assume values $\beta_{i}=0$ when a metabolite
hits none of the bounds, $\beta_{i}>0$ when it hits the lower bound,
and $\beta_{i}<0$ when it hits the upper bound.  In particular, if a
network allows for a single stationary flux mode only, the enzyme
costs show a proportionality
\begin{eqnarray}
\label{eq:ProofEnzymeControl}
  \hel\,\el \propto  \,{\mathcal C}^{\rm J}_{l} 
  + \sum_{b \in \rm bnd} \beta'_{b}\,{\mathcal C}^{\rm s_{b}}_{l}.
\end{eqnarray}
\end{proposition}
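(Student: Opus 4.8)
The plan is to read off the first-order (KKT) conditions of the convex ECM problem (\ref{eq:ECMproblem}) and then identify the resulting stationarity relation with the connectivity theorems of metabolic control analysis, evaluated at the optimal state $(\sv^{\rm opt},\enzymev^{\rm opt})$. The starting point is a direct computation of the cost gradient. Since $\el = v_l/\ratelaw_l(\cv)$ and $\enzymemetcost = \sum_l \hel\,\el$, differentiating with respect to $s_i = \ln c_i$ and using $\partial \ln \ratelaw_l/\partial s_i = \Esc_{li}$ (the scaled reactant elasticity) gives
\begin{eqnarray}
\frac{\partial \enzymemetcost}{\partial s_i} = \sum_l \hel\,\frac{\partial \el}{\partial s_i} = -\sum_l (\hel\,\el)\,\Esc_{li},
\end{eqnarray}
so that, writing $u_l \equiv \hel\,\el$ for the enzyme cost profile and stacking these into $\mathbf{u}$, the gradient is $\nabla_\sv \enzymemetcost = -\Esc^\top \mathbf{u}$, where $\Esc$ is the scaled elasticity matrix.

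Next I would set up the KKT conditions on the metabolite polytope. Because enzyme cost diverges on the E-faces, the optimum never lies on a thermodynamic face, so the only active inequality constraints are the physiological metabolite bounds (P-faces), together with metabolites held at fixed values. Let $B$ be the set of internal metabolites that are fixed or sit at a bound, and $F$ the remaining interior, free metabolites. Stationarity in the free directions together with the reduced-gradient conditions at the active bounds read
\begin{eqnarray}
(\Esc^\top \mathbf{u})_i = 0 \ \ (i \in F), \qquad (\Esc^\top \mathbf{u})_b = -\lambda_b \ \ (b \in B),
\end{eqnarray}
that is, $\Esc^\top \mathbf{u} = -\sum_{b \in B}\lambda_b\,\mathbf{e}_b$, with the standard multiplier signs: $\lambda_b \ge 0$ at a lower bound, $\lambda_b \le 0$ at an upper bound, and no sign restriction for a fixed metabolite.

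The crux is to invert this relation using metabolic control analysis. The scaled connectivity theorems state that the flux control profiles $\mathbf{c}^{j_a}$ (components ${\mathcal C}^{j_a}_l$) and the concentration control profiles $\mathbf{c}^{s_b}$ (components ${\mathcal C}^{s_b}_l$) satisfy $\Esc^\top \mathbf{c}^{j_a} = \mathbf{0}$ and $\Esc^\top \mathbf{c}^{s_b} = -\mathbf{e}_b$. Hence $\sum_{b\in B}\lambda_b\,\mathbf{c}^{s_b}$ is a particular solution of the KKT relation, and the general solution differs from it by an element of $\ker \Esc^\top$. A dimension count pins down that kernel: generically $\mbox{rank}(\Esc)$ equals the number $n_{\rm int}$ of internal metabolites, so for $r$ reactions $\dim\ker\Esc^\top = r - n_{\rm int}$, which equals the number of independent stationary flux modes $\dim\ker\Nmat$; since the flux connectivity theorem places every $\mathbf{c}^{j_a}$ in this kernel and there are as many independent ones as its dimension, they span it. Therefore
\begin{eqnarray}
\mathbf{u} = \sum_{a \in \rm stat}\alpha_a\,\mathbf{c}^{j_a} + \sum_{b\in B}\beta_b\,\mathbf{c}^{s_b}, \qquad \beta_b = \lambda_b,
\end{eqnarray}
which is exactly the claimed decomposition, with the stated signs of $\beta_b$; specializing to a single flux mode collapses the first sum to one term and yields the proportionality (\ref{eq:ProofEnzymeControl}).

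The step I expect to be the main obstacle is making the control-analysis half watertight rather than generic. One must keep the normalization conventions consistent throughout so that $u_l = \hel\,\el$ pairs with the scaled elasticity $\Esc_{li}$ and with scaled control coefficients in the same theorem; mismatched scalings would break the identity. More substantively, the clean statements $\Esc^\top\mathbf{c}^{s_b} = -\mathbf{e}_b$ and $\ker\Esc^\top = \Span\{\mathbf{c}^{j_a}\}$ hold only in the nondegenerate case: in the presence of conservation relations one must replace $\Nmat$ by a reduced stoichiometric matrix $\Nmat_{\rm R}$ and insert the link matrix $\Lmat$ into the connectivity theorems (Reder's formulas), then re-verify that the kernel is still spanned by the independent flux-mode control profiles. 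Handling this degeneracy, and confirming that fixed internal metabolites can be treated on the same footing as bounded ones (equality constraints with free-sign multipliers), is where the care is needed.
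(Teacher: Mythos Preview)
Your argument is correct (modulo the genericity caveats you already flag), but it takes a genuinely different route from the paper. The paper does \emph{not} stay in metabolite space: it recasts the ECM problem as an optimization over enzyme levels, minimizing $\hminus(\enzymev)$ subject to the equality constraints $\jv_{\rm stat}(\enzymev)=\vv_{\rm stat}$ and $\sv_{\rm bnd}(\enzymev)=\cv_{\rm bnd}$. The stationarity condition then reads $0=\partial h/\partial \el + \sum_a \lambda_a\,\partial j_a/\partial \el + \sum_b \mu_b\,\partial s_b/\partial \el$; multiplying by $\el$ turns the partial derivatives directly into (scaled) control coefficients by their very definition, and setting $\alpha_a=-\lambda_a j_a$, $\beta_b=-\mu_b s_b$ gives the claimed decomposition in one line. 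No connectivity theorems, no kernel dimension count, and no genericity assumption on $\mbox{rank}(\Esc)$ are needed.

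What each approach buys: the paper's enzyme-space Lagrangian is slicker and unconditional, because control coefficients \emph{are} the sensitivities $\partial j_a/\partial \el$ and $\partial s_b/\partial \el$, so they appear automatically as the multiplier directions. Your metabolite-space argument stays within the convex formulation actually solved in ECM, and it makes explicit \emph{which} structural facts of MCA (the connectivity relations $\Esc^\top\mathbf{c}^{j_a}=\mathbf{0}$, $\Esc^\top\mathbf{c}^{s_b}=-\mathbf{e}_b$, and the spanning of $\ker\Esc^\top$ by the flux control profiles) underlie the result; the price is the rank and link-matrix bookkeeping you identify. Both are valid; if you want to avoid the degeneracy discussion entirely, switching to the enzyme-space Lagrangian as the paper does is the shortcut.
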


In the kinetic model, enzymes have no control over external
metabolites. Thus, if all fixed metabolites (in ECM) are considered
external (in the kinetic model), and if none of the internal
metabolites (in the kinetic model) hits a bound (in ECM), the
concentration control coefficients do not appear in the sum, then
enzyme costs are directly proportional to flux control coefficients
\begin{eqnarray}
  \hel\,\el \propto \,{\mathcal C}^{\rm J}_{l}.
\end{eqnarray}
Assuming equal cost weights for all enzymes, this yields a
proportionality between enzyme levels and flux control coefficients
as previously found in \cite{hekl:1996}.

\begin{figure}[t!]
  \begin{center}
    \begin{tabular}{lllll}
&    (a) & (b) & (c) & (d) \\
    \includegraphics[width=1.5cm]{ps-files/four_chain-eps-converted-to.pdf} & 
      \includegraphics[height=3.3cm]{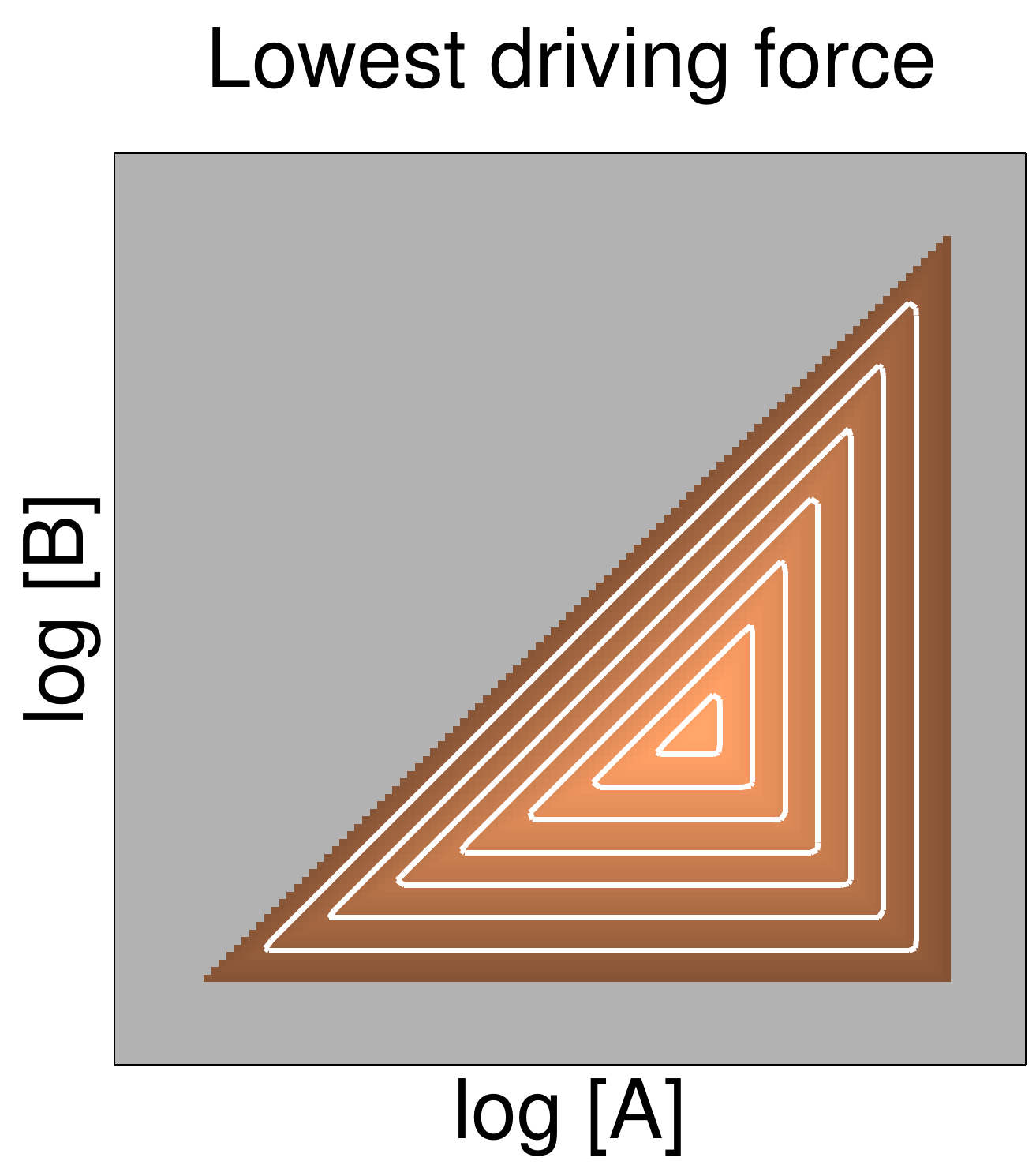} & 
      \includegraphics[height=3.3cm]{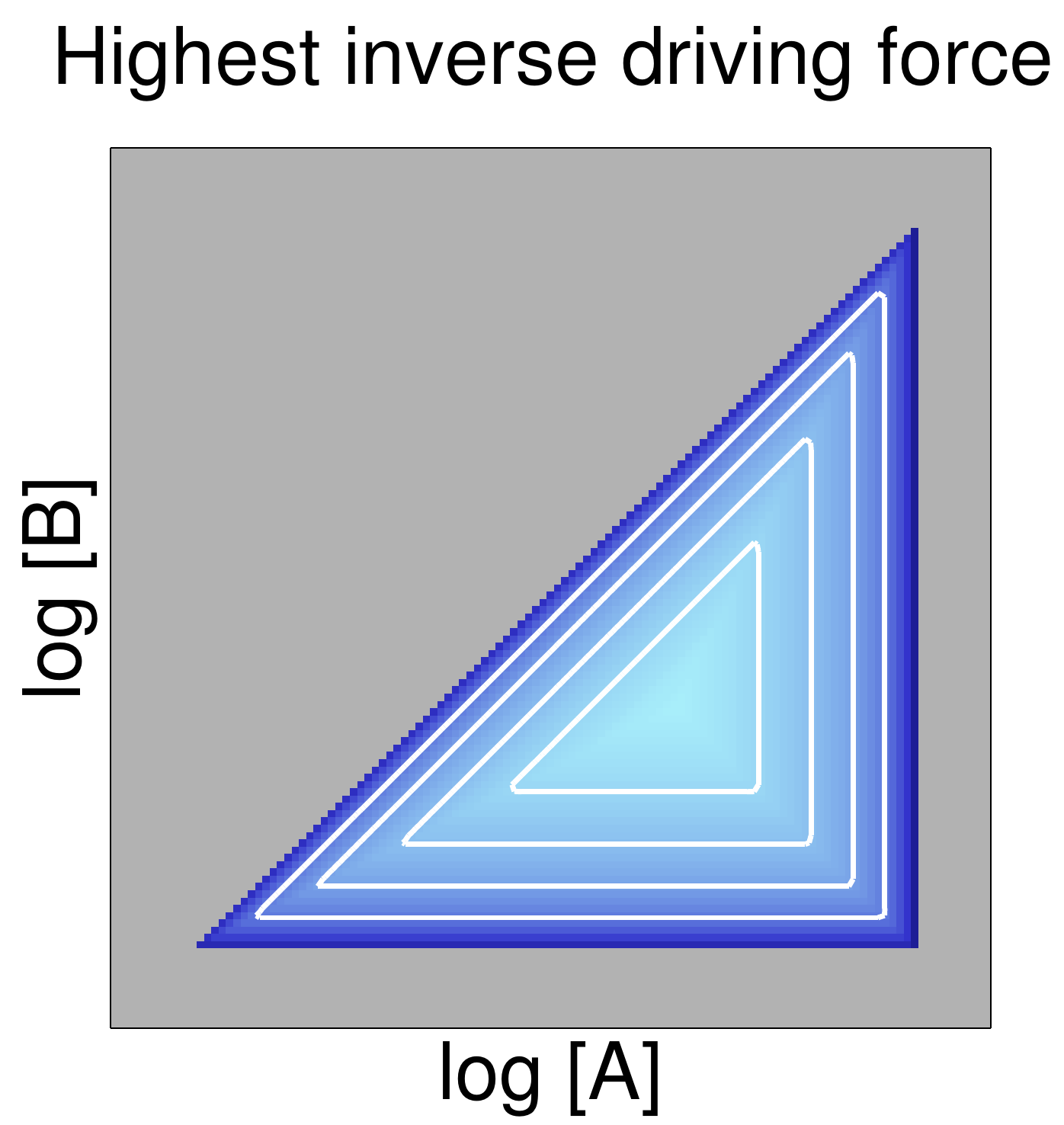}& 
      \includegraphics[height=3.3cm]{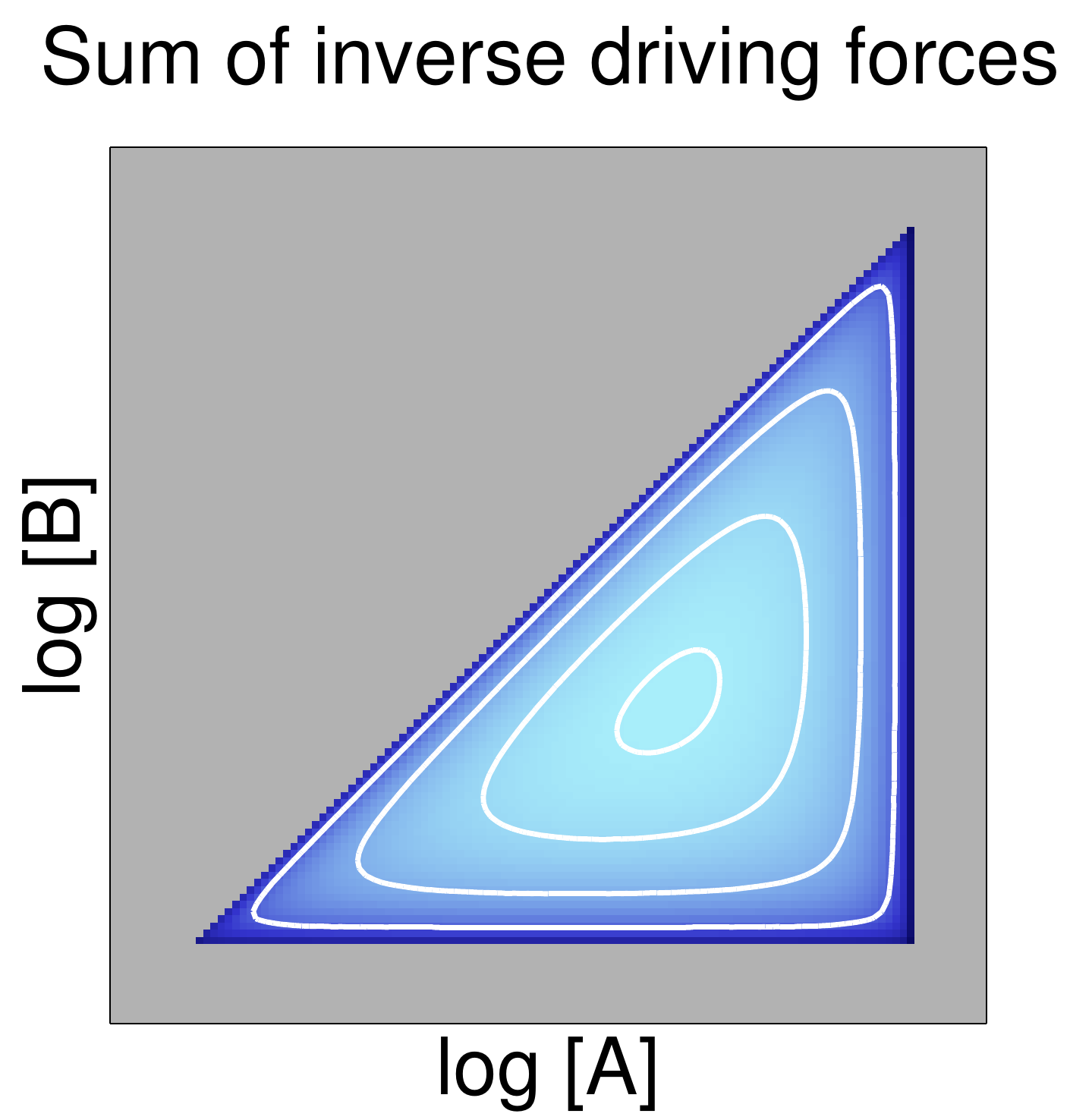}&
      \includegraphics[height=3.3cm]{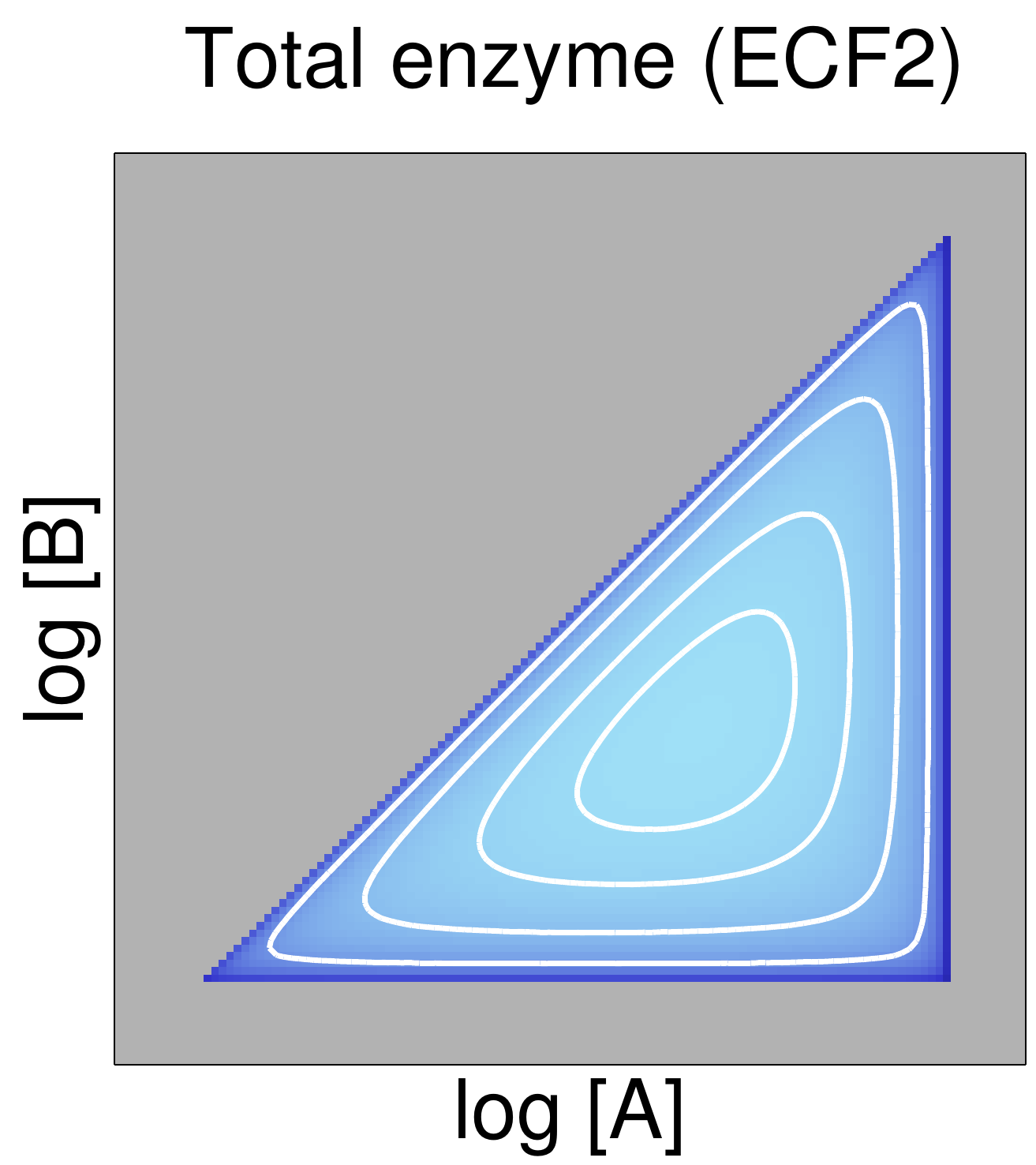}
 \end{tabular}
 \caption{Estimates of enzyme cost, computed from driving forces.  The
   schemes show different possible cost functions for a three-reaction
   pathway (left; same as in Figure \ref{fig:fourchain}).  (a) Lowest
   driving force $\Theta^{\rm min} = \mbox{min}_{l}\,\Theta_l$ along
   the pathway (colors from small (dark) to large (bright)). The
   contour lines can be used to define stricter constraints on
   metabolite profiles (see text). In the maximum point, all reactions
   have equal driving forces; this is the state that would follow from MDF
   optimization.  (b) The highest inverse driving force in the
   pathway, ($\mbox{max}_{l}\, \Theta_l\inv$), shows the same type of
   contour lines and the same optimum point (logarithmic color scale,
   from small (blue) to large (red)).  (c) The sum $\sum_{l}
   \Theta_l\inv$ of inverse driving forces along the pathway.  (d) The
   sum $\sum_l [1-\e^{-\Theta_l}]\inv$ is an EMC2s function with all
   parameters set to 1. The function (b) is always lower than (c), (c) is
   lower than (d), and (d) is lower than an EMC3 function with parameters
   set to 1 (the one shown in Figure \ref{fig:fourchain}).  All these
   functions have different optimum points.}
\label{fig:fourchainOther} 
\end{center}
\end{figure}

\begin{figure}[t!]
  \begin{center} {\begin{tabular}{llll} (a) & (b) & (c)&
    (d) \\ \includegraphics[height=3.2cm]{ps-files/two-chain-eps-converted-to.pdf}
    & \includegraphics[height=3.2cm]{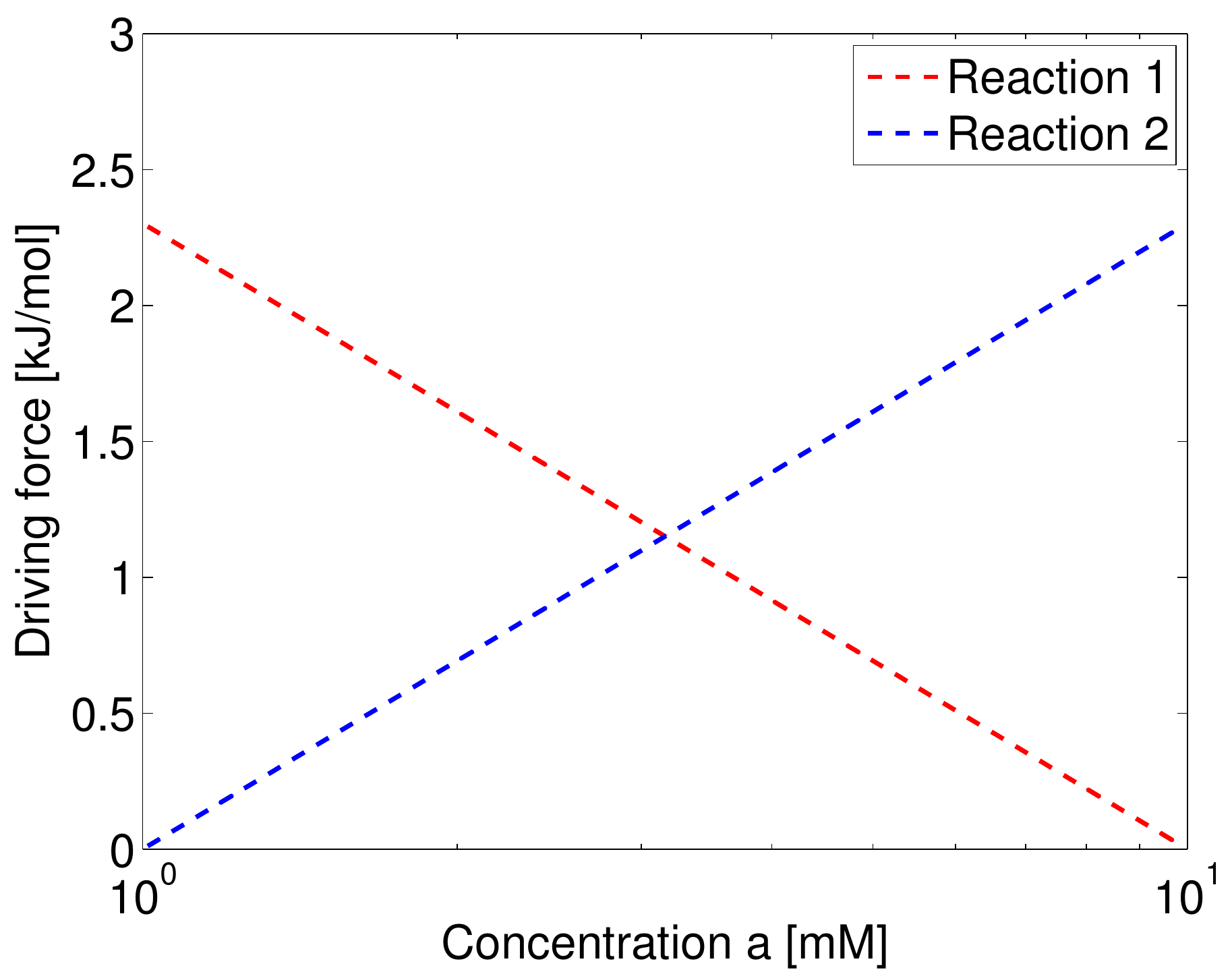}
    & \includegraphics[height=3.2cm]{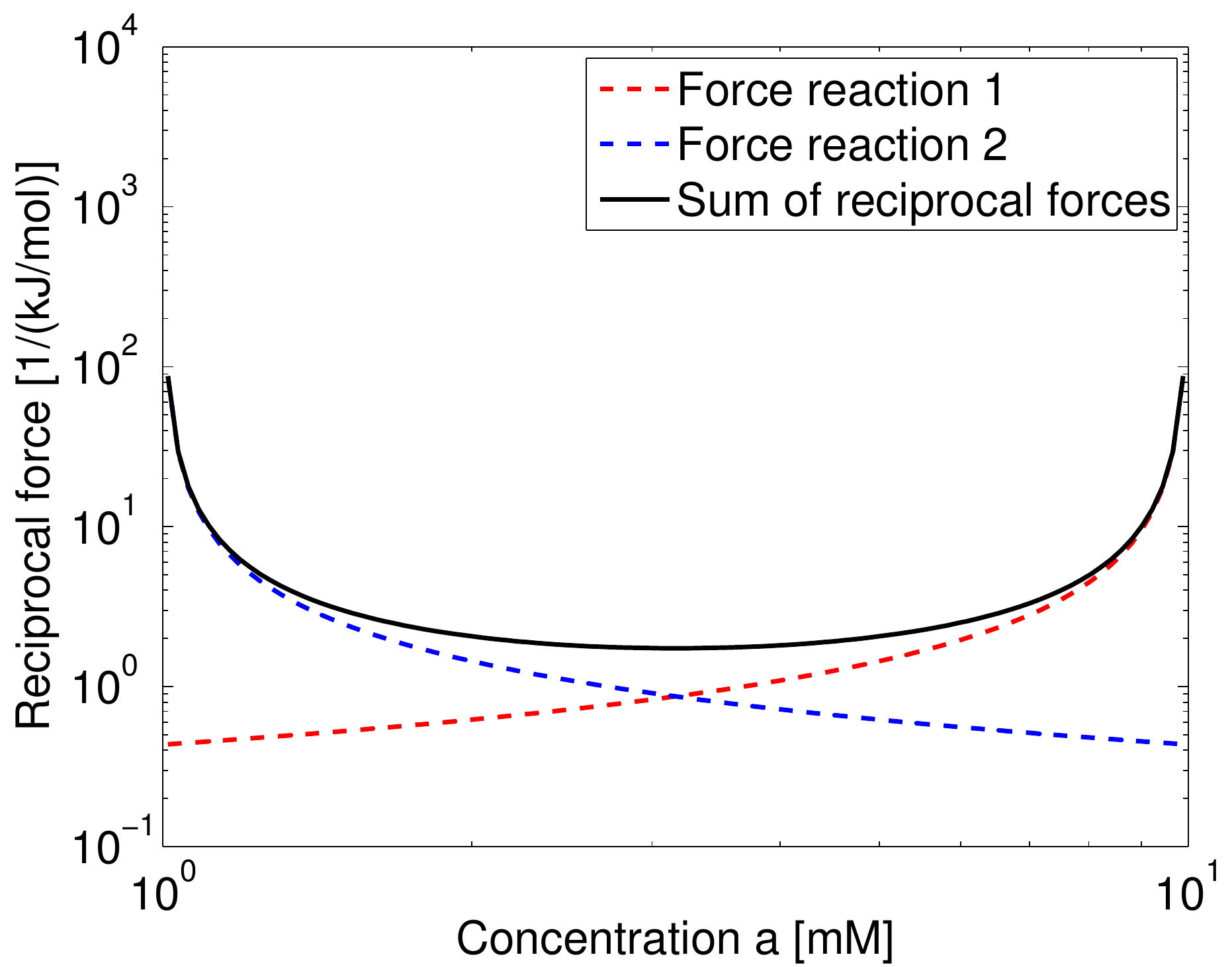}
    & \includegraphics[height=3.2cm]{ps-files/example_single_metabolite/example_single_metabolite_enzyme_demand-eps-converted-to.pdf} \end{tabular}
    } 
\end{center} 
\caption{Two-reaction pathway with external
    concentrations $x=1$ and $y=0.1$ (same model as in
    Figure \ref{fig:single_metabolite}). The driving forces depend on
    the concentration $a$ of intermediate metabolite A.  (a) Driving
    forces as functions of $a$ (note the logarithmic scale).  (b) The
    inverse driving forces $1/\Theta_l$ and their sum are convex
    functions of $\log a$. As shown in
    Figure \ref{fig:fourchainOther}, scaled inverse driving forces can
    be used as lower estimates of enzyme cost.  (c) Enzyme demands for
    desired fluxe $v=1$, assuming reversible Michaelis-Menten rate
    laws (all parameters set to 1). Close to chemical equilibrium in
    reaction 1 (right border) or reaction 2 (left border), the enzyme
    investment diverges. The optimum concentration is marked by a
    dot.}  \label{fig:single_metabolite2}
\end{figure}

\section{Energy-based  cost functions and limits on driving forces}
\label{sec:SItighterConstraints}

By studying the infeasible zones along polytope faces, we can derive tighter
  constraints on driving forces. The enzyme cost minimum can be inside
the metabolite polytope or on a P-face (see Figure Figure
\ref{fig:fourchain} (f)). Since enzyme costs rise fast near E-faces of
the polytope, the optimum point will not be located in those regions,
and it can be practical to exclude these regions from the metabolite
polytope. This simply means that we introduce positive lower bounds on
all driving forces.  However, when should a driving force count as
small? To define a threshold, we limit the enzyme cost in each
reaction by an upper bound $\enzymemetcostmax$, e.g.~five percent of
the total cost of the proteome.  We use the the general EMC equation
(Eq.~\ref{eq:TotalEnzymeDemand}) to obtain a lower bound for driving
forces ($\Theta_l$), by remembering that $\eta^{\rm enr} < 1$ and
$\eta^{\rm reg} < 1$:
\begin{eqnarray} 
\label{eq:forceThreshold} 
\frac{\hel\,v_{l}}{\kcatplusl \,\enzymemetcostmax} < 
\frac{\hel\,v_{l}}{\kcatplusl \,\enzymemetcost_{l}} =
\eta^{\rm enr}\eta^{\rm sat}\eta^{\rm reg} <
\eta^{\rm enr} = 1 - e^{\Theta_l}
 < \Theta_l,
\end{eqnarray} 
where the last step uses the fact that $1-\e^x < x$ for $x > 0$.
 The value varies between reactions (derivation
 in SI \ref{sec:ProofforceThreshold}) and depends on fluxes, $\kcat$
 values, and enzyme cost weights $\he$ (where protein mass can be used
 as a proxy).  With the constraint (\ref{eq:forceThreshold}), we
 obtain a smaller metabolite polytope in which the costly regions
 close to the previous E-faces are excluded. The same constraints can
 also be used in thermodynamics-based FBA. Usually,
 thermodynamics-based FBA requires that fluxes and driving forces 
 have the same sign (e.g.~\cite{belq:02,hohh:07,tnah:13}), but fluxes
 are allowed to be driven by infinitesimal forces. With our stricter
 (and more realistic) constraint, an FBA model would require, instead,
 that forces must be large enough to realize fluxes at plausible
 enzyme costs.

Other metabolite constraints, which ensure sufficient substrate
levels, can be derived similarly.  By putting an upper bound on the
enzyme cost, we obtain a lower bound on the saturation efficiency
$\eta^{\rm sat}$, and thus on the substrate levels.  Consider, for
instance, a reversible MM rate law for a reaction S
$\rightleftharpoons$ P in the factorized form
(\ref{eq:factorised}). Noting that $\eta^{\rm enr} < 1$ and $\eta^{\rm reg} < 1$, we obtain a
lower bound on  enzyme cost:
\begin{eqnarray}
\frac{\hel\,v_{l}}{\kcatplusl \,\enzymemetcostmax} < 
\eta^{\rm enr}\eta^{\rm sat}\eta^{\rm reg} <
\eta^{\rm sat} =
\frac{\concS/\kmS}{1+\concS/\kmS + \concP/\kmP} <
\frac{\concS}{\kmS}.
\end{eqnarray}
For multi-substrate reactions, we obtain linear inequality constraints
in log-concentration space. Using the EMC3s function, we obtain the
cost estimate
\begin{eqnarray}
\enzymemetcost > \enzymemetcostl^{\rm min} = \frac{h\, v}{\kcatplus
  \eta^{\rm sat}} > \frac{h \,v}{\kcatplus} \frac{1+ \prod_i
  \left(s_i/\kmi\right)^{m_i}} {\prod_i \left(s_i/\kmi\right)^{m_i}}>
\frac{h \,v}{\kcatplus}\, \prod_i \left(\kmi/s_i\right)^{m_i},
\end{eqnarray}
 where $s_i$ denotes substrate
concentrations and $m_{i}$ denotes substrate molecularities.  With the
upper bound $\enzymemetcost<\enzymemetcost^{\rm max}$, we obtain the
constraint 
\begin{eqnarray}
\prod_{i} s_{i}^{m_i} > \frac{h\,v\,\prod_{i} (\kmi)^{m_{i}}}
     {\kcatplus \,q^{\rm max}}\nonumber \\
\Rightarrow \quad \sum_{i} {m_i}\,\ln s_{i} >
     \ln \frac{h\,v\,\prod_{i} (\kmi)^{m_{i}}} {\kcatplus \,q^{\rm
         max}}.
\end{eqnarray}
Bounds for allosteric regulators (lower bounds for activators, upper
bounds for inhibitors) are derived in a similar way.

\paragraph{Lower estimates of flux costs; an extension of the MDF strategy}
 The total enzyme cost of a pathway, $\enzymemetcost(\sv)$, can be a
 complicated function of the log-metabolite levels. However, simple
 functions can be used as lower bounds (see Figures
 \ref{fig:fourchainOther} and \ref{fig:single_metabolite2}).  First,
 in a pathway with $N$ reactions, the total cost is always bounded by
 $N\,\min_l \enzymemetcostl(\sv)$ and $N\,\max_l
 \enzymemetcostl(\sv)$, i.e., $N$ times the lowest or the highest
 enzyme cost in the pathway. Second, the simplified EMC functions
 yield lower estimates of the cost. By combining these arguments, we
 can justify the Max-min Driving Force strategy \cite{nbfr:14}.  The
 MDF strategy is a heuristics for predicting the concentrations and
 driving forces in a pathway. It postulates that the smallest driving
 force in a pathway should be as large as possible. The MDF criterion
 is equivalent to minimizing $\mbox{max}_{l} [1-\e^{-\Theta_l}]\inv$,
 which is a lower bound on the EMC2s function $\sum_{l}
 [1-\e^{-\Theta_l}]\inv$ with all constants set to 1.  As shown in
 Fig.~\ref{fig:fourchainOther} (a) and (b), the MDF optimum is distant
 from the polytope E-faces and close to the minimum point of the EMC2
 function. Thus, the MDF strategy avoids excessive enzyme costs that
 would occur at the polytope surface. As in
 Eq.~(\ref{eq:forceThreshold}), one could devise a variant of MDF in
 which driving forces are weighted by the prefactors
 $\frac{\kcatplusl}{\hel\,v_{l}}$.

\section{Workflow for model building and metabolic optimization}
\label{sec:SIworkflow}

\begin{figure}[t!]
  \begin{center}
    \begin{tabular}{ll}
    \includegraphics[width=13.5cm]{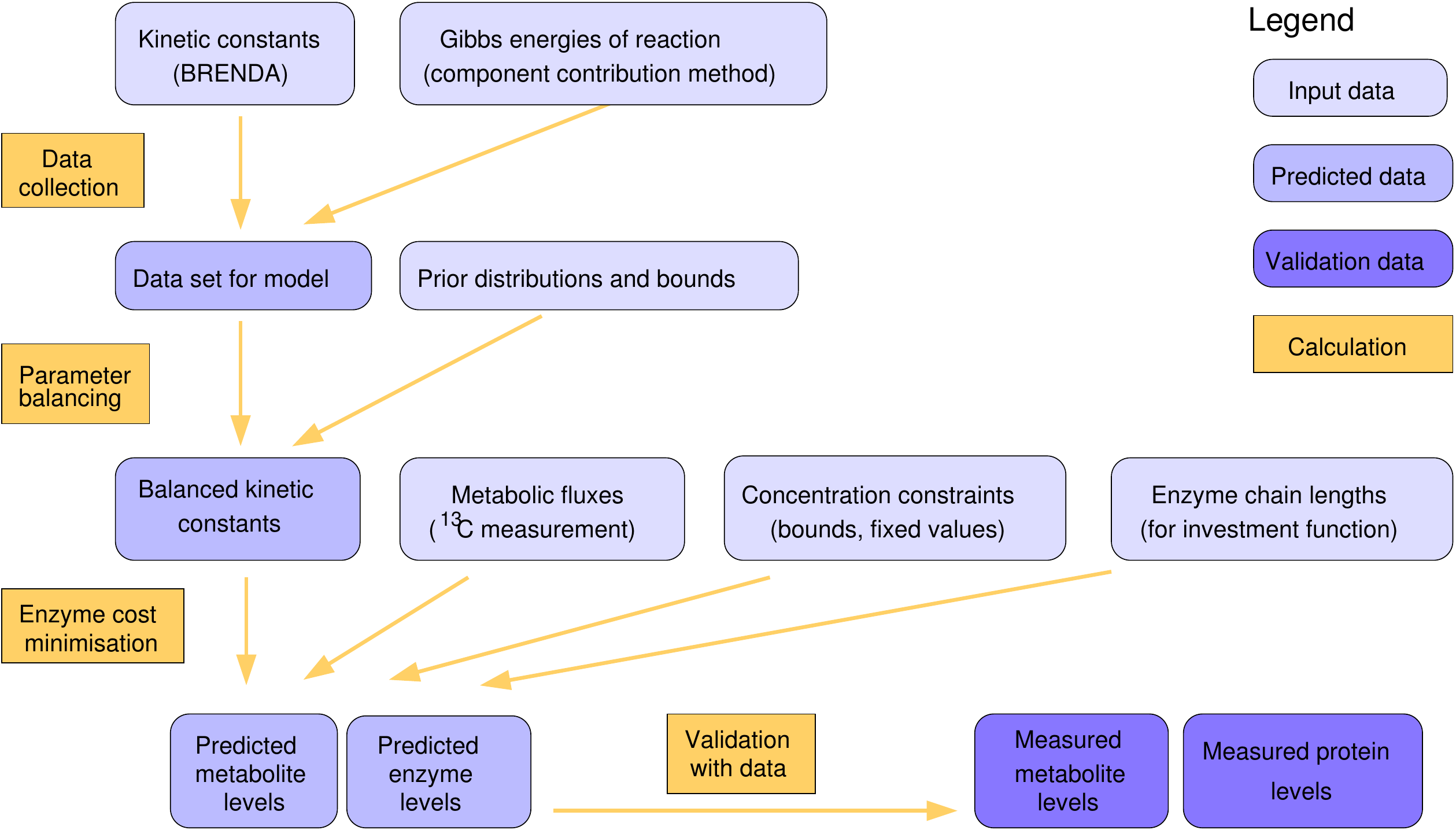}
  \end{tabular}
  \caption{Data integration in the ECM-based modelling workflow. The
    kinetics phase (data collection and parameter balancing) is
    followed by the optimization phase (ECM and validation of
    results).}
  \label{fig:workflowData}
  \end{center}
\end{figure}

\subsection{Workflow description} 
\label{sec:SIoptimization}
Our algorithm for enzyme cost minimization has two main phases. In the
kinetics phase, we collect and adjust the model parameters and
construct a model with energetically consistent fluxes (exclusion of
infeasible cycles) and rate constants (satisfying Haldane
relationships and Wegscheider conditions).  To determine consistent
model parameters, the collected rate constants and equilibrium
constants are adjusted and completed by parameter balancing.
\begin{enumerate}
\item Collect thermodynamic and kinetic data: standard chemical
  potentials $\mu^{\circ}$, equilibrium constants $\keq$,
  Michaelis-Menten constants $\km$, forward and reverse catalytic
  constants $\kcatplus$ and $\kcatminus$.
\item Set some of these quantities to  fixed values (if desired).
\item Run parameter balancing (with priors, pseudo values, and upper
  and lower bounds) to obtain a complete, consistent set of rate constants.
\end{enumerate}
In the optimization phase,  the desired pathway flux is  realized by
optimal enzyme and metabolite profiles.
\begin{enumerate}
\item Set up the kinetic model (based on the given network, flux
  profile $\vv$, and model parameters).  Redefine the reaction
  directions such that fluxes are positive, and update all parameters.
\item Choose the bounds for metabolite concentrations (tight bounds or
  fixed values for metabolites with fixed concentrations, lower and
  upper bounds for the others).
\item Determine a feasible metabolite profile $\sv = \ln \cv$ (a
  profile within the metabolite polytope) as a starting point for
  numerical optimization. We consider three alternatives: (i) Use
  linear programming to construct a set of extreme points in the
  polytope (with maximal and minimal metabolite levels $s_i$); the
  center of mass of these points is then taken as the starting
  point. (ii) Use the point in the polytope that is closest to the
  center of the predefined metabolite bounds (solution of a quadratic
  programming problem) as the starting point. (iii) Use the solution
  of the MDF problem (linear programming problem) as the starting
  point.
\item Choose an EMC function and minimize it numerically with respect to
  $\sv$ under the constraints defining the metabolite
  polytope.
\item Compute the corresponding enzyme levels and cost.
\item Based on the optimal enzyme cost, define a maximal  tolerable cost (e.g.,
   one percent higher than the optimal total
  cost) and compute individual tolerances for metabolite and enzyme
  levels as described in Methods.
\item Validate the predicted enzyme and metabolite levels with
  experimental data.
\end{enumerate}
In theory, a convex optimization should converge without problems. As
a check, we can repeat the calculation with different starting points.

\subsection{Parameter balancing yields consistent rate constants}
For our kinetic models, we need consistent sets of rate constants
($\kcatplus$, $\keq$, and $\km$ values) satisfying Wegscheider
conditions and Haldane relationships (see SI
\ref{sec:SIratelaws}). Measured parameter values may be incomplete and
contradictory. Using parameter balancing \cite{lskl:10}, we can
translate such values into complete, consistent, and plausible
parameters for a given model. Plausible parameter ranges can be
defined by prior distributions for parameter types (e.g., mean values
and a standard deviation for logarithmic $\km$ values in general).
Parameter balancing works as follows. We collect all quantities that
appear in the data or in the model ($\ln \kcatplus$, $\ln \kcatminus$,
$\ln \keq$, $\ln \km$, $\Delta_{\rm r} {G^{\circ}}'$, $\mu^{\circ}$)
and merge them into a vector $\yv$. These quantities must satisfy
Wegscheider conditions and Haldane relationships, which defines linear
equality constraints between them. Accordingly, to satisfy the
constraints in a safe way, we write all these quantities as linear
combinations of independent parameters ($\ln k^{\rm V}$, $\ln \km$,
and $\mu^{\circ}$ values), whith the definition $k^{\rm
  V}=\sqrt{\kcatplus\, \kcatminus}$.  The independent parameters,
which are collected in a vector $\sv$, can be varied without violating
any constraints. The linear dependence between the complete and the
independent parameter sets can be written as $\yv = {\bf R} \,\sv$
with a matrix ${\bf R}$ derived from the model structure.  Using this
equation as a linear regression model, we can convert an
experimentally known vector $\yv^{\rm data}$ (which may be incomplete)
into a best estimate of the underlying vector $\sv$. Using the
estimate $\sv$, we again apply ${\bf R}$ to obtain a completed,
consistent version of $\yv$. Since this regression problem is usually
underdetermined, we employ Bayesian estimation. Priors allow us to
obtain plausible estimates even from sparse data. Accordingly, the
result is not simply a point estimate of $\yv$, but a multivariate
Gaussian posterior distribution for possible parameter vectors $\yv$.
A best estimate is given by the center of the distribution; from the
covariance matrix, we obtain uncertainties of individual model
parameters as well as the correlations between them.  Parameter
balancing can handle data of different amounts or quality. If
comprehensive data are available, they will just be adjusted to
satisfy the constraints; missing or uncertain data values will be
completed with plausible values. MATLAB code for parameter balancing
and hyperparameters specifying the prior distributions are provided on
github and www.metabolic-economics.de/enzyme-cost-minimization/.

\subsection{Possible modifications of the workflow}
\label{sec:SIworkflowDetails}

The workflow can be extended in a number of ways:
\begin{itemize}
\item \textbf{External/internal and fixed/variable metabolites.} In
  kinetic models, we distinguish between \emph{internal} metabolites,
  for which a mass balance must be satisfied within the model, and
  \emph{external} metabolites, for which no mass balance is required
  (possibly assuming that other reactions, outside the model, will fix
  the mass balance). In ECM, we distinguish between fixed metabolites
  (whose concentration is predefined) and variable metabolites (whose
  concentration is determined during ECM). It is important to note
  that the two distinctions need not coincide. Nevertheless,
  metabolites at the pathway boundaries (such as initial substrates,
  final products, and cofactors, which also participate in other
  pathways) are usually the ones that will be both external (in
  kinetic models) and fixed (in ECM).
\item \textbf{Fluxes need not be stationary.} The flux distribution
  used in ECM need not be stationary (in the sense that the variable
  metabolites satisfy mass balances).  Remember that variable
  metabolites and internal metabolites are not the same!  Of course,
  stationarity is a sensible assumption for whole-metabolism models on
  a certain timescale. However, fluxes that look stationary on the
  entire metabolic network may not look stationary on an individual
  pathway model (because there may be side reactions that fix the mass
  balances, but do not appear in the model).
\item \textbf{Inactive reactions.} Inactive reactions (with a reaction
  flux $v_l=0$) do not entail any energetic constraints or enzyme
  costs and can therefore by ignored. In constrast, if a driving force
  is known to vanish, this should be used as a constraint on the
  metabolite levels.
\item \textbf{Non-enzymatic reactions.} If non-enzymatic reactions
  (typically with mass-action rate laws) are included in the
  optimality problem, they contribute to the energetic constraints,
  but not to the enzyme cost function.
\item \textbf{Spatial structure.} ECM applies to compartment models,
  in which metabolites can have different concentrations in different
  compartments. Other spatial effects, such as substrate channeling,
  are ignored. To account for substrate channeling, the increased
  substrate concentration at enzymes' catalytic sites could be
  modelled, approximately, by using effective rate constants.

\item \textbf{Constraints on the sum of metabolite levels or sums of
  enzyme levels} In addition to our bounds on individual metabolite
  levels, we can also set a bound on the total (non-logarithmic)
  metabolite concentration in the cell \cite{tnah:13}. The resulting
  ECM problem remains convex (see section
  \ref{sec:SIproblemremainsconvex}).  Alternatively, one could
  penalize large total concentrations by subtracting a concave
  function $R(\sum_i c_i)$ from the enzyme cost; in log-concentration
  space, this would yield a convex cost term. The same holds for 
constraints on the sums of some enzyme levels. 

\item \textbf{Enzyme demand and cost per flux.} Under the assumptions
  made (linear enzyme cost function; fixed external metabolite
  levels in kinetic model), enzyme demand and cost scale
  proportionally with the pathway fluxes. This holds for all EMC
  functions, but not for the MDF score \cite{nbfr:14}, which remains
  constant under a proportional scaling of pathway fluxes\footnote{The
    linear scaling of enzyme levels holds only if all fluxes are
    scaled proportionally. In branched pathways, a non-proportional
    scaling would change the flux branching ratios. The resulting
    changes in the optimal concentrations at the branch points would
    change the enzyme cost in complicated ways.}.
\item \textbf{Constraints on concentrations} Constraints on metabolite
  levels can be justified as follows. Upper bounds may reflect the
  fact that space in cells is limited, and physiological concentration
  ranges for certain compounds may be known from experience.  Some
  metabolites may have high or low levels for specific reasons: for
  instance, yeast cells (and also Dunaliella algae) use high glycerol
  concentrations to balance high external salinity; other metabolites
  may be toxic in higher concentrations.  Lower bounds are important
  when using the EMC2 functions, because these functions favor low product
  levels while lacking the saturation factor,  which prevents very
  low substrate levels.
\item \textbf{Values and uncertainties of rate constants} Different
  EMC functions require different types of rate constants for their
  calculation. All functions require forward catalytic constants $k^{\rm
    c}_{+}$; EMC2 and higher functions require equilibrium constants,
  EMC3 or higher functions require Michaelis-Menten constants. Many rate
  constants are unknown and need to be estimated.  To determine the
  rate constants for our calculations, we collect known kinetic data
  and convert them into complete, consistent parameter sets by
  parameter balancing \cite{lskl:10,slsk:13}.  Parameter balancing
  yields a joint distribution of all model parameters describing their
  individual uncertainties and correlations. A consistent, most likely
  set of parameters follows from the median values of the marginal
  distributions.  By sampling parameters from their joint
  distribution, we can obtain an ensemble of model variants with
  different consistent parameter sets. By running ECM for many such
  model variants, we can study how uncertainties in the rate constants
  affect the end result.
\item \textbf{Sampling of nearly optimal solutions} Deviations from
  the optimum metabolite profile lead to a fitness loss. For small
  Gaussian random deviations, the average loss by can be computed by
  $\trace(\cov(\sv)\inv)\,\hessian$,  where $\sv = \ln \cv$, $\cov(\sv)$
  is the covariance matrix of metabolite log-concentrations, and
  $\hessian$ is the Hessian matrix of the (non-logarithmic) cost function
  $\enzymemetcost(\sv)$ in the optimum point. To estimate the
  metabolite covariance matrix, we make an assumption inspired by
  statistical thermodynamics: we postulate that the relative
  probabilities of two metabolite vectors is given by
  $\frac{\mbox{prob}(\sv_1)}{\mbox{prob}(\sv_2)} =
  e^{-(\enzymemetcost(\sv_1)-\enzymemetcost(\sv_2))/\enzymemetcost_0}$,
  where $\enzymemetcost_0$ defines a scale of tolerable fitness
  deviations.  The metabolite covariance matrix follows directly as
  $\Cmat = \enzymemetcost_0\,\hessian\inv$. For the
  energy-based EMC2s function, the Hessian matrix in the optimum point
  can be computed analytically at least.
\item \textbf{Tolerable deviations of metabolite and enzyme levels}
  Tolerance ranges of metabolite or enzyme levels can be obtained by
  minimizing or maximizing these levels under the constraints used in
  ECM, plus the constraint that the cost must remain below some
  predefined upper bound. To speed up the calculation, an
  approximation based on the Hessian matrix of the logarithmic cost
  function can be used (see SI \ref{sec:SIHessianApproximation}).
  Alternatively, we could sample metabolite profiles with enzyme costs
  close to the optimum. Using the Metropolis-Hastings algorithm, we
  could obtain an ensemble of metabolite and enzyme profiles, where
  less costly states appear with higher probabilities (see SI
  \ref{sec:SIoptimization}).
\item \textbf{Lumped reactions} To simplify models, pathways can be
  lumped into single reactions (for parameter choices, see SI
  \ref{sec:SIlumped}).  The lumping of reactions resembles the way in
  which ECM, altogether, attributes enzyme costs or specific
  activities to entire pathways.
\end{itemize}

\begin{table}[t!]
\begin{center}
\begin{tabular}{llll}
\textbf{Data type} & \textbf{Unit} & \textbf{Provenance} &
\textbf{Reference} \\ 
\hline 
Reaction Gibbs energies & kJ/mol & Component contribution & \cite{nhmf:13}\\ 
Catalytic constants ($\kcat$) & 1/s & BRENDA & \cite{sceg:04}\\ 
Michaelis-Mention constants ($\km$) & mM & BRENDA & \cite{sceg:04}\\ 
Fluxes & mM/s & van Rijsewijk \emph{et al.} $^a$ & \cite{hann:11} \\ 
Metabolite levels$^b$ & mM & Gerosa \emph{et al.} $^a$ & \cite{gerc:15} \\
Enzyme levels$^b$ & mM & Schmidt \emph{et al.} $^a$ &
\cite{skva:15} \\
Protein lengths & AAs & \url{www.uniprot.org} & \\ \hline
\end{tabular}
\end{center}
\caption{Data used in construction of \emph{E.~coli} model by
  ECM. Processed data can be found at www.metabolic-economics.de/enzyme-cost-minimization/.
  Units refer to preprocessed data.
  $^a$ Specific data corresponding to wild-type \emph{E.~coli} BW25113, grown in batch
       culture on minimal media (M9) and glucose.
  $^b$ Data used for validation only. }
\label{tab:data}
\end{table}

\section{Model of central metabolism in \emph{E.~coli}}
\label{sec:SIecolimodel}

Our central metabolism model was built from a list of chemical
reactions as given by KEGG; compounds and reactions are denoted by
KEGG identifiers, and genes are denoted mostly by their common names
in \emph{E.~coli}. All data sources are listed in Table
\ref{tab:data}, and models and data are provided at
\url{www.metabolic-economics.de/enzyme-cost-minimization/}.  The
enzyme cost function accounts for protein composition, giving
different costs to different amino acids. However, models with equal
cost weights for all proteins, or with size-dependent protein costs
yielded similar results (results are provided on the
website). Figure
\ref{fig:EcoliPredictionsSI} shows the correlations between predicted
and measured metabolite levels, corresponding to the enzyme
predictions in Figure \ref{fig:EcoliPredictions}. More details can be
found on \url{www.metabolic-economics.de/enzyme-cost-minimization/}.

\begin{figure}[t!]
  \begin{center}
	 \includegraphics[width=0.4\textwidth]{ps-files/ecoli_model_flux_from_haverkorn_fullnames-eps-converted-to.pdf} 
         \hspace{3mm}
	 \includegraphics[width=0.55\textwidth]{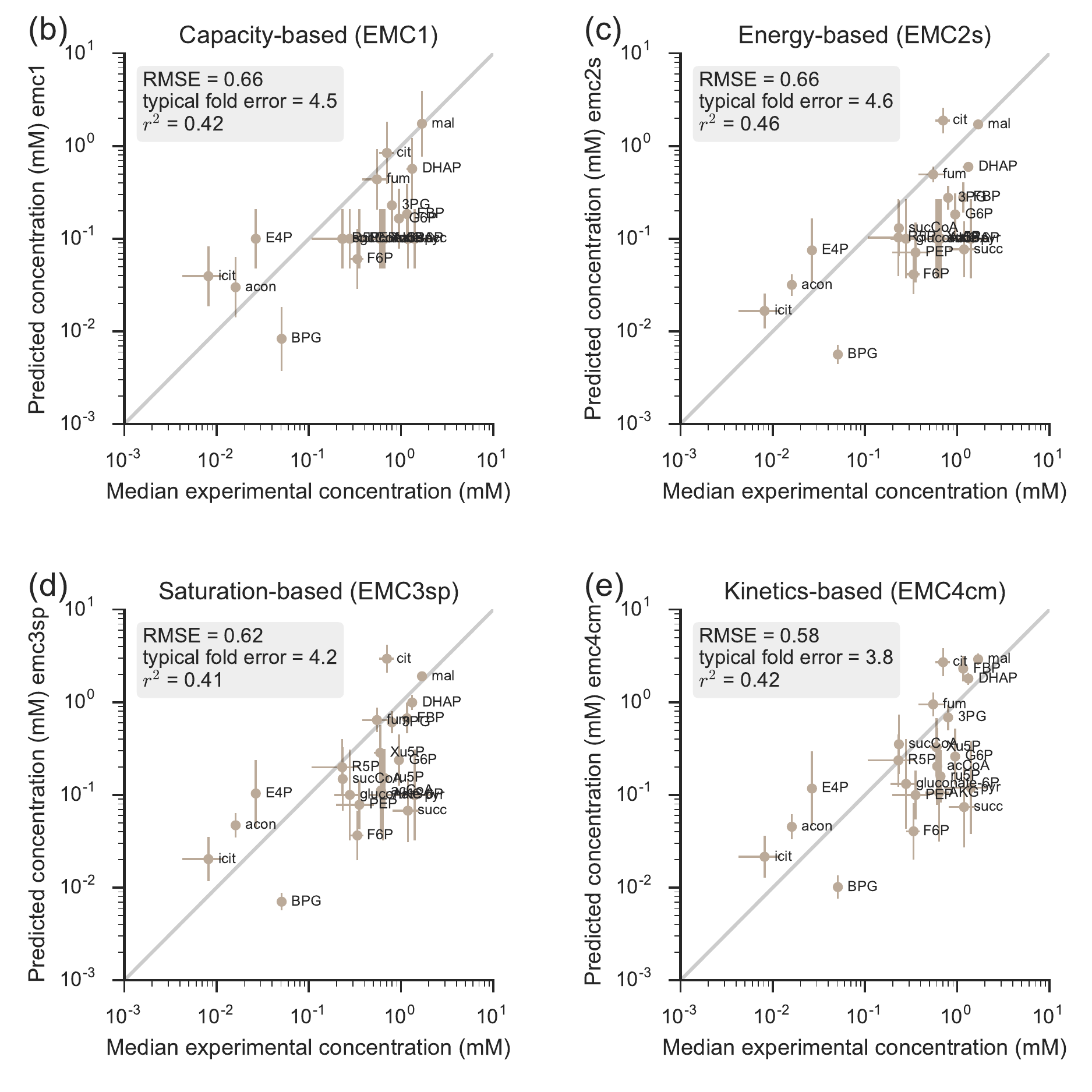}
    \caption{Metabolite levels predicted by enzymatic metabolic cost minimization.  As in Figure \ref{fig:EcoliPredictions} for
      enzyme predictions, vertical bars denote tolerance
      ranges. Horizontal lines represent uncertainties in measured
      data.  Predictions are based on fluxes from \cite{hann:11},
      $\kcatplus$ and $\km$ values from BRENDA \cite{sceg:04}, and
      validated with metabolite concentrations from \cite{gerc:15}.}
    \label{fig:EcoliPredictionsSI}
  \end{center}
\end{figure}

\section{Proofs and derivations}

\subsection{Lower bounds on driving forces, Eq.~(\ref{eq:forceThreshold})}
\label{sec:ProofforceThreshold}

Assuming that the cost for an individual enzyme  cannot exceed a
certain limit $q_l < q^{\rm max}$, we obtain Eq.~(\ref{eq:forceThreshold}) in
section \ref{sec:SItighterConstraints} as a lower bound on the driving
forces: $\Theta_l > \frac{\hel\,v_{l}}{\kcatplusl\, q^{\rm max}}$.  Noting
that $\Theta = -\Delta_{\rm r} G'/RT$, we get that the reaction Gibbs
energies are bounded by
\begin{eqnarray}
  \Delta_{\rm r} G'_{l}  &<& - RT \,\frac{\hel\,v_{l}}{q^{\rm max}\,\kcatplusl}.
\end{eqnarray}

\subsection{Parameters for lumped reactions}
\label{sec:SIlumped}

A lumped reaction describes a series of reactions as if they were
catalyzed by a single enzyme. The kinetic parameters should agree with
the original catalytic constants $\kcatplusl$, enzyme levels $\el$,
and enzyme cost weights $\hel$ of the individual reactions and yield
the right pathway flux $v=\enzyme\,\kcatplus$ and the right enzyme
cost $\enzymemetcost= h \,\enzyme$, but this still leaves some freedom
of choice.  On the one hand, we may assume that our hypothetical
lumped enzyme resembles a normal  enzyme in its kinetics and
concentration. This can be realized in different ways:
\begin{itemize}
\item Set $\kcat = \langle \kcatplusl \rangle_{\rm geom}$.  To satisfy
  $v = v_l = \kcatplusl\, \el = \langle \kcatplusl\, \el\rangle_{\rm geom} =
  \langle \kcat\rangle_{\rm geom} \langle \el\rangle_{\rm geom}$, we must set
  $\enzyme=\langle \el\rangle_{\rm geom}$.
\item Set $\kcat = \langle \kcatplusl \rangle_{\rm arith}$. To satisfy
 $v = \langle \kcatplusl\, \el\rangle_{\rm arith} = \langle
  \kcat\rangle_{\rm arith}\, \langle \frac{\kcatplusl}{\kcat} \el
  \rangle_{\rm arith}$, we must set $\enzyme = \langle
  \frac{\kcatplusl}{\kcat} \el \rangle_{\rm arith}$.
\item Set $\enzyme = \langle \enzyme_l \rangle_{\rm arith}$.  Again,
  we must set $\kcat = \langle \frac{\el}{\enzyme}
  \kcatplusl\rangle_{\rm arith}$.
\end{itemize}
In all three cases, the identity $h\, \enzyme = \sum_l \hel\, \el$
leads to the formula $h = \sum_l \hel\, \frac{\el}{\enzyme}$ for
specific cost. Since a lumped enzyme represents several real enzymes,
it will appear more costly or ``larger''. On the other hand, we can
assume that the concentration of the lumped enzyme is given by the sum
of original enzyme concentrations; this implies smaller effective
$\kcat$ values. To obtain the parameters, we can use the previous
formulae and replace $\enzyme \rightarrow n\,\enzyme$, $h \rightarrow
h/n$, and $\kcat \rightarrow \kcat/n$.

\subsection{Tolerance intervals around the minimum point of a strictly convex function}
\label{sec:SIHessianApproximation}

Consider a strictly convex function $f(\sv)$ with a global minimum
$\sv^*$.  Due to strict convexity, the Hessian $\Hm(\sv)$ is a
positive definite matrix.  To calculate tolerance intervals around the
minimum point, we choose the tolerance threshold $\tau$ (e.g. 1\% of
the minimum value) and define the tolerance subspace:
\begin{equation}
	\Stol \equiv \{ \sv~|~f(\sv) < f(\sv^*) + \tau\}.
\end{equation}
To get an explicit formula for $\Stol$, we first approximate $f$
around its minimum point by a Taylor expansion:
\begin{equation}
	f(\sv^* + \xiv) = f(\sv^*)~+~\nabla f(\sv^*) ^\top \xiv~+~\frac{1}{2} \cdot \xiv^\top \Hm(\sv^*) \xiv~+~\ldots
\end{equation}
In the minimal point, $\nabla f(\sv^*) = 0$ holds and we drop the
extra terms in the Taylor expansion to get
\begin{equation}
	f(\sv^* + \xiv) = f(\sv^*)~+~\frac{1}{2} \cdot \xiv^\top \Hm \xiv
\end{equation}
(for convenience, we use $\Hm$ to refer to the Hessian at the optimum). Therefore, the tolerance region can be approximated by:
\begin{equation}
	\Stol \approx E_{\rm tol} \equiv \{ \sv^* + \xiv~|~2\tau > \xiv^\top \Hm \xiv\}
\end{equation}

\begin{lemma}
If we define the ellipsoid $E \equiv \{\Hm^{-\mytextonehalf} \yv~|~ \yv^\top \yv < 1 \}$, then \[E_{\rm tol} = \sv^* + \sqrt{2 \tau}\cdot E\]
\end{lemma}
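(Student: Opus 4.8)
The plan is to exhibit an explicit linear change of variables that carries one set description onto the other, so that the claimed equality collapses to a one-line algebraic identity. First I would isolate the only structural fact needed from the hypotheses: since $f$ is strictly convex, the Hessian $\Hm$ at the minimum point $\sv^*$ is symmetric positive definite, and hence admits a unique symmetric positive-definite square root $\Hm^{\mytextonehalf}$, which is invertible with (symmetric, positive-definite) inverse $\Hm^{-\mytextonehalf}$ satisfying $\Hm^{\mytextonehalf}\Hm^{\mytextonehalf}=\Hm$ and $(\Hm^{\mytextonehalf})^\top = \Hm^{\mytextonehalf}$. Everything after this is bookkeeping.

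Next I would rewrite both sides in terms of the displacement $\xiv = \sv-\sv^*$. By definition $E_{\rm tol} = \{\sv^* + \xiv \mid \xiv^\top \Hm\, \xiv < 2\tau\}$, whereas
\[
\sv^* + \sqrt{2\tau}\cdot E = \{\sv^* + \sqrt{2\tau}\,\Hm^{-\mytextonehalf}\yv \mid \yv^\top \yv < 1\}.
\]
It therefore suffices to prove that the two displacement sets $\{\xiv \mid \xiv^\top\Hm\,\xiv < 2\tau\}$ and $\{\sqrt{2\tau}\,\Hm^{-\mytextonehalf}\yv \mid \yv^\top\yv < 1\}$ coincide. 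To do this I would introduce the map $\yv = \tfrac{1}{\sqrt{2\tau}}\,\Hm^{\mytextonehalf}\xiv$, with inverse $\xiv = \sqrt{2\tau}\,\Hm^{-\mytextonehalf}\yv$ (well defined because $\Hm^{\mytextonehalf}$ is invertible and $\tau>0$), and compute the constraint in the new variable:
\[
\yv^\top \yv = \frac{1}{2\tau}\,\xiv^\top \Hm^{\mytextonehalf}\Hm^{\mytextonehalf}\,\xiv = \frac{1}{2\tau}\,\xiv^\top \Hm\, \xiv,
\]
using symmetry of $\Hm^{\mytextonehalf}$ and $\Hm^{\mytextonehalf}\Hm^{\mytextonehalf}=\Hm$. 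Thus $\yv^\top\yv < 1$ holds if and only if $\xiv^\top\Hm\,\xiv < 2\tau$, so the bijection maps one constraint region exactly onto the other. This proves the two displacement sets agree, and translating back by $\sv^*$ gives $E_{\rm tol} = \sv^* + \sqrt{2\tau}\cdot E$.

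There is no real obstacle here; the only point requiring care is the existence and uniqueness of the symmetric positive-definite square root $\Hm^{\mytextonehalf}$, which is exactly where strict convexity (positive definiteness of $\Hm$) is used — a general fact for symmetric positive-definite matrices via diagonalization $\Hm = Q\Lambda Q^\top$ and $\Hm^{\mytextonehalf}=Q\Lambda^{\mytextonehalf}Q^\top$. One caveat worth stating explicitly is that this identity describes the \emph{approximate} tolerance region $E_{\rm tol}$ obtained from the second-order Taylor expansion, not the exact region $\Stol$; the two agree only to leading order in $\tau$, which is why the claim is phrased for $E_{\rm tol}$.
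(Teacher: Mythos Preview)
Your proof is correct and follows essentially the same approach as the paper: both use the symmetric positive-definite square root $\Hm^{\mytextonehalf}$ and the change of variables $\xiv = \sqrt{2\tau}\,\Hm^{-\mytextonehalf}\yv$ to reduce the quadratic form $\xiv^\top\Hm\,\xiv$ to $2\tau\,\yv^\top\yv$. Your version is in fact slightly more thorough, since you exhibit the bijection explicitly and verify both inclusions, whereas the paper computes one direction and dismisses the reverse as trivial.
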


\begin{proof}
Since $\Hm$ is symmetric and positive definite, it is invertible and thus $\Hm^{-\mytextonehalf}$ is a unique symmetric matrix. Any $\sv \in \sv^* + \sqrt{2 \tau}\cdot E$ can be written as $\sv = \sv^* + \xiv$, where $\xiv = \sqrt{2 \tau} \cdot \Hm^{-\mytextonehalf} \yv$ and $\yv^\top \yv < 1$, therefore
\begin{equation}
	\xiv^\top \Hm \xiv = \sqrt{2 \tau} \cdot \yv^\top \Hm^{-\mytextonehalf}  \Hm \Hm^{-\mytextonehalf} \yv \cdot \sqrt{2 \tau} = 2 \tau \cdot \yv^\top \yv < 2 \tau.
\end{equation}
The reverse direction follows trivially. \qedhere
\end{proof}

\begin{corollary}
An ellipsoid is not always a convenient shape for describing the
tolerance intervals because there is dependence between the different
dimensions. For some application, it is sufficient to consider the
bounding box of $E$, which is given by $B \equiv \{ {\bf
D} \yv~|~ \yv \in [-1,1]^n \}$, where ${\bf D}$ is a matrix containing
only the diagonal values in $\Hm^{-\mytextonehalf}$ (i.e. ${\bf D}_{ii}
= \sqrt{(\Hm^{-1})_{ii}}$). Then we can approximate $\Stol$ by
\begin{equation}
 \Stol \approx \sv^* + \sqrt{2 \tau}\cdot B
\end{equation}
Therefore, for a single dimension $i$ the tolerance interval will be described by
\begin{equation}
 x^*_i \pm \sqrt{2 \tau (\Hm^{-1})_{ii}}
\end{equation}
\end{corollary}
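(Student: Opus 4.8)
The plan is to reduce the statement to a coordinate-wise support-function computation on the ellipsoid $E$, and then use the preceding lemma to transfer the result back to $\Stol$. First I would recall that the lemma already supplies the quadratic approximation $\Stol \approx E_{\rm tol} = \sv^* + \sqrt{2\tau}\,E$ with $E = \{\Hm^{-\mytextonehalf}\yv \mid \yv^\top\yv < 1\}$. An axis-aligned bounding box is determined independently in each coordinate by the largest and smallest value that coordinate attains on the set, so it suffices to compute, for each index $i$, the quantity $\max_{\yv^\top\yv \le 1}\, \mathbf{e}_i^\top \Hm^{-\mytextonehalf}\yv$ together with its negative, i.e.\ the support function of $E$ in the directions $\pm\mathbf{e}_i$. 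Because translation by $\sv^*$ and scaling by the positive factor $\sqrt{2\tau}$ both commute with taking a bounding box, computing this for $E$ immediately gives the box for $E_{\rm tol}$.

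The key step is this maximization. Writing $\mathbf{e}_i^\top \Hm^{-\mytextonehalf}\yv$ as the inner product of the $i$-th row of $\Hm^{-\mytextonehalf}$ with $\yv$ and applying the Cauchy--Schwarz inequality, the maximum over $\|\yv\|\le 1$ equals the Euclidean norm of that row, attained at the $\yv$ proportional to the row vector. Since strict convexity at the minimum makes $\Hm$ positive definite, the symmetric square root $\Hm^{-\mytextonehalf}$ exists and is itself symmetric, so the squared norm of its $i$-th row is the $i$-th diagonal entry of $\Hm^{-\mytextonehalf}\Hm^{-\mytextonehalf} = \Hm^{-1}$, namely $(\Hm^{-1})_{ii}$. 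Hence the half-extent of $E$ along coordinate $i$ is exactly $\sqrt{(\Hm^{-1})_{ii}} = \mathbf{D}_{ii}$, which is precisely the half-width defining the box $B = \{\mathbf{D}\yv \mid \yv\in[-1,1]^n\}$.

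Putting the pieces together: the bounding box of $E$ is $B$, so the bounding box of $E_{\rm tol}=\sv^* + \sqrt{2\tau}\,E$ is $\sv^* + \sqrt{2\tau}\,B$, giving $\Stol \approx \sv^* + \sqrt{2\tau}\,B$. Reading off coordinate $i$, its half-width is $\sqrt{2\tau}\,\mathbf{D}_{ii} = \sqrt{2\tau\,(\Hm^{-1})_{ii}}$, so the one-dimensional tolerance interval is $x_i^* \pm \sqrt{2\tau\,(\Hm^{-1})_{ii}}$, as claimed.

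The main obstacle — really the only non-routine point — is resisting the tempting misreading that the half-width comes from the diagonal of $\Hm^{-\mytextonehalf}$ itself. What actually controls the projection onto a coordinate axis is the norm of a \emph{row} of $\Hm^{-\mytextonehalf}$, not its diagonal entry; these agree with $\sqrt{(\Hm^{-1})_{ii}}$ only after invoking symmetry of $\Hm^{-\mytextonehalf}$ and the Cauchy--Schwarz step above. I would flag this explicitly so that the parenthetical definition $\mathbf{D}_{ii} = \sqrt{(\Hm^{-1})_{ii}}$ in the statement is seen to be the correct interpretation, and note that $B$ is an \emph{outer} box approximation (the ellipsoid is strictly contained in it except at the tangency points $\pm\mathbf{e}_i$), consistent with the $\approx$ in the statement.
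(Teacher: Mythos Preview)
Your proof is correct and complete. The paper itself states this corollary without proof, treating it as an immediate consequence of the preceding lemma, so there is no ``paper's proof'' to compare against; your support-function / Cauchy--Schwarz argument is exactly the standard justification one would supply, and it is cleanly executed.

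Your flagging of the wording issue is well taken and worth keeping: the phrase ``diagonal values in $\Hm^{-\mytextonehalf}$'' is misleading, since in general $(\Hm^{-1/2})_{ii} \neq \sqrt{(\Hm^{-1})_{ii}}$ unless $\Hm$ is diagonal. The parenthetical clarification $\mathbf{D}_{ii} = \sqrt{(\Hm^{-1})_{ii}}$ is the operative definition, and your computation confirms that this is the quantity that actually gives the bounding box half-widths.
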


\subsection{Enzyme costs reflects metabolic control
 (proposition \ref{prop:lincomb})}
\label{sec:ProofEnzymeControl}

Consider the ECM problem
\begin{eqnarray*}
 \mbox{Minimize}\quad \hminus(\enzymev)\qquad\mbox{subject to}\quad\jv_{\rm stat}(\enzymev) = \vv_{\rm stat}, \,\sv_{\rm bound}(\enzymev)=\cv_{\rm bound},
\end{eqnarray*}
where ``stat'' refers to independent stationary fluxes (with
running index $a$) and ``bound'' refers to metabolites that hit a
bound in the ECM solution considered (index $b$). With Lagrange
multipliers $\lambda_a$ and $\mu_b$ for the two sorts of constraints,
the optimality condition reads
\begin{eqnarray*}
 0 = \frac{\partial h}{\partial \el} + \sum_{a \in \rm stat} \lambda_{a} \frac{\partial \,j_a}{\partial \el}  + \sum_{b \in \rm bnd} \mu_{b} \,\frac{\partial s_b}{\partial \el}.
\end{eqnarray*}
After defining the enzyme cost slopes $\hel' = \frac{\partial
  h}{\partial \el}$ and multiplying the equation by $\el$, we obtain
\begin{eqnarray*}
 0 &=& \hel'\,\el + \sum_{a \in \rm stat} \lambda_{a} \frac{\partial \,j_a}{\partial \el}\el  + \sum_{b \in \rm bnd} \mu_{b} \,\frac{\partial s_b}{\partial \el}\el.
\end{eqnarray*}
We can now rewrite this in terms of control coefficients. The control
coefficients between enzymes and independent stationary fluxes are
defined by ${\mathcal C}^{j_a}_{l} = \frac{\el}{j_a} \frac{\partial
  \,j_a}{\partial \el} $, and those between enzymes and constrained
metabolites are defined by ${\mathcal C}^{s_b}_{l}= \frac{\el}{s_b}
\frac{\partial s_b}{\partial \el}$. Inserting this,
we obtain
\begin{eqnarray*}
0  &=& \hel'\,\el + \sum_{a \in \rm stat} -\alpha_{a} {\mathcal C}^{j_a}_{l} + \sum_{b \in \rm bnd} -\beta_{b} \,{\mathcal C}^{s_b}_{l},
\end{eqnarray*}
where we have defined $\alpha_{a} = -\lambda_{a}\,j_a$
and $\beta_{b} = -\mu_{b}\,s_b$, and thus
\begin{eqnarray*}
 \hel'\,\el &=& \sum_{a \in \rm stat} \alpha_{a} {\mathcal C}^{j_a}_{l} + \sum_{b \in \rm bnd} \beta_{b} \,{\mathcal C}^{s_b}_{l}.
\end{eqnarray*}
These relations hold for general non-linear cost function. In the
case of linear cost functions $\hminus(\enzymev) = \sum_l  \hel\,\enzyme_l$
(as usually assumed in ECM), the enzyme cost slopes $\hel'$ are directly given
by the cost weights $\hel'$.

\clearpage

\section{Mathematical symbols} 

  \begin{table}[h!]
\begin{center}
  \begin{tabular}{lll}
    \cellcolor{brown}    \textbf{Rate laws}&\cellcolor{brown} \textbf{Symbol} &\cellcolor{brown}\textbf{Units}\\ \hline \\[-2mm]
    Flux & $v_{l}$& mM/s\\
    Metabolite level              & $c_{i}$          & mM     \\
    Enzyme level & $\el$ & mM\\
    Rate law  & $v_{l}(\el, \cv) = \el \cdot \ratelaw_{l}(\cv)$ & mM/s \\
    Catalytic rate & $\ratelaw_{l} = v_{l}/\el$& 1/s\\
    Gibbs energy of formation (standard chem.~pot.) & $G'^{\circ}_{i}$ & kJ/mol\\
    Reaction Gibbs energy & $\Delta_{\rm r} G'_{l} = \Delta_{\rm r} {G'_l}^\circ + RT \sum_{i} n_{il}  \,\ln c_{i}$ & kJ/mol\\
    Thermodynamic driving force & $\Theta_l = - \Delta_{\rm r} G'_{l}/RT$ & unitless \\[2mm]
    \cellcolor{brown}    \textbf{Kinetic models}
    &\cellcolor{brown}  &\cellcolor{brown}\\ \hline \\[-2mm]
    Forward/backward catalytic constant  & $\kcatplus, \kcatminus$      & 1/s    \\
    Michaelis-Menten constant     & $\kmli$ & mM     \\
    Hill-like coefficient & $\gamma_l$ & unitless \\
    Molecularity for substrate (S) or product (P)   & $m^{\rm S}_{li}$,  $m^{\rm P}_{li}$   & unitless      \\
    Regulation coefficient (activator A or inhibitor I) & $m^{\rm A}_{li}, m^{\rm I}_{li}$   & unitless \\
    Scaled reactant elasticity & $\Esc_{li} = \frac{c_i}{v_l}\frac{\partial v_l}{\partial c_I}$ & unitless \\
    Scaled flux control coefficient & ${\mathcal C}^{j_a}_{l}$ & unitless\\
    Scaled concentration control coefficient &   ${\mathcal C}^{s_b}_{l}$ & unitless \\[2mm]
    \cellcolor{brown}        \textbf{Enzyme costs}
    &\cellcolor{brown}  &\cellcolor{brown}\\ \hline \\[-2mm]
    Enzyme cost   & $\hminus_l(\el) = \hel\,\el$ & D \\
    Enzyme cost weight  & $\hel$ & D/mM \\
    Protein mass & $m_l$ & Da \\
    Enzyme-based metabolic cost & $\enzymemetcost(\sv) = \sum_{l} \enzymemetcost_l(\sv) = \sum_{l} \hel \,\el(\sv,\vv)$ & D \\    
    Hessian matrix of enzyme-based metabolic cost &  $\hessian$ & D \\
    Flux-specific cost & $\rrl = \enzymemetcostl/v_l = \hel/\ratelaw_l$ & D/(mM/s)\\
    Baseline flux cost & $\rrlmin$ & D/(mM/s) \\[2mm]
    \cellcolor{brown} \textbf{Metabolic pathways}
    &\cellcolor{brown}  &\cellcolor{brown}\\ \hline \\[-2mm]
    Pathway flux (flux in representative reaction)& $\vPW$ & mM/s \\ 
    Scaled flux     & $\vscaled_{l} = v_{l}/\vPW$ & unitless \\ 
    Flux-specific cost & $\rrpw = \sum_{l  } \hel\,\el/\vPW = \sum_{l} \vscaled_{l}\,\rrl$ & D\,s/mM   \\
 \end{tabular}
\end{center}
\caption{Mathematical symbols used in ECM.  Darwin (D) is a
  hypothetical fitness unit replacing the possible fitness units in
  different models.  Reaction orientations are defined in such a way
  that fluxes are positive. Fluxes are given in units of concentration
  per time, but could also be given as amounts per time (e.g., mol/s);
  the latter choice is more practical for models with transport
  reactions.}
\label{tab:symbols}
\end{table}

\end{document}